\documentclass[aps, prx, reprint, superscriptaddress, longbibliography]{revtex4-2}

\usepackage{xcolor}
\usepackage[normalem]{ulem}
\usepackage{soul}

\usepackage{amsmath,amssymb,amsfonts,amsthm,dsfont}

\usepackage{tikz}
\usetikzlibrary{positioning,arrows.meta,calc}

\usepackage{pgfplots}
\pgfplotsset{compat=1.18}

\usepackage{verbatim}

\usepackage{hyperref}

\newtheorem{theorem}{Theorem}
\newtheorem{definition}[theorem]{Definition}
\newtheorem{lemma}[theorem]{Lemma}
\newtheorem{proposition}[theorem]{Proposition}
\newtheorem{corollary}[theorem]{Corollary}

\begin{document}


\def \hattheta{\hat{\theta}}
\def \hata{\hat{\theta}}
\def \post{\mathrm{post}}

\def \Cbb{\mathbb{C}}
\def \Ebb{\mathbb{E}}
\def \Hbb{\mathbb{H}}
\def \Jbb{\mathbb{J}}
\def \Lbb{\mathbb{L}}
\def \Nbb{\mathbb{N}}
\def \Rbb{\mathbb{R}}
\def \Sbb{\mathbb{S}}

\def \Ccal{\mathcal{C}}
\def \Ecal{\mathcal{E}}
\def \Hcal{\mathcal{H}}
\def \Lcal{\mathcal{L}}

\def\csf{\mathsf{c}}
\def\ssf{\mathsf{s}}
\def\Bsf{\mathsf{B}}
\def\Csf{\mathsf{C}}
\def\Fsf{\mathsf{F}}
\def\Hsf{\mathsf{H}}
\def\Isf{\mathsf{I}}
\def\Jsf{\mathsf{J}}
\def\Ksf{\mathsf{K}}
\def\Msf{\mathsf{M}}
\def\Rsf{\mathsf{R}}
\def\Vsf{\mathsf{V}}
\def\Wsf{\mathsf{W}}
\def\Zsf{\mathsf{Z}}

\def \Brm{\mathrm{B}}
\def \cfBLD{{\cal C}^f_\mathrm{B}}
\def \clamBLD{{\cal C}^{(\la)}_\mathrm{B}}
\def \cvT{{\cal C}_\mathrm{vT}}
\def \cfvT{{\cal C}^f_\mathrm{vT}}
\def \cNH{{\cal C}_\mathrm{BNH}}
\def \cH{{\cal C}_\mathrm{H}}
\def \cN{{\cal C}_\mathrm{N}}
\def \JvT{\mathsf{J}_\mathrm{vT}}
\def \Vmin{\mathsf{V}_\mathrm{min}}
\def \VsfB{\mathsf{V}_{\mathrm{B}}}
\def \rhoB{\rho_{\mathrm{B}}}

\newcommand{\Tr}{\mathrm{Tr}}
\newcommand{\TrAbs}{\mathrm{TrAbs}}
\newcommand{\trAbs}{\mathrm{trAbs}}
\newcommand{\tr}{\mathrm{tr}}
\newcommand{\Ttr}{\mathds{T}\mathrm{r}}
\newcommand{\overl}{\overline}
\newcommand{\inv}{{-1}}
\newcommand{\Hil}{\mathcal{H}}
\newcommand{\ta}{\theta}
\newcommand{\Real}{\mathrm{Re}\,}
\newcommand{\Imag}{\mathrm{Im}\,}
\newcommand{\del}{\partial}

\newcommand{\R}{\mathds{R}}
\newcommand{\C}{\mathds{C}}
\newcommand{\F}{\mathcal{F}}
\newcommand{\Xset}{\mathfrak{X}}
\newcommand{\X}{\mathcal{X}}
\newcommand{\Li}{\mathcal{L}}
\newcommand{\FS}{\F(\Sp)}
\newcommand{\Sds}{\mathds{S}}
\newcommand{\Xds}{\mathds{X}}
\newcommand{\Ads}{\mathds{A}}
\newcommand{\Lds}{\mathds{L}}


\newcommand{\udef}{\underline{Def}}
\newcommand{\lld}{\lambda LD}
\newcommand{\M}{\mathcal{M}}
\newcommand{\D}{\mathcal{D}}
\newcommand{\Til}{\Tilde}
\newcommand{\E}{\mathcal{E}}

\newcommand{\ket}[1]{|#1 \rangle}
\newcommand{\bra}[1]{\langle #1|}
\newcommand{\braket}[2]{\langle #1|#2 \rangle}
\newcommand{\ketbra}[2]{\ket{#1}\bra{#2}}

\newcommand{\inner}[2]{\langle #1,#2 \rangle}

\newcommand{\plam}{\frac{1+\lambda}{2}}
\newcommand{\mlam}{\frac{1-\lambda}{2}}
\def \hPi{\hat{\Pi}}
\def \ot{\otimes}
\def \Xcal{{\cal X}}
\def \la{\lambda}
\def \sq{\sqrt}
\def \ep{\epsilon}
\def \half{\frac{1}{2}}

\newcommand{\AffUEC}{
 \textit{Graduate School of Informatics and Engineering, The University of Electro-Communications,}\\
Tokyo, 182-8585 Japan}

\title{Bayesian Monotone Metrics for Multiparameter Quantum Estimation}

\author{Jianchao Zhang}
\email[]{jianchaozhang01@gmail.com}
\affiliation{\AffUEC}
\author{Koichi Yamagata}
\email[]{yamagata@se.kanazawa-u.ac.jp}
\affiliation{Institute of Science and Engineering, Kanazawa University \\
Kanazawa, Ishikawa, 920-1192, Japan}
\author{Jun Suzuki}
\email[]{junsuzuki@uec.ac.jp}
\affiliation{\AffUEC}

\date{\today}

\begin{abstract}
Bayesian quantum estimation offers a finite-data framework for quantum sensing and metrology, yet a unified geometric formulation for multiparameter Bayes risk has been lacking. 
We introduce \emph{Bayesian monotone metrics} by evaluating Petz monotone metrics on the prior-averaged state, providing a Bayesian extension of the full class of statistically meaningful (CPTP) quantum metrics. 
This framework yields Bayesian quantities, including quantum posterior-mean operators and a quantum Bayesian dual Fisher-information matrix, and it leads to a systematic family of computable lower bounds on the Bayes risk.
The resulting bounds naturally incorporate multiparameter measurement incompatibility and, for every monotone metric in the family, we prove a universal dominance over the corresponding quantum van Trees (Bayesian Cram\'er--Rao) bound.  
Moreover, we show that optimizing over all operator monotone functions collapses to a one-parameter subfamily, turning the tightest bound into a tractable optimization with a clear geometric interpretation. 
In representative examples, the optimized bounds are strictly tighter than the Bayesian SLD and RLD bounds.
Our results establish Bayesian monotone metrics as a unifying information-geometric perspective on Bayesian quantum estimation, enabling systematic and computable performance limits in multiparameter settings.
\end{abstract}


\maketitle

\section{Introduction}
Estimating unknown parameters encoded in quantum states or processes is a central task in quantum sensing and quantum metrology \cite{caves1981quantum,giovannetti2004quantum,demkowicz2012elusive,giovannetti2011,degen17,szczykulska2017reaching,proctor2018multiparameter,pirandola2019fundamental}. 
Geometric viewpoints that relate local distinguishability to metric structures have become a standard language in quantum estimation \cite{uhlmann1976transition,wootters1981statistical,hubner1992explicit,braunstein1994statistical,petz1996geometries,ANbook,bengtsson2017geometry,sidhu2020geometric}.
In realistic finite-data settings, performance is ultimately determined not only by the measurement and the probe design, but also by prior information and the decision rule used to process outcomes \cite{helstrom1969quantum,van2004detection,escher2011general,meyer2025quantum}.
Accordingly, Bayesian quantum estimation offers an operational framework to incorporate prior knowledge and to assess performance via the Bayes risk, making it a natural language for adaptive and sequential protocols  \cite{berry2000optimal,higgins2007entanglement,wiseman2009book,granade2012robust,mahler2013adaptive,dinani2019bayesian}.

Despite substantial progress on quantum limits for point estimation, the Bayesian multiparameter setting lacks a comparably systematic geometric formulation \cite{helstrom1976,holevobook,albarelli2025measurement,rubio2020bayesian,tsang2020physics,barndorff2000fisher,gebhart2021bayesian}.
In point estimation, Petz's theory of monotone metrics provides a principled family of contractive Riemannian metrics on the quantum-state manifold \cite{petz1996monotone,petz1996geometries,lesniewski1999monotone,cencov2000statistical,Petz2010introduction}.
This framework organizes logarithmic derivatives, Fisher-information-type objects, and fundamental performance limits, while making explicit the role of multiparameter measurement incompatibility in geometric terms \cite{holevobook,holevo1977commutation,GM00,matsumoto2002new,razavian2020quantumness,chen2022incompatibility,hayashi2023tight}.
In Bayesian estimation, however, the natural objects are no longer local quantities defined at a fixed $\theta$, but prior-averaged states and Bayes risk as the optimization objective, so it is not immediately clear how to transfer the advantages of monotone-metric geometry to this setting \cite{GillLevit1995,van2004detection}. 
At the same time, the Bayesian formulation brings additional structure (e.g., integration-by-parts identities under mild regularity conditions), which has been exploited to derive Bayesian performance limits \cite{jupp2010van,tsang2020physics}.
Recent progress has also produced finite-sample multiparameter limits and their Bayesian counterparts, including semidefinite-programming formulations \cite{albarelli2019evaluating,gorecki2020pi,conlon2021efficient,suzuki2024bayesian}.
These considerations raise a natural question of whether Petz’s monotone-metric geometry can be extended to Bayesian multiparameter estimation in a tractable way that yields explicit Bayes-risk lower bounds.

Motivated by this gap, we introduce Bayesian monotone metrics by evaluating Petz monotone metrics at the prior-averaged state.
Geometrically, the prior-averaged state serves as a reference point in the state space, at which we evaluate a monotone metric. This imports the information-geometric machinery of quantum point estimation into the Bayesian setting while keeping the Bayes risk at center stage.
Moreover, since Petz monotone metrics are in one-to-one correspondence with operator monotone functions~\cite{petz1996monotone}, varying the metric yields a systematic family of metric-induced Bayesian lower bounds for the Bayes risk.
In the classical Bayesian setting, posterior means and related covariance identities are built from ordinary multiplication and pointwise division, so their probabilistic meaning is immediate. In the quantum setting, however, such operations do not admit a unique noncommutative analogue. The relevant question is therefore not how to define an arbitrary quantum analogue, but how to identify those extensions that remain statistically meaningful, i.e., compatible with data processing under CPTP maps. Petz monotone metrics provide exactly this class in quantum point estimation, and our construction extends it to Bayesian estimation through the prior-averaged state.
The construction is particularly well behaved in Bayesian problems: averaging over the prior can regularize rank-deficient models and typically yields a full-rank averaged state, which stabilizes the resulting metric-induced objects and makes them unambiguous.

Building on this metric viewpoint, we define quantum posterior-mean operators and a corresponding Bayesian dual Fisher-information matrix \cite{helstrom1976,holevobook}.
In particular, a chosen metric specifies quantum posterior-mean operators \cite{personick1971application} via a Bayesian posterior-mean equation, and it induces a quantum Bayesian dual Fisher-information matrix $\Ksf_\Brm$ as the associated Gram matrix.
This systematically integrates a quantum posterior mean operator, defined via Petz monotone metrics at the prior-averaged state, into a geometric framework for multiparameter Bayes-risk bounds.
These objects provide an information-geometric interpretation of Bayesian uncertainty: the second-moment matrix $\Msf$ decomposes into a metric-induced information term and a remainder $\Msf-\Ksf_\Brm$, which we interpret as a quantum posterior variance matrix.
From this structure we obtain a family of Bayes-risk lower bounds (quantum posterior variance bound), in which multiparameter measurement incompatibility enters through an intrinsic metric-dependent term rather than an ad hoc correction.
Moreover, for every monotone metric in the family we establish a universal dominance relation over the corresponding quantum van Trees (quantum Bayesian Cram\'er--Rao) bound \cite{personick1970efficient,personick1971application}.

Given this family and its dominance over van Trees, a natural next question is how to identify the tightest bound within this full metric-induced family: in principle one would need to optimize over all operator monotone functions, which is an infinite-dimensional problem.
Here our geometric construction provides a substantial simplification.
We show that it suffices to consider a one-parameter $\lambda$-dependent subfamily \cite{yamagata2021maximum}, and moreover, for any operator monotone function there exists a corresponding $\lambda$ in this subfamily that yields an equal or tighter bound.
This reduction turns the search for the tightest posterior variance bound into a tractable single-parameter optimization.

Beyond this structural simplification, we demonstrate through explicit qubit examples that the resulting posterior variance bounds can be strictly tighter than the Bayesian SLD \cite{helstrom1967minimum} and Bayesian RLD \cite{YL73} bounds, with a clear quantitative gap.
This demonstrates that importing monotone-metric geometry into the Bayesian setting is not merely a reformulation: the Bayesian monotone-metric family reveals performance limits that are genuinely stronger than those obtained from the standard SLD/RLD prescriptions.

Our main contributions are as follows:
\begin{enumerate}
\item[(i)] We introduce Bayesian monotone metrics by evaluating Petz monotone metrics at the prior-averaged state (Definition~\ref{def:Bmonotone}), providing a principled Bayesian extension of the standard class of statistically meaningful (CPTP) quantum metrics.

\item[(ii)] This metric viewpoint yields a systematic family of Bayes-risk lower bounds, indexed by operator monotone functions that are in one-to-one correspondence with monotone metrics. This construction introduces the quantum posterior-mean operator and the quantum Bayesian dual Fisher information (Definition~\ref{def:QPM-QBFI}); it then yields the quantum posterior variance inequality (Theorem~\ref{thm:fBLD}) and its weighted Bayes-risk form (Corollary~\ref{cor:fBLD}), as well as the Bayesian $\lambda$-posterior variance bound (Definition~\ref{def:BlaLD}).

\item[(iii)] We prove that, for every metric in this family, the resulting bound universally dominates the corresponding quantum van Trees (quantum Bayesian Cram\'er--Rao) bound; see Theorem~\ref{thm:ineq2}.

\item[(iv)] We show that optimizing over the full metric family reduces to a tractable one-parameter $\lambda$-subfamily; see Theorem~\ref{thm:ineq1} and Corollary~\ref{cor:ineq1}. Additionally, numerical examples in Sec.~\ref{sec:example} exhibit strict improvements over the Bayesian SLD and RLD bounds.
\end{enumerate}

\begin{figure}[t]
\centering
\begin{tikzpicture}[
    >=Latex,
    font=\footnotesize,
    box/.style={
        draw,
        rounded corners=2pt,
        align=center,
        inner sep=3.8pt
    },
    wide/.style={box, text width=8.7cm},
    half/.style={box, text width=3.95cm},
    arrow/.style={->, line width=0.8pt}
]

\node[wide] (top) {%
\textbf{Bayesian quantum model}\\[0.8mm]
$\{\rho_\theta\}_{\theta\in\Theta}$ with prior $\pi(\theta)$\\[0.8mm]
$\displaystyle
\rho_{\mathrm B}:=\int_{\Theta} d\theta\,\pi(\theta)\rho_\theta,$\\
$\displaystyle
D_{\mathrm B}^{j}:=\int_{\Theta} d\theta\,\pi(\theta)\,\theta^{j}\rho_\theta.$
};

\node[half, anchor=north west] (left)
at ($(top.south west)+(0,-7mm)$) {%
\textbf{Quantum Bayesian dual Fisher information matrix (Definition~\ref{def:QPM-QBFI})}\\[0.8mm]
$\displaystyle
D_{\mathrm B}^{j}=\mathbb J_{\rho_{\mathrm B}}^{f}(E_{\mathrm B}^{f,j})$\\[0.8mm]
$\displaystyle
\Ksf_{\mathrm B}^{f}=[\Ksf_{\mathrm B}^{f,ij}],$\\
$ \Ksf_{\mathrm B}^{f,ij}:=
K_{\rho_{\mathrm B}}^{f}(D_{\mathrm B}^{i},D_{\mathrm B}^{j})
$
};

\node[half, anchor=north east] (right)
at ($(top.south east)+(0,-7mm)$) {%
\textbf{Quantum Bayesian Fisher information (Definition~\ref{def:fvT})}\\[0.8mm]
$\displaystyle
\Jsf_{\mathrm B}^{f}
:=\Jsf_{\pi}+\int_{\Theta} d\theta\,\pi(\theta)\,\Jsf^{f}(\theta)$
};

\node[wide, anchor=north] (mid)
at ($(left.south)!0.5!(right.south)+(0,-10mm)$) {%
\textbf{Quantum posterior variance bound (Theorem~\ref{thm:fBLD}, Theorem~\ref{thm:ineq2})}\\[0.8mm]
$\displaystyle
\Msf^{ij}:= \int_\Theta d\theta\, \theta^i\theta^j\pi(\theta),$\\[0.8mm]
$\displaystyle
\Vsf_{\mathrm B}[\hat\Pi]\geq \Msf-\Ksf_{\mathrm B}^{f}
\geq (\Jsf_{\mathrm B}^{f})^{-1}
$
};

\node[wide, anchor=north] (bottom)
at ($(mid.south)+(0,-7mm)$) {%
\textbf{Weighted Bayes-risk hierarchy}\\[0.8mm]
$\displaystyle
\Rsf_{\mathrm B}^{\mathrm{opt}}(\Wsf)\ge
C_{\mathrm{B}}^{\max}(\Wsf)\ge
C_{\mathrm{B}}^{(\lambda)}(\Wsf)\ge
C_{\mathrm{B}}^{f}(\Wsf)\ge
C_{\mathrm{vT}}^{f}(\Wsf)
$
};

\draw[arrow] ($(top.south west)!(left.north)!(top.south east)$) -- (left.north);
\draw[arrow] ($(top.south west)!(right.north)!(top.south east)$) -- (right.north);

\draw[arrow] (left.south) -- ($(mid.north west)!0.22!(mid.north east)$);
\draw[arrow] (right.south) -- ($(mid.north west)!0.78!(mid.north east)$);

\draw[arrow] (mid.south) -- (bottom.north);
\end{tikzpicture}
\caption{
Structural relationships among quantum Bayesian Fisher information,
posterior variance bounds,
and weighted Bayes-risk bounds
in the Bayesian monotone metric framework.
}
\label{fig:bayesian_information_structure}
\end{figure}

The rest of this paper is organized as follows.
In Sec.~\ref{sec:preliminary}, we introduce the basic notation and summarize the key ingredients of Petz monotone metrics that underlie our Bayesian construction.
In Sec.~\ref{sec:Bmonotone}, we define Bayesian monotone metrics on the prior-averaged state and develop the associated canonical Bayesian objects, including quantum posterior-mean operators and the quantum Bayesian dual Fisher-information matrix.
In Sec.~\ref{sec:Bbound}, we derive the resulting family of metric-induced posterior variance lower bounds on the Bayes risk and prove their universal dominance over the corresponding quantum van Trees (quantum Bayesian Cram\'er--Rao) bounds.
In Sec.~\ref{sec:result2}, we establish the reduction from optimization over all operator monotone functions to a one-parameter $\lambda$-subfamily, which turns the tightest quantum posterior variance bound into a tractable single-parameter optimization.
In Sec.~\ref{sec:example}, we present numerical qubit examples demonstrating that our bounds can be strictly tighter than the Bayesian SLD and Bayesian RLD posterior variance bounds, with a clear quantitative gap.
We conclude in Sec.~\ref{sec:conclusion} with a discussion of implications and open directions.
Most technical proofs are deferred to the Appendices.

\section{Preliminaries}\label{sec:preliminary}

\subsection{Notations for the Bayesian MSE}\label{sec:notations}
In this section, we recall the classical Bayesian estimation at first. The notations referring to the Bayesian statistics are listed below: 
\begin{itemize}
    \item $\theta=(\ta^1,\ta^2,\ldots,\ta^n)$: $n$-dimensional parameter. In this paper, we use upper script to denote the component. 
    \item $p_\theta(x)=p(x|\theta)$: A parametric model (likelihood), a family of probability distributions on $\mathcal{X}$.
    \item $\pi(\theta)$: A prior distribution for the parameter space $\Theta$.
    \item $P(\theta,x)$: The joint distribution, $P(\theta,x)=\pi(\theta)p_\theta(x)$.     
    \item $P_X(x)$: The marginal (evidence) for the random variable $X$, which is defined by $P_X(x)=\int_\Theta d\ta\,P(\theta,x)$. 
\end{itemize}
The expectation value of a random variable with respect to the model $p_\theta(x)$, the prior $\pi(\ta)$, the marginal $P_X(x)$, and the joint distribution $P(\theta,x)$ is expressed by $\Ebb_\theta[\cdot]$, $\Ebb_X[\cdot]$, $\Ebb_\pi[\cdot]$, and $\Ebb[\cdot]$, respectively. 

An estimator that returns values on the set $\Theta$ is denoted as 
$\hattheta(x)=(\hattheta^1(x),\hattheta^2(x),\cdots,\hattheta^n(x) )$. 
The main objective of this paper is to minimize the average weighted trace of the MSE matrix.
We consider the most general problem with a weight matrix, which is $n\times n$ positive semidefinite matrix: $\Wsf=[\Wsf_{jk}] \geq 0$. The Bayes risk is then defined by
\begin{align*}
    \Rsf_\mathrm{B} [\hattheta|\Wsf] &= \Tr \left[ \Wsf \Vsf_\Brm [ \hattheta] \right],\\
    \Vsf_\Brm[\hata]&=\left[\Vsf_{\mathrm{B},jk}[\hattheta] \right] ,\\
    \Vsf_{\mathrm{B},jk}[\hattheta]&=  \mathbb E \left[ (\hattheta^j(X)-\theta^j) (\hattheta^k(X)-\theta^k) \right]. 
\end{align*}
Throughout the paper, $\Tr[\cdot ]$ is the trace over $n$-dimensional parameter space. 

\subsection{Optimal estimator and van Trees bound} \label{sec:classicaltheorem}

The optimal estimator for the Bayes risk $\Rsf_\mathrm{B} [\hattheta|\Wsf]$ is proven in Appendix \ref{sec:appBayes}, 
which is given by
\begin{equation} \label{eq:mmse}
\hattheta^j_{\Brm}(x)=\int_\Theta d\ta \theta^j \frac{P(\ta,x)}{P_X(x)}. 
\end{equation}
Let us define the corresponding MSE matrix by $\Vmin:=\VsfB[\hattheta_{\Brm}]$, and this is also expressed as
\begin{equation} \label{eq:CoptBayesRisk}
\Vmin=\Msf-\Ksf_\Brm, 
\end{equation}
where $\Msf=[\Msf^{ij}]$ and $\Ksf_\Brm=[\Ksf^{ij}_{\Brm}]$ are
\begin{align}
\Msf^{ij}&:= \int_\Theta d\ta\, \ta^i\ta^j\pi(\ta),\\
\Ksf^{ij}_{\Brm}&:=\int_\Xcal dx\, \frac{d^i_{\Brm}(x)d^j_{\Brm}(x)}{P_X(x)} \label{def:CKmatrix},\\
d^i_{\Brm}(x)&:= \int_\Theta d\ta \theta^i P(\ta,x).
\end{align}
Note that the optimal estimator Eq.~\eqref{eq:mmse} satisfies the relation:
\begin{equation} \label{eq:cLD}
d^i_{\Brm}(x)= \hattheta^i_{\Brm}(x) P_X(x), 
\end{equation}
which plays an important role when discussing the quantum case. 

The van Trees lower bound for the Bayes risk is defined by the Bayesian Fisher information, 
\begin{align}
\cvT(\Wsf)&:= \Tr\left[\Wsf\Jsf_\Brm^{-1}\right],\\
\Jsf_{\mathrm{B},ij}&:= \Ebb\left[ \frac{\del\log P(\theta,X)}{\del \ta^i} \frac{\del\log P(\theta,X)}{\del \ta^j}\right]. 
\end{align}
The matrix inequality $\Vsf_\Brm[\hata] \geq \Jsf_{\mathrm{B}}^{-1}$ is also known as the Bayesian Cram\'er-Rao inequality. 

The following statement is known in classical statistics: 
\begin{theorem} \label{thm:Cmain}
For any prior under a weak boundary condition, the matrix inequality holds. 
\[
\Vmin\geq \Jsf_\Brm^{-1}. 
\]
\end{theorem}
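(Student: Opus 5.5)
The plan is to prove the statement as a covariance (Cauchy--Schwarz) inequality in $L^2$ of the joint distribution $P(\theta,x)$. Let $\hattheta_\Brm$ be the posterior-mean estimator of Eq.~\eqref{eq:mmse}, so that $\Vmin=\VsfB[\hattheta_\Brm]$. Define the error vector and the score vector
\[
e^i(\theta,x):=\hattheta^i_\Brm(x)-\theta^i,\qquad s_j(\theta,x):=\frac{\del \log P(\theta,x)}{\del\theta^j},
\]
so that $\Vmin=\Ebb[e e^{\top}]$ and $\Jsf_\Brm=\Ebb[s s^{\top}]$. Assuming $\Jsf_\Brm>0$, it will suffice to establish the cross-moment identity
\[
\Ebb[e^i s_j]=\delta^i_j .
\]
For then the $2n\times 2n$ matrix $\Ebb\bigl[(e,s)^{\top}(e,s)\bigr]$ is positive semidefinite, and taking the Schur complement with respect to the $\Jsf_\Brm$ block gives $\Vmin-\Jsf_\Brm^{-1}\ge 0$. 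Equivalently, one can argue directly: for arbitrary vectors $u,v$, we have $(u^{\top}v)^2=\Ebb[(u^{\top}e)(s^{\top}v)]^2\le (u^{\top}\Vmin u)(v^{\top}\Jsf_\Brm v)$, and choosing $v=\Jsf_\Brm^{-1}u$ yields the matrix inequality.

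The identity splits into two terms. First, $\Ebb[\hattheta^i_\Brm(X)\,s_j]=\int dx\,\hattheta^i_\Brm(x)\int d\theta\,\del_j P(\theta,x)$. The inner integral is a total derivative in $\theta^j$, so it vanishes under the boundary condition, and this term is $0$. Second, $-\Ebb[\theta^i s_j]=-\int dx\int d\theta\,\theta^i\del_j P(\theta,x)$. Integrating by parts in $\theta^j$ gives $\int dx\int d\theta\,\delta^i_j P(\theta,x)=\delta^i_j$, with the boundary terms again vanishing. Note that the first term uses only that the estimator does not depend on $\theta$, so the argument in fact yields $\VsfB[\hattheta]\ge\Jsf_\Brm^{-1}$ for every estimator $\hattheta$. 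The statement $\Vmin\ge\Jsf_\Brm^{-1}$ is then the special case $\hattheta=\hattheta_\Brm$.

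The main obstacle is to specify precisely the ``weak boundary condition''. We will need two things. First, $\theta^i P(\theta,x)\to 0$ and $P(\theta,x)\to 0$ on $\del\Theta$ (or at infinity), for $P_X$-almost every $x$. Second, Fubini/differentiation-under-the-integral must be justified, which holds provided $\Msf$, $\Jsf_\Brm$, and $\Ebb[\hattheta^2]$ are finite; Cauchy--Schwarz then makes all the cross moments integrable. I would state these as standing assumptions, essentially that the prior vanishes on the boundary of $\Theta$ and has finite Fisher information $\Jsf_\pi$. Under these assumptions, the integration-by-parts step above is rigorous and completes the proof.
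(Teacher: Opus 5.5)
Your proposal is correct and follows essentially the paper's proof: both form the $2n\times 2n$ second-moment matrix of the error vector $\hattheta_\Brm-\theta$ and the joint score, show that its off-diagonal block is the identity by integration by parts (the paper's Lemma~\ref{lem:Bscore}), and conclude with a Schur complement. Your explicit requirement that $P(\theta,x)$ itself vanish on $\del\Theta$, and not only $\theta^i\pi(\theta)$, is what the paper's lemma implicitly uses for the $\hattheta_\Brm(X)$ term; your remark that the argument gives $\VsfB[\hattheta]\ge\Jsf_\Brm^{-1}$ for every estimator is the standard van Trees form of the result.
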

Here the weak boundary condition is to require that $\ta^i\pi(\ta)$ vanishes at the boundary $\Theta$. 
In Appendix~\ref{sec:appBayes}, we provide an alternative proof of this theorem, which will be useful for extending the argument to the quantum Bayesian setting.

\subsection{Quantum Bayesian estimation}\label{sec:quantumBE}
We aim at deriving lower bounds in another direction. Before that, we need to define the problem in a quantum Bayesian setting. Let $\Hil$ be a finite dimensional Hilbert space and $\left\{ \rho_\theta~|~ \theta \in \Theta \right\}$ be an $n$-parameter quantum model with $\theta=(\theta^1, \theta^2, \cdots, \theta^n)$. Unlike point estimation, we do not need to impose many regularity conditions such as smoothness of the parameter dependence, full-rankness for the model $\rho_\theta\geq0$, etc. Let a set of positive semidefinite matrices $\{\Pi_x\}$ be a measurement with the measurement outcome $x\in\Xcal$. The quantum measurement is typically referred to as a positive operator-valued measure (POVM), which is defined by
\begin{equation*}
    \Pi=\{ \Pi_x \},~ \forall x \in \Xcal,~ \Pi_x \geq 0,~ \sum_{x\in\Xcal} \Pi_x = I,
\end{equation*}
where $I$ is the identity operator on $\Hil$. 
When the measurement is labelled by continuous numbers, we use the integration instead. 

Measurement outcome is described by a random variable $X$ that obeys the conditional probability distribution (a model or likelihood): 
\[
p_\theta(x) = \tr \left[ \rho_\theta \Pi_x \right],
\] 
where $\tr[ \cdot]$ denotes the trace on the Hilbert space $\Hil$. 
The performance of the estimator in this study is quantified by the quantum Bayes risk:
\begin{equation*}\label{def:Brisk2}
  \Rsf_\mathrm{B} [\Pi,\hata|\Wsf]= \Tr\left[ \Wsf \Vsf_\Brm[\Pi,\hata]\right].
\end{equation*}
The main objective is to find the best pair $\hPi=(\Pi, \hattheta)$ that minimizes the Bayes risk.
\begin{equation}
\Rsf^{\text{opt}}_\Brm(\Wsf):=\min_{\Pi,\hata}\Rsf_\mathrm{B} [\Pi,\hata|\Wsf]. 
\end{equation}

\subsection{Bayesian Nagaoka--Hayashi bound}\label{sec:BNHbound}

The Bayesian Nagaoka--Hayashi (BNH) bound, introduced in Ref.~\cite{suzuki2024bayesian}, provides an SDP lower bound on the Bayes risk. 
For a given weight matrix \(\Wsf\), we denote this bound by \(\cNH(\Wsf)\). 
It satisfies
\begin{align}
    \Rsf^{\mathrm{opt}}_\Brm(\Wsf)\geq \cNH(\Wsf),
\end{align}
where \(\Rsf^{\mathrm{opt}}_\Brm(\Wsf)\) is the optimal Bayes risk defined above.

The BNH bound is an important benchmark for Bayesian quantum estimation. 
However, its evaluation requires solving an optimization problem over an operator-valued matrix and an operator-valued vector. 
Since the BNH bound is not the central object of the present paper, we defer its precise formulation and derivation to Appendix~\ref{App:BNH}. 
In the main text, we focus instead on closed form lower bounds induced by Bayesian monotone metrics.

\section{Bayesian lower bounds}\label{sec:result}
In this section, we shall propose a method of deriving two-different types of families of lower bounds based on an operator monotone metric. More detailed discussion is given in Appendix \ref{sec:AppOpMono}. 

\subsection{Bayesian monotone metric} \label{sec:Bmonotone}
A given operator monotone function $f:(0,\infty)\to(0,\infty)$, we define an inner product with respect to a state $\rho>0$ on the matrix space as follows. Note that we demand the condition $f(1)=1$ in order to recover the classical result. 
\begin{definition}[Monotone metric]
\[
K^f_\rho(X,Y):= \tr[X^\dagger (\Jbb^f_\rho)^{-1}Y]\mbox{ for }X,Y\in\Cbb^{d\times d}. 
\]  
Here $\Jbb^f_\rho$ is a super-operator acting on $X\in\Cbb^{d\times d}$ by 
\[
\Jbb^f_\rho (X)= f(\Delta_\rho)X \rho, 
\]
with $\Delta_\rho(X)=\rho X\rho^{-1}$ the modular operator. 
\end{definition}
In this definition, $\Jbb^f_\rho$ is a positive operator and hence its inverse exists. 
A brief account on operator monotone metrics and operator means is given in Appendix \ref{sec:AppOpMono}. 

Let $\Ecal$ be a completely-positive and trace-preserving (CP-TP) map, a metric is said monotone, if 
\begin{equation} \label{eq:monotone}
K^f_\rho(X,X) \geq K^f_{\Ecal(\rho)}(\Ecal(X),\Ecal(X)), 
\end{equation}
holds for any CP-TP map (see Appendix \ref{sec:AppOpMono} for more precise definition.). 
The Petz theorem states that there exists one-to-one corresponding to a monotone metric and an operator monotone function $f$. 
In the standard setting of parameter estimation of quantum states $\rho_\theta$, a quantum Fisher information matrix is defined by each $f$. 
\begin{definition}
For a given $f$, we define $f$LD operators $L^f_{\ta,j}$ ($j=1,2,\ldots,n$) by the solution to 
\begin{equation} \label{eq:fLD_op}
\del_j \rho_\ta = \Jbb^f_{\rho_\ta} (L^f_{\ta,j}),  
\end{equation}
where $\del_j:=\frac{\del }{\del \theta^j} $ denotes the partial derivative. 
A quantum Fisher information matrix $\Jsf^f(\theta)$ associated with $f$ is defined by the inner product as 
\[
\Jsf^f_{ij}(\theta):=K^f_\rho(\del_i \rho_\ta,\del_j \rho_\ta). 
\]
\end{definition}
This is also expressed in the equivalent form as follows. (see Appendix \ref{sec:AppOpMono}.) 
Define another inner product
\begin{equation} \label{def:Erep_inner}
\inner{X}{Y}^f_{\rho}:= \tr[X^\dagger \Jbb^f_\rho Y],
\end{equation}
then $\Jsf^f_{ij}(\theta)$ is 
\[
\Jsf^f_{ij}(\theta)=\inner{L^f_{\ta,i}}{L^f_{\ta,j}}^f_{\rho_\theta}. 
\] 

Familiar examples of quantum Fisher information matrices are defined by 
\[
f_{\text{S}}(t)=\frac12(1+t),\ f_{\text{R}}(t)=t.  
\]
$f_{\text{S}}$ and $f_{\text{R}}$ correspond to the SLD and RLD Fisher information, respectively. 
An important class of operator monotone functions was pointed out by one of authors \cite{yamagata2021maximum}. 
\begin{equation}
f_\la(t):=\mlam+\plam t\quad(\la\in[-1,1]), 
\end{equation}
which will be called the $\lambda$LD function. The corresponding LD and Fisher information are denoted by
\[
L^{(\la)}_{\ta,j}\ \mbox{and}\ \Jsf^{(\la)}_{ij}(\theta). 
\]

A Bayesian version of logarithmic derivative-type equation is introduced as follows. 
This naturally invites us to define the Bayesian monotone metric.
\begin{definition}[Bayesian monotone metric]\label{def:Bmonotone}
Let $\rho_\Brm$ be the averaged state under the prior $\pi$.
For operators $X,Y$, define
\[
K^f_{\rho_\Brm}(X,Y)
:= \tr\!\left[X\,\big(\Jbb^f_{\rho_\Brm}\big)^{-1}(Y)\right].
\]
We call $K^f_{\rho_\Brm}(\cdot,\cdot)$ the Bayesian monotone metric associated with
the operator monotone function $f$.
\end{definition}

\begin{definition}[Quantum posterior-mean operator and Quantum Bayesian dual Fisher information matrix]\label{def:QPM-QBFI}
Given $D^j_\Brm$, the solution $E^{f,j}_\Brm$ of
\begin{equation}\label{eq:fLD_bayes}
D^j_{\Brm} = \Jbb^f_{\rho_\Brm}\!\left(E^{f,j}_{\Brm}\right)
\end{equation}
is called the quantum posterior-mean operator. With respect to the Bayesian monotone metric \(K^f_{\rho_\Brm}\), define the Quantum Bayesian dual Fisher information matrix $\Ksf^f_\Brm=[\Ksf^{f,ij}_\Brm]$ by
\[
\Ksf^{f,ij}_\Brm := K^f_{\rho_\Brm}\!\left(D^i_{\Brm},D^j_{\Brm}\right).
\]
\end{definition}
Remarks: 
\begin{itemize}
  \item First, note that all the quantities are defined through the expectation values over the prior $\pi$.
  \item Second, the quantum posterior-mean operator \(E^{f,j}_{\Brm}\) can be regarded as a quantum version of the minimum mean-square estimator \eqref{eq:mmse}. In particular, the relation \eqref{eq:cLD} is the corresponding classical equation. In contrast to the classical case, the equation
\(
D = \Jbb^{f}_{\rho_\Brm}(E)
\)
is solved through a linear map on operators, rather than by pointwise division.
If the averaged state satisfies \(\rho_\Brm>0\), then \(\Jbb^{f}_{\rho_\Brm}\) is invertible and the solution is unique:
\begin{align}
E = \big(\Jbb^{f}_{\rho_\Brm}\big)^{-1}(D).
\end{align}
If \(\rho_\Brm\) is rank-deficient, \(\Jbb^{f}_{\rho_\Brm}\) may have a non-trivial kernel, and the operator solution \(E\) need not be unique unless one restricts to the support of \(\rho_\Brm\) or uses a generalized inverse. 
However, this non-uniqueness does not affect the induced metric quantities: the Bayesian monotone metric and the matrix \(\Ksf^f_\Brm\) are uniquely determined on the support of \(\rho_\Brm\).

  \item Third, the $f$-posterior-mean equation \eqref{eq:fLD_bayes} shares the exactly same algebraic structure as point estimation \eqref{eq:fLD_op}. 
The correspondence is 
\begin{align}
D^j_{\Brm}\leftrightarrow \del_j \rho_\ta \\ 
E^{f,j}_{\Brm}\leftrightarrow L^{f}_{\ta,j}.   
\end{align}
\item Fourth, using another inner product representation~\eqref{def:Erep_inner}, we have
\[
\Ksf^{f,ij}_\Brm= \inner{E^{f,i}_{\Brm}}{E^{f,j}_{\Brm}}^f_{\rho_\Brm}.
\]
  \item Last, when we consider the $\lambda$-posterior-mean equation \eqref{eq:lambdaposteriormeaneq}, we shall denote the above quantities as 
$E^{(\la),j}_{\Brm}$ and $\Ksf^{(\la)}_\Brm$. 
\end{itemize}  
Another natural candidate for a Bayesian version of monotone metric is introduced by the same idea of van Trees bound. 
\begin{definition}[Quantum Bayesian Fisher information] \label{def:fvT}
Given a model $\rho_\ta$ and a prior $\pi$, we define 
the averaged $f$LD Fisher information matrix over the prior as 
\[
\Jsf^f_\Brm:= \Jsf_\pi + \int_\Theta d\ta\,\pi(\theta)\,  \Jsf^f(\theta), 
\]
where $\Jsf_{\pi,ij}:=\Ebb_\pi[\frac{\del\log \pi(\ta)}{\del \ta^i}\frac{\del\log \pi(\ta)}{\del \ta^j}]$ is the classical Fisher information about the prior. 
\end{definition}

Historical remarks are in order. 
Personick introduced two lower bounds based on the SLD for one-parameter models \cite{personick1971application}. Rubio and Dunningham generalized the Personick bound in multiparameters \cite{rubio2020bayesian}. They showed that $\Vsf_\Brm[\hPi]\geq\Msf-\Ksf^{(\la=0)}_\Brm \geq (\Jsf^{(\la=0)}_\Brm)^{-1}$ holds. 
Later, Holevo introduced a lower bound based on the RLD, $\Msf-\Ksf^{(\la=1)}_\Brm$ \cite{holevo1973statistical}. 
A related derivation can be found in Wang and Hayashi {\it et al.} \cite{wang2007quantum}. 
After renewed interest in multiparameter estimation, several authors generalized 
the Personick bounds to multiparameter estimation. 
In the next section, we generalize all these results based on the formalism of operator monotone functions. 

\subsection{Bayesian monotone metric bounds} \label{sec:Bbound}
This section provides the main result of this paper, the Bayesian monotone metric bounds.

\subsubsection{Quantum posterior variance bound}
The quantum posterior variance bound is given as follows. 
\begin{theorem}[Quantum posterior variance]\label{thm:fBLD}
Given any operator monotone function $f$, the Bayesian MSE is bounded by Quantum posterior variance
\[
\Vsf_\Brm[\hPi]\geq\Msf-\Ksf^f_\Brm. 
\]
\end{theorem}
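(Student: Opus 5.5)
We prove the matrix inequality by fixing an arbitrary real vector $v\in\Rbb^n$ and showing $v^\top(\Vsf_\Brm[\hPi]-\Msf+\Ksf^f_\Brm)v\ge0$. Write $\hPi=(\Pi,\hata)$ and introduce the estimator operators
\[
X^j:=\sum_{x}\hata^j(x)\Pi_x,\qquad X_v:=\sum_j v_jX^j,\qquad D_v:=\sum_j v_jD^j_\Brm,\qquad \theta_v:=\sum_j v_j\theta^j .
\]
Expanding the square in the definition of $\Vsf_\Brm$ and using $p_\theta(x)=\tr[\rho_\theta\Pi_x]$ gives
\[
v^\top\Vsf_\Brm v=\sum_x \hata_v(x)^2\tr[\rho_\Brm\Pi_x]-2\,\tr[D_vX_v]+v^\top\Msf v ,
\]
with $\hata_v:=\sum_j v_j\hata^j$. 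The middle term comes from $\int d\theta\,\pi(\theta)\theta_v\rho_\theta=D_v$. The task thus reduces to showing $\sum_x\hata_v(x)^2\tr[\rho_\Brm\Pi_x]-2\tr[D_vX_v]\ge -K^f_{\rho_\Brm}(D_v,D_v)$.

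The first step is a Kadison/Schwarz-type inequality: $\sum_x\hata_v(x)^2\tr[\rho_\Brm\Pi_x]\ge \inner{X_v}{X_v}^f_{\rho_\Brm}$. I would prove this via monotonicity in Eq.~\eqref{eq:monotone}, or more directly via the standard fact that for any operator monotone $f$ with $f(1)=1$ one has $\inner{X}{X}^f_\rho\le\inner{X}{X}^{f_{\rm R}}_\rho$-type bounds. Concretely, I use the Naimark dilation: there is an isometry $V$ and a projective measurement $\{P_x\}$ with $\Pi_x=V^\dagger P_xV$. Let $\tilde X_v=\sum_x\hata_v(x)P_x$. Then $\tr[V\rho_\Brm V^\dagger\tilde X_v^2]$ equals the left side. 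Since $\tilde X_v$ commutes with $P_x$, and for the block-diagonal part of $V\rho_\Brm V^\dagger$ all $f$-inner products coincide with $\tr[\sigma\tilde X^2]$, the claim becomes the statement that the dual (Heisenberg) map $\tilde X\mapsto V^\dagger\tilde XV$ contracts $\inner{\cdot}{\cdot}^f$. This is the dual form of Petz monotonicity, $\inner{\Ecal^*(Y)}{\Ecal^*(Y)}^f_\rho\le\inner{Y}{Y}^f_{\Ecal(\rho)}$, combined with pinching. I expect this step to be the main obstacle, because one must carefully derive the dual (e-representation) monotonicity from the m-representation statement \eqref{eq:monotone}. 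It follows from the fact that $\inner{\cdot}{\cdot}^f_\rho$ and $K^f_\rho$ are mutually dual norms, $\inner{X}{X}^f_\rho=\sup_Y\{2\,\Real\,\tr[X^\dagger Y]-K^f_\rho(Y,Y)\}$, together with the appendix's treatment of monotone metrics.

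The second step is completing the square. Using $D_v=\Jbb^f_{\rho_\Brm}(E^f_v)$ with $E^f_v=\sum_jv_jE^{f,j}_\Brm$, we have $\tr[D_vX_v]=\inner{E^f_v}{X_v}^f_{\rho_\Brm}$, which is real by self-adjointness. Hence
\[
\inner{X_v}{X_v}^f_{\rho_\Brm}-2\inner{E^f_v}{X_v}^f_{\rho_\Brm}=\inner{X_v-E^f_v}{X_v-E^f_v}^f_{\rho_\Brm}-\inner{E^f_v}{E^f_v}^f_{\rho_\Brm}\ge-\Ksf^f_\Brm[v,v],
\]
by the fourth remark after Definition~\ref{def:QPM-QBFI}. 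Combining the two steps yields $v^\top\Vsf_\Brm v\ge v^\top(\Msf-\Ksf^f_\Brm)v$ for all $v$, which is the theorem.

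If $\rho_\Brm$ is rank-deficient, I restrict to its support or approximate it by $\rho_\Brm+\epsilon I$ and take the limit.
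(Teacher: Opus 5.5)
\textbf{Gap: only real test vectors.} As written, your argument proves strictly less than the theorem, because you test the inequality only on real $v\in\Rbb^n$. The matrix $\Ksf^f_\Brm$ is a complex Hermitian Gram matrix. Its imaginary part is antisymmetric and in general nonzero: already for the RLD, $\Ksf^{f,ij}_\Brm=\tr[D^i_\Brm\rho_\Brm^{-1}D^j_\Brm]$, and in Model A, $\Imag\Ksf^{(\la)}_\Brm\propto\la\ep$. For real $v$ one has $v^\top\Ksf^f_\Brm v=v^\top(\Real\Ksf^f_\Brm)v$. So what your two steps establish is only $\Vsf_\Brm[\hPi]\ge\Msf-\Real\Ksf^f_\Brm$. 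The stated inequality requires $c^\dagger(\Vsf_\Brm[\hPi]-\Msf+\Ksf^f_\Brm)c\ge0$ for all $c\in\Cbb^n$, which is strictly stronger. Its imaginary part is exactly what produces the incompatibility term $\Tr|\Wsf^{1/2}\Imag\Ksf^f_\Brm\Wsf^{1/2}|$ in Corollary~\ref{cor:fBLD}. Your completing-the-square step explicitly relies on $\tr[D_vX_v]$ being real, so the restriction is structural, and the closing sentence ``for all $v$, which is the theorem'' does not follow.

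\textbf{Repair.} The fix stays within your framework but changes both steps. For $c\in\Cbb^n$ put $\hat\theta_c:=\sum_jc_j\hat\theta^j$, $X_c:=\sum_x\hat\theta_c(x)\Pi_x$ (no longer Hermitian), and $E^f_c:=\sum_jc_jE^{f,j}_\Brm$. Then $c^\dagger\Vsf_\Brm[\hPi]c=\sum_x|\hat\theta_c(x)|^2\tr[\rho_\Brm\Pi_x]-2\Real\tr[X_c^\dagger D_c]+c^\dagger\Msf c$. Here $\tr[X_c^\dagger D_c]=\inner{X_c}{E^f_c}^f_{\rho_\Brm}$ is complex, and completing the square with the $2\Real$ gives $\inner{X_c-E^f_c}{X_c-E^f_c}^f_{\rho_\Brm}-c^\dagger\Ksf^f_\Brm c$.

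Step 1 must then hold for non-Hermitian $X_c$. Your duality formula $\inner{X}{X}^f_\rho=\sup_Y\{2\Real\tr[X^\dagger Y]-K^f_\rho(Y,Y)\}$ does give this. It is simplest to apply it directly to the measurement channel $\Ecal(\rho)=\sum_x\tr[\rho\Pi_x]\ketbra{x}{x}$, whose adjoint sends $\sum_x\hat\theta_c(x)\ketbra{x}{x}$ to $X_c$; no Naimark dilation or pinching is needed.

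Your alternative hint does not survive, however. The inequality $\inner{X}{X}^f_\rho\le\inner{X}{X}^{f_{\mathrm R}}_\rho$ is false for non-Hermitian $X$: for $f=f_{\mathrm S}$ and $X=\ketbra{a}{b}$ in the eigenbasis of $\rho$, the left side is $(p_a+p_b)/2$ and the right side is $p_a$. The correct elementary substitute is $\Jbb^f_\rho\le\Jbb^{(\la_c)}_\rho$ with $\la_c=2f'(1)-1$, combined with $X_c^\dagger X_c,\,X_cX_c^\dagger\le\sum_x|\hat\theta_c(x)|^2\Pi_x$.

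\textbf{Comparison with the paper.} The paper never needs the dual (e-representation) monotonicity. It applies monotonicity of $K^f$ under $\Ecal$ directly to $D_c$ with complex $c$ to get $\Ksf_\Brm[\Pi]\le\Ksf^f_\Brm$. It then invokes the classical optimal Bayes estimator, $\Vsf_\Brm[\hPi]\ge\Msf-\Ksf_\Brm[\Pi]$; being an inequality between real symmetric matrices, this holds for complex test vectors automatically. Your completing-the-square route is essentially the one the paper uses to show $\cH\ge\clamBLD$. It is longer, but it exhibits the slack $\inner{X_c-E^f_c}{X_c-E^f_c}^f_{\rho_\Brm}$ explicitly.
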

By using the well-known lemma, we can derive a lower bound for 
the Bayes risk. (see Appendix \ref{sec:App_proof1} for the proof.)
\begin{corollary}[Quantum posterior variance bound] \label{cor:fBLD}
\begin{align*}
\Tr \left[\Wsf \Vsf_\Brm[\hPi]\right]&\geq \cfBLD(\Wsf),\\
\cfBLD(\Wsf)&:=\Tr\left[\Wsf \Msf\right]-\Tr\left[\Wsf \Real\Ksf^f_\Brm\right]\\
&\qquad+\Tr\left|\Wsf^{\frac12} \Imag\Ksf^f_\Brm\Wsf^{\frac12} \right| , 
\end{align*}
where $\Tr|X|:=\Tr[\sqrt{X^\dagger X}]$ denotes the trace norm. 
\end{corollary}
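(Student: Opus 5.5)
The plan is to apply the standard ``well-known lemma'' to the matrix inequality of Theorem~\ref{thm:fBLD}. That theorem gives a Hermitian matrix inequality $\Vsf_\Brm[\hPi]\geq \Msf-\Ksf^f_\Brm$, where $\Vsf_\Brm[\hPi]$ and $\Msf$ are real symmetric and $\Ksf^f_\Brm$ is Hermitian but in general complex. Indeed, $\Ksf^{f,ij}_\Brm=\inner{E^{f,i}_\Brm}{E^{f,j}_\Brm}^f_{\rho_\Brm}$ is a Gram matrix. Since $D^j_\Brm$ is Hermitian, $\overline{\Ksf^{f,ij}_\Brm}=\Ksf^{f,ji}_\Brm$, so $\Real\Ksf^f_\Brm$ is real symmetric and $\Imag\Ksf^f_\Brm$ is real antisymmetric.

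The lemma I would invoke is the following. If $\Vsf$ is real symmetric, $Z$ is Hermitian, and $\Vsf\geq Z$, then for every $\Wsf\geq0$
\[
\Tr[\Wsf\Vsf]\geq \Tr[\Wsf\Real Z]+\Tr\left|\Wsf^{\frac12}\Imag Z\,\Wsf^{\frac12}\right|.
\]
To prove it, conjugate by $\Wsf^{\frac12}$, which preserves the order: $A:=\Wsf^{\frac12}(\Vsf-\Real Z)\Wsf^{\frac12}\geq \Wsf^{\frac12}\,i\Imag Z\,\Wsf^{\frac12}=:iB$. Here $A$ is real symmetric and $B$ is real antisymmetric. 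Taking complex conjugates of $A-iB\geq0$ gives $A+iB\geq0$ as well.

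Diagonalize the Hermitian matrix $iB$. Its eigenvalues come in pairs $\pm\mu_k$, with eigenvectors $v_k$ and $\bar v_k$. Let $P_+$ be the projector onto the positive eigenspace. Then $\Tr[P_+(A-iB)]\geq0$ and $\Tr[\bar P_+(A+iB)]\geq0$, where $\bar P_+$ is the projector onto the negative eigenspace of $iB$. Adding these and using that $A$ is real gives $\Tr[A(P_++\bar P_+)]\geq \Tr|iB|$. Since $P_++\bar P_+\leq I$ and $A\geq0$ (because $A\geq \pm iB$ averages to $A\geq0$), we obtain $\Tr A\geq\Tr|B|$.

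Finally, apply the lemma with $Z=\Msf-\Ksf^f_\Brm$. Then $\Real Z=\Msf-\Real\Ksf^f_\Brm$ and $\Imag Z=-\Imag\Ksf^f_\Brm$. The trace norm is insensitive to the sign of $\Imag Z$, so the resulting bound is exactly $\cfBLD(\Wsf)$.

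The main obstacle is the treatment of the kernel of $iB$ and the justification of $A\geq0$ in the lemma. Both are routine, so the corollary is essentially a direct consequence of Theorem~\ref{thm:fBLD}. The only substantive check is that $\Ksf^f_\Brm$ is Hermitian, which follows because the inner product $\inner{\cdot}{\cdot}^f_{\rho_\Brm}$ is positive definite.
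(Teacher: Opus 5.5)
Your proposal is correct and follows the paper's own route. The paper likewise applies Holevo's lemma (Lemma~\ref{lem:equiv_holevo}) to the matrix inequality $\Vsf_\Brm[\hPi]\geq\Msf-\Ksf^f_\Brm$ of Theorem~\ref{thm:fBLD} and uses $\Imag(\Msf-\Ksf^f_\Brm)=-\Imag\Ksf^f_\Brm$. The only difference is that the paper cites the lemma in its exact minimization form, optimizer included, without proof, whereas you prove the one inequality direction actually needed; your argument is valid provided $\Wsf$ is real symmetric, so that $A$ and $B$ are real.
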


\begin{proof}[Proof: Theorem \ref{thm:fBLD}]\ 
The statement is proven by showing the inequality: 
For any choice of operator monotone functions, the matrix inequality 
\begin{equation}\label{eq:cqBayes}
\Ksf_\Brm[\Pi]\leq \Ksf^f_\Brm,
\end{equation}
holds for any POVM. 
The left hand side of this inequality, $\Ksf_{\Brm}[\Pi]$, is the matrix \eqref{def:CKmatrix}, which is defined by
\[
\Ksf_{\Brm}^{ij}[\Pi]:= \sum_{x\in\Xcal}\, \frac{\tr[D^i_{\Brm} \Pi_x]\tr[D^j_{\Brm} \Pi_x]}{\tr[\rho_\Brm \Pi_x]}. 
\]
The proof for Eq.~\eqref{eq:cqBayes} is given in Appendix \ref{sec:App_proof_cq}. 
This is because applying the optimal estimator $\hattheta_{\Brm}$ Eq.~\eqref{eq:mmse} yields the classically optimal Bayes risk \eqref{eq:CoptBayesRisk} as
\[
\Vsf_\Brm[\hPi]\geq \Msf-\Ksf_{\Brm}[\Pi]\geq \Msf-\Ksf^f_\Brm. 
\]
\end{proof}

\subsubsection{Quantum van Trees bound}
The quantum van Trees bound is easily generalized to any monotone metric as follows. 
\begin{theorem}[Quantum Bayesian CR bound] \label{thm:fvT}
For any operator monotone function $f$, define the averaged Fisher information $\Jsf^f_\Brm$ by Definition \ref{def:fvT}, 
then the Bayes risk is bounded as
\[
\Vsf_\Brm[\hPi]\geq (\Jsf^f_\Brm)^{-1}. 
\]
\end{theorem}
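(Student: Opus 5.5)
We prove the quantum Bayesian Cramér--Rao bound by reducing it to the classical van Trees inequality, Theorem~\ref{thm:Cmain}, applied to the classical model induced by a POVM, and then comparing the induced Fisher information with the quantum one. Fix an arbitrary pair $\hPi=(\Pi,\hata)$. The POVM $\Pi$ yields the classical likelihood $p_\theta(x)=\tr[\rho_\theta\Pi_x]$ and the joint distribution $P(\theta,x)=\pi(\theta)p_\theta(x)$. The classical Bayesian Cramér--Rao inequality $\Vsf_\Brm[\Pi,\hata]\ge \Jsf_\Brm[\Pi]^{-1}$ holds for every estimator under the weak boundary condition. This is the standard van Trees argument (Cauchy--Schwarz with the score $\del_i\log P$ after integrating by parts in $\theta$), and it does not require $\hata$ to be the posterior mean.

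The next step is to decompose the classical Bayesian Fisher information. Since $\log P=\log\pi+\log p_\theta$, the cross terms vanish because $\Ebb_\theta[\del_i\log p_\theta]=0$, which holds whenever differentiation under the sum is allowed. This gives
\begin{equation*}
\Jsf_\Brm[\Pi]=\Jsf_\pi+\int_\Theta d\theta\,\pi(\theta)\,\Jsf^{\Pi}(\theta),
\end{equation*}
where $\Jsf^{\Pi}(\theta)$ is the classical Fisher information matrix of $p_\theta$.

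The key quantum input is the matrix inequality $\Jsf^{\Pi}(\theta)\le \Jsf^f(\theta)$ for every $\theta$ and every operator monotone $f$. This follows from the monotonicity~\eqref{eq:monotone} of the Petz metric under the measurement channel $\Ecal_\Pi(\rho)=\sum_x\tr[\rho\Pi_x]\ketbra{x}{x}$, which is CPTP. Apply it to $X=\sum_i c^i\del_i\rho_\theta$ for an arbitrary complex vector $c$. On diagonal (commuting) states every monotone metric with $f(1)=1$ reduces to the classical Fisher metric $\sum_x |\del p|^2/p$. This yields $c^\dagger\Jsf^{\Pi}(\theta)c\le c^\dagger\Jsf^f(\theta)c$. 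Because the inequality holds for complex $c$, it holds as a Hermitian matrix inequality, with $\Jsf^f$ possibly complex, as for the RLD.

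Averaging over $\pi$ and adding $\Jsf_\pi$ gives $\Jsf_\Brm[\Pi]\le\Jsf^f_\Brm$. Operator monotonicity of the matrix inverse then gives $\Jsf_\Brm[\Pi]^{-1}\ge(\Jsf^f_\Brm)^{-1}$, and hence $\Vsf_\Brm[\hPi]\ge(\Jsf^f_\Brm)^{-1}$.

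The main obstacle is the comparison step. It needs care in two places. First, $\rho_\theta$ may be rank-deficient, so $\Jsf^f(\theta)$ must be defined on the support, or by a limiting argument using $\rho_\theta+\epsilon I$ normalized and then letting $\epsilon\to0$. Second, if $\Jsf^{\Pi}(\theta)$ is singular, the final inversion must be justified. For the second point, the term $\Jsf_\pi$ typically makes $\Jsf_\Brm[\Pi]$ positive definite, and this should be stated as an assumption. Continuous-outcome POVMs are handled by replacing sums with integrals.
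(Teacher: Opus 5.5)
Your proof is correct, but it takes a different route from the paper. The paper does not prove Theorem~\ref{thm:fvT} directly. It obtains it from the chain $\Vsf_\Brm[\hPi]\ge\Msf-\Ksf^f_\Brm\ge(\Jsf^f_\Brm)^{-1}$, whose two steps work as follows. The first step, Theorem~\ref{thm:fBLD}, applies Petz monotonicity under the measurement channel only once, at the averaged state $\rho_\Brm$ and to the operators $D^j_\Brm$. The second step, Theorem~\ref{thm:ineq2}, is an operator-level van Trees argument: concavity of $\rho\mapsto\langle X,X\rangle^f_\rho$ gives $\Msf-\Ksf^f_\Brm\ge\Vsf^f_\Brm$; Lemma~\ref{lem:QBscore} makes the cross block between $E^{f,i}_\Brm-\theta^i I$ and the quantum Bayesian scores $L^f_{\theta,j}+\partial_j\log\pi\,I$ equal to the identity; and a Schur complement finishes. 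You instead apply monotonicity at every $\rho_\theta$ to $\partial_j\rho_\theta$, which is a Braunstein--Caves-type comparison $\Jsf^{\Pi}(\theta)\le\Jsf^f(\theta)$. You then average over the prior and invoke the classical van Trees inequality for the induced model, passing through the measurement-dependent bound $\Jsf_\Brm[\Pi]^{-1}$. Your route is shorter and more elementary: it needs no concavity/Jensen lemma and no posterior-mean operators. It also needs only the classical boundary condition on $\pi(\theta)p_\theta(x)$, rather than the weak boundary condition on $E^f_\Brm$. What it cannot deliver is the paper's actual point: $\Msf-\Ksf^f_\Brm$ never appears, so you get the van Trees bound but not its domination by the posterior-variance bound for the same $f$. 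That domination is what the last inequality in Theorem~\ref{thm:ineq3} rests on. One small simplification: you need not assume invertibility of $\Jsf_\Brm[\Pi]$. Once the off-diagonal blocks of the classical Gram matrix are the identity, positivity of $\begin{pmatrix}\Vsf_\Brm[\hPi] & I\\ I & \Jsf_\Brm[\Pi]\end{pmatrix}$ already forces $\Jsf_\Brm[\Pi]>0$. Alternatively, add $0\oplus(\Jsf^f_\Brm-\Jsf_\Brm[\Pi])\ge0$ to that matrix and take the Schur complement with respect to $\Jsf^f_\Brm$ directly.
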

By the same argument to get Corollary \ref{cor:fBLD}, we have a lower bound for the weighted trace of the Bayesian MSE matrix. 
\begin{corollary} \label{cor:fvT}
\begin{align*}
\Tr \left[\Wsf \Vsf_\Brm[\hPi]\right]&\geq \cfvT(\Wsf),\\
\cfvT(\Wsf)&:=\Tr\left[\Wsf \Real(\Jsf^f_\Brm)^{-1}\right]
+\Tr\left|\Wsf^{\frac12} \Imag (\Jsf^f_\Brm)^{-1}\Wsf^{\frac12} \right|. 
\end{align*}
\end{corollary}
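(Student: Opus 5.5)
The corollary follows from the matrix inequality of Theorem~\ref{thm:fvT} by the same elementary lemma used for Corollary~\ref{cor:fBLD}. That lemma says: if $V$ is a real symmetric matrix, $A$ is a Hermitian matrix with $V\geq A$, and $\Wsf\geq0$ is real, then
\[
\Tr[\Wsf V]\geq \Tr[\Wsf\,\Real A]+\Tr\left|\Wsf^{\frac12}\,\Imag A\,\Wsf^{\frac12}\right| .
\]
I will apply it with $V=\Vsf_\Brm[\hPi]$, which is real symmetric because it is a classical MSE matrix. I will take $A=(\Jsf^f_\Brm)^{-1}$. For a general $f$ the $f$LD Fisher information $\Jsf^f(\theta)$ is Hermitian but may be complex, and $\Jsf_\pi$ is real. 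Hence $A$ is Hermitian, with $\Real A$ real symmetric and $\Imag A$ real antisymmetric.

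\textbf{Step 1 (sandwich and decompose).} From $V-A\geq0$, conjugating by $\Wsf^{\frac12}$ gives $\Wsf^{\frac12}(V-A)\Wsf^{\frac12}\geq0$. Set
\[
B:=\Wsf^{\frac12}(V-\Real A)\Wsf^{\frac12},\qquad C:=\Wsf^{\frac12}\,\Imag A\,\Wsf^{\frac12}.
\]
Here $B$ is real symmetric and $C$ is real antisymmetric, so $iC$ is Hermitian. Since $A=\Real A+i\,\Imag A$, Step~1 reads $B-iC\geq0$.

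\textbf{Step 2 (complex conjugation).} Complex conjugation preserves positivity, and $B$, $C$ are real. Conjugating $B-iC\geq0$ therefore yields $B+iC\geq0$. Averaging the two inequalities gives $B\geq0$.

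\textbf{Step 3 (trace comparison).} Let $P_+$, $P_-$, $P_0$ be the spectral projections of the Hermitian matrix $iC$ onto its positive, negative and zero eigenspaces. Then:
\begin{itemize}
\item $\Tr[(B-iC)P_+]\geq0$ gives $\Tr[BP_+]\geq\Tr[iC\,P_+]$;
\item $\Tr[(B+iC)P_-]\geq0$ gives $\Tr[BP_-]\geq-\Tr[iC\,P_-]$;
\item $B\geq0$ gives $\Tr[BP_0]\geq0$.
\end{itemize}
Summing the three and using $P_++P_-+P_0=I$,
\[
\Tr B\geq \Tr|iC|=\Tr|C| .
\]
Since $\Tr B=\Tr[\Wsf V]-\Tr[\Wsf\,\Real A]$, rearranging gives exactly $\Tr[\Wsf\Vsf_\Brm[\hPi]]\geq\cfvT(\Wsf)$.

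\textbf{Main obstacle.} The only delicate point is justifying $B\geq0$ and the split into $B\pm iC$. This needs both that $V$ and $\Wsf$ are real and that $A$ is Hermitian, so that complex conjugation turns $B-iC$ into $B+iC$. Everything else is routine linear algebra. The inequality $V\geq A$ itself is supplied directly by Theorem~\ref{thm:fvT}.
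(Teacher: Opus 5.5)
Your proof is correct and takes the paper's route: the paper obtains this corollary by feeding the matrix inequality $\Vsf_\Brm[\hPi]\geq(\Jsf^f_\Brm)^{-1}$ of Theorem~\ref{thm:fvT} into Holevo's lemma (Lemma~\ref{lem:equiv_holevo}), exactly as for Corollary~\ref{cor:fBLD}. The only difference is that the paper cites the lemma's closed-form minimum, whereas you prove the needed lower-bound half yourself via complex conjugation and the spectral projections of $i\,\Wsf^{\frac12}\Imag A\,\Wsf^{\frac12}$; this is valid and also shows that positivity of $A$ is not needed.
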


While it is possible to prove Theorem \ref{thm:fvT} directly, we shall prove it by showing a quantum version of Theorem \ref{thm:Cmain} in the next section (Theorem \ref{thm:ineq2}). 

\subsection{Inequalities} \label{sec:inequality}
In this section, we prove the second main result of this paper. 
The first inequality states that the $\lambda$LD function provides the best choice among all possible monotone functions. 
This fact was used by one of authors in Ref.~\cite{yamagata2021maximum} in point estimation. 
We can generalize it to the Bayesian setting. 
\begin{theorem} \label{thm:ineq1}
For any operator monotone function $f$, there always exists $\la\in[-1,1]$ such that 
\[
\Ksf^{(\la)}_\Brm \leq \Ksf^f_\Brm. 
\]
\end{theorem}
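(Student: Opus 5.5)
The plan is to reduce the matrix inequality to a scalar inequality between $f$ and a suitable $f_\la$, and then to obtain that scalar inequality from concavity of $f$.

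\textbf{Step 1: write the metric in the eigenbasis of $\rhoB$.} Write $\rhoB=\sum_a p_a\ketbra{a}{a}$ with $p_a>0$. If $\rhoB$ is rank-deficient, we restrict to its support; by the remark after Definition~\ref{def:QPM-QBFI}, this does not change $\Ksf^f_\Brm$. The modular operator acts on matrix units by $\Delta_{\rhoB}(\ketbra{a}{b})=(p_a/p_b)\ketbra{a}{b}$. Hence
\[
\Jbb^f_{\rhoB}(\ketbra{a}{b})=m_f(p_a,p_b)\,\ketbra{a}{b},\qquad m_f(x,y):=y\,f(x/y).
\]
So $\Jbb^f_{\rhoB}$ is diagonal in the basis of matrix units, with positive eigenvalues $m_f(p_a,p_b)$. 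For an arbitrary complex vector $c\in\Cbb^n$, set $D_c:=\sum_j c_j D^j_\Brm$. Then
\[
\sum_{i,j}\overline{c_i}\,\Ksf^{f,ij}_\Brm\, c_j
=\tr\!\left[D_c^\dagger(\Jbb^f_{\rhoB})^{-1}(D_c)\right]
=\sum_{a,b}\frac{|(D_c)_{ab}|^2}{m_f(p_a,p_b)}.
\]
It therefore suffices to find a single $\la\in[-1,1]$ such that $f_\la(t)\ge f(t)$ for every $t>0$. Then $m_{f_\la}\ge m_f$ entrywise, so $(\Jbb^{(\la)}_{\rhoB})^{-1}\le(\Jbb^f_{\rhoB})^{-1}$ as superoperators. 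This gives the quadratic-form inequality for every $c$ simultaneously, i.e. $\Ksf^{(\la)}_\Brm\le\Ksf^f_\Brm$.

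We should not symmetrize in $(a,b)$ here. Because $c$ is complex, $D_c$ is not Hermitian, so $|(D_c)_{ab}|$ and $|(D_c)_{ba}|$ are unrelated. This is why a pointwise bound on $f$ is the natural target.

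\textbf{Step 2: choose $\la$ from the tangent line at $t=1$.} An operator monotone function $f:(0,\infty)\to(0,\infty)$ is operator concave, hence concave, and it is smooth (indeed analytic). Concavity gives
\[
f(t)\le f(1)+f'(1)(t-1)=(1-f'(1))+f'(1)\,t .
\]
Setting $f'(1)=\plam$, i.e. $\la:=2f'(1)-1$, makes the right-hand side exactly $f_\la(t)$, using $f(1)=1$.

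\textbf{Step 3: check that $\la\in[-1,1]$.} Monotonicity of $f$ gives $f'(1)\ge0$. For the upper bound, concavity gives $f(t)\le 1+f'(1)(t-1)$ for all $t>0$. Letting $t\to0^+$ and using $f>0$ yields $0\le f(0^+)\le 1-f'(1)$, so $f'(1)\le1$.

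\textbf{Main obstacle.} The delicate point is the reduction in Step 1: one $\la$ must work for all complex combinations $c$ at once. This is exactly what requires domination of $f$ by $f_\la$ on all of $(0,\infty)$, rather than on a finite set of eigenvalue ratios. The tangent-line argument supplies this domination uniformly, so I expect the proof to close once Step 1 is stated carefully.
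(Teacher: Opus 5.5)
Your proposal is correct and follows the paper's proof. The paper likewise takes the tangent line of the concave $f$ at $t=1$, sets $\lambda=2f'(1)-1$ so that $f\le f_\lambda$ on $(0,\infty)$, and concludes $K^f_\rho(X,X)\ge K^{(\lambda)}_\rho(X,X)$ for all $X$; your eigenbasis computation (applied to $X=D_c$ for complex $c$) and your check that $f'(1)\in[0,1]$ make explicit the steps the paper leaves implicit.
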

The proof is given in Appendix \ref{sec:App_proof2}. 
This theorem shows that it is sufficient to consider the $\lambda$-posterior mean equation as a sub-family of operator monotone functions as far as lower bounds are concerned. 
In other words, we have the following corollary. 
\begin{corollary} \label{cor:ineq1}
Consider all possible posterior variance bounds $\cfBLD(\Wsf)$ associated with $f$, it is sufficient to maximize the $\lambda$-posterior mean family. 
\begin{align}
\max_{f:\mathrm{operator\ monotone}}\cfBLD(\Wsf)=\max_{\la\in[-1,1]}\Ccal^{f_\la}_{\mathrm{B}}(\Wsf).   
\end{align}
\end{corollary}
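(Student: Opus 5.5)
The plan is to derive the corollary from Theorem~\ref{thm:ineq1}. The only extra ingredient is that the scalar bound $\cfBLD(\Wsf)$ is a monotone functional of the Hermitian matrix $\Msf-\Ksf^f_\Brm$ in the Loewner order.

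\textbf{Step 1: rewrite the bound as an SDP value.} For a Hermitian $n\times n$ matrix $Z$, define
\[
h_\Wsf(Z):=\min\{\Tr[\Wsf\Vsf] : \Vsf \text{ real symmetric},\ \Vsf\geq Z\}.
\]
I will invoke the standard lemma behind Corollary~\ref{cor:fBLD} (the ``well-known lemma'' of Appendix~\ref{sec:App_proof1}). It states that
\[
h_\Wsf(Z)=\Tr[\Wsf\,\Real Z]+\Tr\bigl|\Wsf^{\frac12}\Imag Z\,\Wsf^{\frac12}\bigr|.
\]
Next I check that $\Ksf^f_\Brm$ is Hermitian. This holds because the $D^j_\Brm$ are Hermitian and $K^f_{\rho_\Brm}$ is a Hermitian positive sesquilinear form. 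Since $\Msf$ is real symmetric, $\Imag(\Msf-\Ksf^f_\Brm)=-\Imag\Ksf^f_\Brm$, and the trace norm is invariant under a sign change. Hence
\[
\cfBLD(\Wsf)=h_\Wsf(\Msf-\Ksf^f_\Brm).
\]

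\textbf{Step 2: monotonicity.} If $Z_1\geq Z_2$, then every real symmetric $\Vsf\geq Z_1$ also satisfies $\Vsf\geq Z_2$. The feasible set for $Z_1$ is therefore contained in the one for $Z_2$, so $h_\Wsf(Z_1)\geq h_\Wsf(Z_2)$.

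Now fix an operator monotone $f$. Theorem~\ref{thm:ineq1} gives some $\la\in[-1,1]$ with $\Ksf^{(\la)}_\Brm\leq\Ksf^f_\Brm$, that is, $\Msf-\Ksf^{(\la)}_\Brm\geq\Msf-\Ksf^f_\Brm$. By monotonicity, $\Ccal^{f_\la}_{\mathrm B}(\Wsf)\geq\cfBLD(\Wsf)$. Taking the supremum over $f$ yields $\sup_f\cfBLD\leq\sup_\la\Ccal^{f_\la}_{\mathrm B}$.

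\textbf{Step 3: the reverse inequality and attainment.} Each $f_\la$ is itself operator monotone with $f_\la(1)=1$, so the $\la$-family is a subfamily of all $f$. This gives the reverse inequality, and hence equality of the suprema.

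To write $\max$, I need the supremum over $\la$ to be attained. When $\rho_\Brm>0$, the superoperator $\Jbb^{f_\la}_{\rho_\Brm}$ is affine in $\la$ and invertible for every $\la\in[-1,1]$. Its eigenvalues in the eigenbasis of $\rho_\Brm$ are $\mlam p_k+\plam p_l>0$. Hence $\la\mapsto\Ksf^{(\la)}_\Brm$ is continuous, and so is $h_\Wsf$, since the trace norm is continuous. Compactness of $[-1,1]$ then gives a maximizer $\la^*$. The resulting $f_{\la^*}$ also attains the maximum over all $f$, because no $f$ can exceed $\Ccal^{f_{\la^*}}_{\mathrm B}(\Wsf)$.

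In the rank-deficient case, I would restrict to the support of $\rho_\Brm$, as in the remark after Definition~\ref{def:QPM-QBFI}.

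\textbf{Main obstacle.} I expect the main obstacle to be Step~1, namely confirming that the SDP characterization of $h_\Wsf$ applies with the correct sign of the imaginary part. If needed, I would reprove it directly. Write $\Wsf^{\frac12}(\Vsf-\Real Z)\Wsf^{\frac12}\geq i\,\Wsf^{\frac12}\Imag Z\,\Wsf^{\frac12}$. Because the left side is real symmetric and the right side is a real antisymmetric matrix times $i$, the left side must also dominate the negative of the right side. Its trace is therefore at least $\Tr|\Wsf^{\frac12}\Imag Z\,\Wsf^{\frac12}|$. This is the lower-bound direction, which is all the corollary needs once $\cfBLD$ is defined by the closed form.

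For monotonicity I actually only need $h_\Wsf(Z)$ to be nondecreasing in $Z$. I will check this directly from the min-representation, using the equality of the closed form and the min, which the cited lemma provides.
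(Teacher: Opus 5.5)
Your proposal is correct and follows the paper's route: the paper reads the corollary off Theorem~\ref{thm:ineq1}, implicitly using that $\cfBLD(\Wsf)$ is the value of the minimization in Lemma~\ref{lem:equiv_holevo} under $\Vsf\geq\Msf-\Ksf^f_\Brm$ and is therefore Loewner-monotone, and you make explicit what the paper leaves implicit, namely the reverse inequality ($f_\la$ lies in the family) and attainment of the maximum by continuity in $\la$ on $[-1,1]$. One small correction is needed: your aside that only the lower-bound direction of that lemma is required is not right, because comparing the closed forms at two different matrices needs the full equality (or, directly, the reverse triangle inequality for the trace norm together with $\Tr[\Wsf\Real\Delta]\geq\Tr|\Wsf^{\frac12}\Imag\Delta\,\Wsf^{\frac12}|$ for $\Delta\geq0$); since your final paragraph does invoke the equality, the argument stands.
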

The precise statement will be given in Theorem \ref{thm:ineq3} later. 

Remark: The exactly same argument of Ref.~\cite{yamagata2021maximum} shows the hierarchy among the $\Jsf^f_\Brm$s. 
For any operator monotone function $f$, there always exists $\la\in[-1,1]$ such that 
\[
\Jsf^{(\la)}_\Brm \leq \Jsf^f_\Brm. 
\]
In other words, the family $\{\Jsf^{(\la)}_\Brm\}_{\la \in[-1,1]}$ based on the $\lambda$-posterior mean dominates all others. 

The second main result is a quantum counterpart of Theorem \ref{thm:Cmain}. 
\begin{theorem} \label{thm:ineq2}
Suppose the weak boundary condition is satisfied for the quantum posterior-mean operator.
For any operator monotone function $f$, the following matrix inequality holds. 
\begin{align}
\Msf-\Ksf^f_\Brm \geq (\Jsf^{f}_\Brm)^{-1}.   
\end{align}
\end{theorem}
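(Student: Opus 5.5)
The plan is to mimic the Cauchy--Schwarz proof of the classical Theorem~\ref{thm:Cmain} (Appendix~\ref{sec:appBayes}). We work on a joint ``parameter $\times$ operator'' space equipped with a prior-averaged monotone inner product. For operator-valued functions $X(\ta),Y(\ta)$ define
\[
\langle\!\langle X,Y\rangle\!\rangle:=\int_\Theta d\ta\,\pi(\ta)\,\inner{X(\ta)}{Y(\ta)}^f_{\rho_\ta}
=\int_\Theta d\ta\,\pi(\ta)\,\tr\!\big[X(\ta)^\dagger\Jbb^f_{\rho_\ta}(Y(\ta))\big].
\]
This is a positive semidefinite sesquilinear form. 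We introduce the ``error'' functions $X_i(\ta):=\ta^i I-E^{f,i}_\Brm$ and the ``score'' functions $Y_j(\ta):=L^f_{\ta,j}+\del_j\log\pi(\ta)\,I$. The block Gram matrix
\[
\begin{pmatrix}[\langle\!\langle X_i,X_j\rangle\!\rangle] & [\langle\!\langle X_i,Y_j\rangle\!\rangle]\\ [\langle\!\langle Y_i,X_j\rangle\!\rangle] & [\langle\!\langle Y_i,Y_j\rangle\!\rangle]\end{pmatrix}
\]
is positive semidefinite. We will show that its off-diagonal block is $-I$ and its lower-right block is $\Jsf^f_\Brm$. The Schur complement then gives $[\langle\!\langle X_i,X_j\rangle\!\rangle]\geq(\Jsf^f_\Brm)^{-1}$. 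It therefore remains to prove $[\langle\!\langle X_i,X_j\rangle\!\rangle]\le \Msf-\Ksf^f_\Brm$.

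\textbf{Score block and cross block.} We use $\Jbb^f_\rho(I)=\rho$, the self-adjointness of $\Jbb^f_\rho$ with respect to the Hilbert--Schmidt product, and $\Jbb^f_{\rho_\ta}(L^f_{\ta,j})=\del_j\rho_\ta$. With these,
\[
\langle\!\langle Y_i,Y_j\rangle\!\rangle=\int\pi\,\Jsf^f_{ij}(\ta)+\Jsf_{\pi,ij},
\]
because the mixed terms reduce to $\int\del_i\pi\,\tr[\del_j\rho_\ta]=0$ (since $\tr\rho_\ta=1$). For the cross block, $\pi\,\Jbb^f_{\rho_\ta}(Y_j)=\del_j(\pi\rho_\ta)$, so
\[
\langle\!\langle X_i,Y_j\rangle\!\rangle=\int d\ta\,\tr\!\big[(\ta^iI-E^{f,i}_\Brm)^\dagger\,\del_j(\pi(\ta)\rho_\ta)\big].
\]
Integrating by parts under the weak boundary condition ($\ta^i\pi\rho_\ta$ and $\pi\rho_\ta$ vanish on $\del\Theta$; this is the hypothesis on the posterior-mean operator) gives $-\delta_{ij}\int\pi\,\tr\rho_\ta=-\delta_{ij}$. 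The $E^{f,i}_\Brm$ term drops out because it is $\ta$-independent.

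\textbf{Error block; the main step.} Expanding and using $\int\pi\ta^i\rho_\ta=D^i_\Brm$ together with $\tr[E^{f,i\dagger}_\Brm D^j_\Brm]=\Ksf^{f,ij}_\Brm$ (the fourth remark after Definition~\ref{def:QPM-QBFI}), we get
\[
\sum_{ij}\bar a_i a_j\langle\!\langle X_i,X_j\rangle\!\rangle
= a^\dagger\Msf a-2\,a^\dagger\Ksf^f_\Brm a+\int\pi(\ta)\,\inner{E_a}{E_a}^f_{\rho_\ta},
\qquad E_a:=\sum_j a_jE^{f,j}_\Brm .
\]
In the classical case, or for $f_\la$, the map $\rho\mapsto\Jbb^f_\rho$ is linear. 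The last term would then equal $\inner{E_a}{E_a}^f_{\rho_\Brm}=a^\dagger\Ksf^f_\Brm a$, and we would be done. For a general operator monotone $f$ linearity fails, and this is the step I expect to be the real obstacle. I would resolve it by concavity: write
\[
\Jbb^f_\rho=R_\rho^{1/2}f(L_\rho R_\rho^{-1})R_\rho^{1/2},
\]
the operator perspective of $f$ in the commuting pair of left/right multiplications $(L_\rho,R_\rho)$. Since an operator monotone $f>0$ is operator concave, its perspective (equivalently the Kubo--Ando mean $L_\rho\,\sigma_f\,R_\rho$) is jointly concave. Because $L_\rho,R_\rho$ are linear in $\rho$, the map $\rho\mapsto\inner{E}{E}^f_\rho$ is concave. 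Jensen's inequality then gives
\[
\int\pi\,\inner{E_a}{E_a}^f_{\rho_\ta}\le\inner{E_a}{E_a}^f_{\rho_\Brm}=a^\dagger\Ksf^f_\Brm a ,
\]
hence $[\langle\!\langle X_i,X_j\rangle\!\rangle]\le\Msf-\Ksf^f_\Brm$ as Hermitian matrices. This follows because the argument holds for every complex vector $a$. Combining this with the Schur-complement bound yields $\Msf-\Ksf^f_\Brm\ge(\Jsf^f_\Brm)^{-1}$. Together with Theorem~\ref{thm:fBLD}, it also proves Theorem~\ref{thm:fvT}.
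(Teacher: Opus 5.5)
Your proposal is correct and follows essentially the same route as the paper. The paper also bounds the prior-averaged Gram matrix of the errors $E^{f,i}_\Brm-\ta^iI$ by $\Msf-\Ksf^f_\Brm$ using concavity of $\rho\mapsto\inner{X}{X}^f_\rho$ (via joint concavity of the operator perspective in $(L_\rho,R_\rho)$), gets the cross block by integration by parts (Lemma~\ref{lem:QBscore}), and concludes with the Schur complement; your sign convention $\ta^iI-E^{f,i}_\Brm$ only turns the off-diagonal block into $-\Isf$, which leaves that last step unchanged.
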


Remarks: This theorem shows that all possible quantum van Trees bounds do not provide better lower bounds, since there always exists a better matrix $\Ksf^f_\Brm$ associated with the same $f$ as a matrix inequality. 
In the next section, we will show that this relation proves explicit ordering for the weighted-trace versions in Theorem \ref{thm:ineq3}. 

\begin{proof}[Proof: Theorem \ref{thm:ineq2}]
To prove this theorem, we need three steps. Some of technical parts will be deferred to the Appendix. 
The first step is to rewrite the left hand $\Msf-\Ksf^f_\Brm$ in terms of a single matrix defined through an inner product between the set of operators. Due to non-commutativity, this is not possible in general. 
Instead, we use the concavity fact: for any fixed operator $X$,
the map
\begin{align}
\rho \ \longmapsto\ \langle X, X\rangle_\rho^f
= \tr\!\left[X^\dagger \mathbb J_\rho^f(X)\right]
\label{eq:concavity_fact}
\end{align}
is concave in $\rho$. A detailed proof is given in Appendix~\ref{app:concavity}.

Thus, for any $c\in\mathbb{C}^n$, define $E_c^{f}:=\sum_i c_i E_\Brm^{f,i}$.
This concavity implies the inequality
\begin{align}\label{eq:concavity_core}
\int_\Theta d\theta\,\pi(\theta)\,\langle E_c^{f},E_c^{f}\rangle_{\rho_\theta}^f
\le \langle E_c^{f},E_c^{f}\rangle_{\rho_\Brm}^f
= c^\dagger \Ksf_\Brm^f c,  
\end{align}
which proves the matrix inequality
\begin{align}
\Ksf_\Brm^f \ \geq\ \Big[\,\langle E_\Brm^{f,i}, E_\Brm^{f,j}\rangle_{\pi(\theta),\rho_\theta}^f \,\Big].
\end{align}
Here the averaged inner product is defined by
\begin{align}
\langle X, Y\rangle_{\pi(\theta),\rho_\theta}^f
:= \int_\Theta d\theta\,\pi(\theta)\,\langle X, Y\rangle_{\rho_\theta}^f .
\end{align}
With an additional argument proven in Appendix \ref{sec:App_step1}, we derive the following inequality. 
\begin{align} \label{eq:step1}
\Msf-\Ksf^f_\Brm &\geq\Vsf^{f}_\Brm,\\
\Vsf^{f}_{\Brm,ij}&:=
\inner{E^{f,i}_{\Brm}-\theta^i I}{E^{f,j}_{\Brm}-\theta^j I}^{f}_{\pi(\ta),\rho_\ta}. \nonumber
\end{align}

The second step is a quantum version of Lemma \ref{lem:Bscore}, which is proven in Appendix \ref{sec:App_step2}. 
With this lemma \ref{lem:QBscore}, we show orthogonality between the two sets of operators with respect to $\inner{\cdot}{\cdot}^{f}_{\pi(\theta), \rho_\theta}$. 
\begin{align}
\mbox{set 1:}&\quad\left\{E^{f,i}_{\Brm}-\theta^i I \right\}_{i=1}^n,\\ 
\mbox{set 2:}&\quad\left\{ L^{f}_{\ta,j}+\frac{\del\log \pi(\ta)}{\del \ta^j} I\right\}_{j=1}^n.  
\end{align}
That is setting $X_\ta=E^{f,i}_{\Brm}-\theta^i I$ yields (see also Appendix \ref{sec:App_step3})
\begin{align*}
\left\langle E^{f,i}_{\Brm}-\theta^i I\,,\,L^{f}_{\ta,j}+\frac{\del\log \pi(\ta)}{\del \ta^j} I\right\rangle^{f}_{\pi(\ta),\rho_\ta} = \,
&\delta^i_j. 
\end{align*}
The last step is exactly same as the classical one. 
We consider an $2n\times2n$ matrix such that 
the $ij$ block matrices for $i,j=1,2$ are defined by the inner product $\inner{\cdot}{\cdot}^{f}_{\pi(\ta),\rho_\ta}$ between set $i$ and set $j$. 
This gives a positive semi-definite matrix (see Appendix \ref{sec:App_step3}),
\begin{equation} \label{eq:step3}
\left[\begin{array}{cc} \Vsf^{f}_\Brm & \Isf \\
 \Isf & \Jsf^{f}_\Brm \end{array}\right]\geq 0 
\iff \Vsf^{f}_\Brm\geq(\Jsf^{f}_\Brm)^{-1}. 
\end{equation}
Combining Eqs.~(\ref{eq:step1}, \ref{eq:step3}) proves the claimed inequality. 
\[
\Msf-\Ksf^f_\Brm \geq\Vsf^{f}_\Brm\geq (\Jsf^{f}_\Brm)^{-1}. 
\] 
\end{proof}

\section{One-parameter family of lower bounds} \label{sec:result2}

\subsection{Bayesian $\lambda$-posterior variance bound}\label{sec:BlaLD}
The goal in this section is to derive a family of quantum Bayesian lower bounds in a closed based on Theorem \ref{thm:ineq1}. 
The key ingredient in our approach is to utilize the corresponding operator equation which 
was shown to be useful for point estimation \cite{yamagata2021maximum,suzuki2021non}. 
For convenience, we call this family the Bayesian $\lambda$-posterior variance bound. 

\begin{definition}[Bayesian $\lambda$-posterior variance bound] \label{def:BlaLD}
For any $\lambda\in[-1,1]$, we define the Bayesian $\lambda$-posterior variance bound by
\begin{align*}
    \Ccal^{(\la)}_\mathrm{B}(\Wsf)&:=\Tr[\Wsf\Msf]  -\Tr \left[\Wsf \Real \Ksf^{(\la)}_\Brm\right] \\
   &\hspace{2cm} + \Tr | \Wsf^{\frac12} \Imag \Ksf^{(\la)}_\Brm \Wsf^{\frac12} | .
\end{align*}
In this bound, a complex positive definite matrix $\Ksf^{(\la)}_\Brm$ is defined by
\begin{equation}\label{def:Kmatrix}
    \Ksf^{(\la)}_\Brm:= \left[ \inner{E_{\Brm}^{(\la),j}}{E_{\Brm}^{(\la),k}}_{\rho_\Brm}^{(\la)} \right],
\end{equation}
where $E_{\Brm}^{(\la),j}$ is the solution to the quantum posterior-mean equation:
\begin{equation} \label{eq:lambdaposteriormeaneq}
    D^j_{\Brm} = \plam \rho_{\Brm} E_{\Brm}^{(\la),j} + \mlam E_{\Brm}^{(\la),j} \rho_{\Brm}. 
\end{equation}
\end{definition}

We remark that due to the process of defining the $\lambda$-posterior variance bound, it is tighter than the Personick bound \cite{personick1971application} and its generalization \cite{rubio2019quantum,demkowicz2020multi}, which is equal to the $\lambda=0$ case. 
On the other hand, the Bayesian bound proposed by Holevo \cite{holevo_qest,holevo1977commutation} corresponds to the choice $\lambda=1$. In other words, the above one-parameter family of bounds includes well-known bounds as special cases.  

The optimal choice of \(\lambda\) in the Bayesian \(\lambda\)-posterior variance bound, \(\Ccal^{(\la)}_\mathrm{B}(\Wsf)\), is highly non-trivial. 
To obtain analytical insight into this optimization, we solve the general qubit model explicitly in Sec.~\ref{sec:generalqubit}.

\subsection{Maximum Bayesian posterior variance bound}\label{sec:maxBLD}
In point estimation, one of authors proposed the maximum logarithmic derivative (MLD) bound by considering a one-parameter family of quantum LD bounds \cite{yamagata2021maximum}. It is then natural to consider a Bayesian version of the MLD bound.  
A Bayesian version of the MLD bound is presented in the following corollary.
\begin{corollary}\label{cor:maxLD}
    The Bayes risk is bounded by the maximum posterior variance bound
\begin{equation*}
    \Rsf_{\mathrm{B}}[\hPi|\Wsf] \geq \Ccal^{\max}_\mathrm{B}(\Wsf) :=\max_\la 
    \Ccal^{(\la)}_\mathrm{B}(\Wsf).
\end{equation*}
\end{corollary}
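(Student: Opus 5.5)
The plan is to deduce Corollary~\ref{cor:maxLD} directly from Corollary~\ref{cor:fBLD}, specialized to the $\lambda$LD functions. First, for each fixed $\la\in[-1,1]$, the function $f_\la(t)=\mlam+\plam t$ is operator monotone on $(0,\infty)$. It is affine with nonnegative slope $\plam\ge 0$, and it satisfies $f_\la(1)=1$. For $\la\in(-1,1]$ it maps $(0,\infty)$ into $(0,\infty)$. At the endpoint $\la=-1$ we get $f_{-1}(t)=1$, which is still positive. So every $f_\la$ is an admissible choice of $f$ in Theorem~\ref{thm:fBLD}.

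Second, I would check that the objects in Definition~\ref{def:BlaLD} coincide with those of Definition~\ref{def:QPM-QBFI} for $f=f_\la$. We have $\Jbb^{f_\la}_{\rho_\Brm}(X)=f_\la(\Delta_{\rho_\Brm})X\rho_\Brm=\mlam X\rho_\Brm+\plam\rho_\Brm X$, since $\Delta_{\rho_\Brm}(X)\rho_\Brm=\rho_\Brm X$. Hence Eq.~\eqref{eq:lambdaposteriormeaneq} is exactly Eq.~\eqref{eq:fLD_bayes}. The matrix \eqref{def:Kmatrix} equals $\Ksf^{f_\la}_\Brm$ by the fourth remark after Definition~\ref{def:QPM-QBFI}, and therefore $\Ccal^{(\la)}_\mathrm{B}(\Wsf)=\Ccal^{f_\la}_\mathrm{B}(\Wsf)$.

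Third, Corollary~\ref{cor:fBLD} gives $\Rsf_\mathrm{B}[\hPi|\Wsf]=\Tr[\Wsf\Vsf_\Brm[\hPi]]\ge\Ccal^{(\la)}_\mathrm{B}(\Wsf)$ for every $\la$ and every $\hPi$. Taking the supremum over $\la$ yields the claim.

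The only delicate point is that the supremum is attained, so that writing $\max$ is legitimate. I expect this to be the main, if modest, obstacle. The argument is to show that $\la\mapsto\Ccal^{(\la)}_\mathrm{B}(\Wsf)$ is continuous on the compact interval $[-1,1]$.

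When $\rho_\Brm>0$, the superoperator $\Jbb^{(\la)}_{\rho_\Brm}$ depends affinely on $\la$. In the eigenbasis $\{\ket{a}\}$ of $\rho_\Brm=\sum_a p_a\ketbra{a}{a}$, it acts entrywise by multiplication with $\plam p_a+\mlam p_b$. This number lies between $\min(p_a,p_b)$ and $\max(p_a,p_b)$, so it is bounded below by $\min_a p_a>0$ uniformly in $\la$. Its inverse is therefore continuous in $\la$, and so is $\Ksf^{(\la)}_\Brm$, whose entries are $\sum_{a,b}\overline{(D^i_\Brm)_{ab}}(D^j_\Brm)_{ab}/(\plam p_a+\mlam p_b)$ up to the ordering convention. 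The real part, and the trace norm of $\Wsf^{\frac12}\Imag\Ksf^{(\la)}_\Brm\Wsf^{\frac12}$, are then continuous.

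If $\rho_\Brm$ is rank-deficient, I would restrict to its support, as in the remark following Definition~\ref{def:QPM-QBFI}. This uses that each $D^j_\Brm$ is supported there. Indeed, $0\le\rho_\theta\le$ (support projector structure) implies that $D^j_\Brm$ vanishes outside $\mathrm{supp}\,\rho_\Brm$, so the same argument applies. Continuity on a compact set then gives existence of the maximum.
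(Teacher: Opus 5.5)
Your proposal is correct and matches the paper's implicit argument: $f_\la$ is an admissible operator monotone function with $\Ccal^{(\la)}_\mathrm{B}(\Wsf)=\Ccal^{f_\la}_\mathrm{B}(\Wsf)$, so Corollary~\ref{cor:fBLD} applies for every $\la\in[-1,1]$, and one then maximizes over $\la$. Your continuity argument for why the maximum is attained is a useful addition the paper leaves unstated; the support reduction should be stated precisely, however: $\tr[P\rho_\Brm]=0$ forces $\rho_\theta P=0$ for $\pi$-a.e.\ $\theta$ (with $P$ the projector onto $\ker\rho_\Brm$), hence $D^j_\Brm P=PD^j_\Brm=0$.
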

We call this bound the maximum posterior variance bound which is the unification of the bounds proposed by Personick and Holevo. Due to the maximization over choice of $\lambda$, this bound is larger than the previously known two bounds. 

The relationship between the $\lambda$-posterior variance bound and the maximum posterior variance bound is elaborated upon in the subsequent sections. The $\lambda$-posterior variance bound is one-parameter family and each of them is in a closed form which means it can be obtained through a straightforward procedure. The maximum posterior variance bound is obtained by maximizing over this one-parameter family, ensuring it consistently outperforms bounds derived from any specific choice of $\lambda$. 

In passing, we also mention that the posterior variance bound is not better than the BNH bound. 
However, unlike the BNH bound, which is expressed as optimization, the posterior variance bound is algebraically solved. 

Finally, combining all the inequalities in the previous section and here, we show that the maximum posterior variance bound is the best lower bound among two-different families. 
\begin{theorem} \label{thm:ineq3}
The following inequality holds for any monotone function $f$. 
\[
\Rsf_\Brm^{\mathrm{opt}}(\Wsf)\geq \cNH(\Wsf)\geq\Ccal^{\max}_\mathrm{B}(\Wsf)\geq \cfBLD(\Wsf)\geq \cfvT(\Wsf). 
\]
\end{theorem}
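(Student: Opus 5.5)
We prove the chain one link at a time, from right to left. The first link, $\Rsf_\Brm^{\mathrm{opt}}(\Wsf)\geq\cNH(\Wsf)$, is the defining property of the BNH bound recalled in Sec.~\ref{sec:BNHbound} and needs no further argument.

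For the rightmost link $\cfBLD(\Wsf)\geq\cfvT(\Wsf)$, we start from Theorem~\ref{thm:ineq2}, which gives $\Msf-\Ksf^f_\Brm\geq(\Jsf^f_\Brm)^{-1}$ as Hermitian matrices. We then show that the map $A\mapsto \Tr[\Wsf\Real A]+\Tr|\Wsf^{1/2}\Imag A\,\Wsf^{1/2}|$ is monotone on Hermitian matrices. Write it as $h(A)=\max_{\Fsf}\Tr[(\Wsf+i\Fsf)A]$, where $\Fsf$ ranges over real antisymmetric matrices with $\Wsf+i\Fsf\geq0$; this is the standard Holevo-type variational characterization of the lemma used for Corollary~\ref{cor:fBLD}. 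Since each functional $\Tr[(\Wsf+i\Fsf)A]$ is monotone in $A$ when $\Wsf+i\Fsf\geq0$, the maximum is monotone as well. Because $\Msf$ is real, the real and imaginary parts of $\Msf-\Ksf^f_\Brm$ are $\Msf-\Real\Ksf^f_\Brm$ and $-\Imag\Ksf^f_\Brm$, and the trace norm is insensitive to the sign. Applying $h$ to both sides of Theorem~\ref{thm:ineq2} therefore gives exactly $\cfBLD\geq\cfvT$. The same monotonicity, applied to $\Ksf^{(\la)}_\Brm\leq\Ksf^f_\Brm$ from Theorem~\ref{thm:ineq1}, yields $\Ccal^{(\la)}_\Brm(\Wsf)\geq\cfBLD(\Wsf)$ for the $\la$ supplied by that theorem. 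Hence $\Ccal^{\max}_\Brm(\Wsf)\geq\cfBLD(\Wsf)$, which is also the content of Corollary~\ref{cor:ineq1}.

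The remaining link, $\cNH(\Wsf)\geq\Ccal^{\max}_\Brm(\Wsf)$, is the main obstacle, because it requires the SDP structure of the BNH bound from Appendix~\ref{App:BNH}. The plan is to show, for each fixed $\la$, that any feasible pair (operator matrix $\Lbb$, operator vector $X$) in the BNH program has value at least $\Ccal^{(\la)}_\Brm(\Wsf)$. The first tool is the operator inequality $\Lbb\geq X X^\dagger$, with the trace taken against $\rho_\Brm$ suitably $\la$-symmetrized. We would then decompose each $X^j$ into its component along $E^{(\la),j}_\Brm$ and a remainder orthogonal in $\inner{\cdot}{\cdot}^{(\la)}_{\rho_\Brm}$. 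Next, the linear constraint coupling $X$ to $D^j_\Brm$ fixes $\tr[D^j_\Brm X^k]$, and this gives a Cauchy--Schwarz bound in which the $\la$-Gram matrix $\Ksf^{(\la)}_\Brm$ appears. Finally, the weighted trace of a complex Gram matrix is bounded below using $\Real$ plus the trace norm of $\Imag$, as in the point-estimation proof that the Nagaoka--Hayashi bound dominates the MLD bound \cite{yamagata2021maximum}. Taking the maximum over $\la$ then closes the chain.

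The step we expect to be most delicate is verifying that, for every $\la$, the $\la$-weighted positivity $\plam\,\tr[\rho_\Brm Y^\dagger Y]+\mlam\,\tr[\rho_\Brm YY^\dagger]\geq0$ is compatible with the BNH constraint $\Lbb\geq XX^\dagger$, so that the Cauchy--Schwarz argument goes through uniformly in $\la$. If a direct argument fails, a fallback is to note that the BNH optimum dominates every Holevo-type relaxation. Each $\Ccal^{(\la)}_\Brm$ is such a relaxation, obtained by replacing $\Lbb$ with its $\la$-symmetrized expectation.
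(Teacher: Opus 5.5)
Your treatment of the first, third and fourth links is correct and matches the paper. You rely on Theorem~\ref{thm:BNHbound}, Theorem~\ref{thm:ineq1} and Theorem~\ref{thm:ineq2}. You also make explicit the monotonicity of $A\mapsto\Tr[\Wsf\Real A]+\Tr|\Wsf^{1/2}\Imag A\,\Wsf^{1/2}|$, which the paper uses silently. That monotonicity is also immediate from the primal form in Lemma~\ref{lem:equiv_holevo}, since enlarging $A$ shrinks the feasible set.

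The gap is in the link $\cNH(\Wsf)\geq\Ccal^{\max}_\Brm(\Wsf)$. The BNH program has no linear constraint fixing $\tr[D^j_\Brm X^k]$. That constraint is the local-unbiasedness condition of point-estimation Nagaoka--Hayashi, and it has no Bayesian counterpart. Here $\Hsf^{jk}[X]:=\tr[D^j_\Brm X^k]$ is free and enters the objective linearly, as $-\Tr[\Wsf(\Hsf+\Hsf^\top)]$. A Cauchy--Schwarz bound $\Zsf^{(\la)}[X]\geq\Hsf^\top(\Ksf^{(\la)}_\Brm)^{-1}\Hsf$ is therefore only half the argument: you must still absorb the free linear term. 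The missing idea is to complete the square, using $\tr[D^j_\Brm X^k]=\inner{E^{(\la),j}_\Brm}{X^k}^{(\la)}_{\rho_\Brm}$, which follows from the self-adjointness of $\Jbb^{(\la)}_{\rho_\Brm}$:
\[
\Zsf^{(\la)}[X]-\Hsf[X]-\Hsf[X]^\top+\Msf=\Bigl[\inner{X^j-E^{(\la),j}_\Brm}{X^k-E^{(\la),k}_\Brm}^{(\la)}_{\rho_\Brm}\Bigr]_{jk}+\Msf-\Ksf^{(\la)}_\Brm\geq\Msf-\Ksf^{(\la)}_\Brm .
\]
Your decomposition should also project onto the span of all $E^{(\la),k}_\Brm$, not onto $E^{(\la),j}_\Brm$ alone. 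Your fallback simply asserts this chain without proving it.

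The step you flag as delicate is in fact easy.
\begin{itemize}
\item Put $\Vsf^{jk}:=\tr[\rho_\Brm\Lbb^{jk}]$. Then $\Vsf$ is real symmetric and the quadratic part of the BNH objective is $\Tr[\Wsf\Vsf]$.
\item Sandwich $\Lbb\geq XX^{\top_1}$ with $c\otimes\phi$ and average over the spectral decomposition of $\rho_\Brm$. This gives $\Vsf\geq\Zsf[X]$ with $\Zsf^{jk}[X]=\tr[\rho_\Brm X^jX^k]$.
\item Because $\Vsf$ is real, also $\Vsf\geq\overline{\Zsf[X]}=\Zsf[X]^\top$.
\item Since $\Zsf^{(\la)}[X]=\plam\Zsf[X]^\top+\mlam\Zsf[X]$ is a convex combination, $\Vsf\geq\Zsf^{(\la)}[X]$ holds uniformly in $\la\in[-1,1]$.
\end{itemize}
Combining this with the completed square, the real symmetric matrix $\Vsf-\Hsf-\Hsf^\top+\Msf$ dominates $\Msf-\Ksf^{(\la)}_\Brm$. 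Lemma~\ref{lem:equiv_holevo} then gives $\cNH(\Wsf)\geq\Ccal^{(\la)}_\Brm(\Wsf)$ for every $\la$. This is what the paper's route through the Bayesian Holevo-type bounds $\cH(\Wsf)$ and $\cH^{(\la)}(\Wsf)$ in Appendix~\ref{sec:App_BNH-BLD} accomplishes.
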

\begin{proof}[Proof: Theorem \ref{thm:ineq3}]
The first inequality is shown in Theorem \ref{thm:BNHbound}. 
The second inequality is proven in Appendix \ref{sec:App_BNH-BLD}. 
The third inequality is due to Corollary \ref{cor:maxLD} and Corollary \ref{cor:ineq1}. 
The fourth inequality is proven from Theorem \ref{thm:ineq2}. 
\end{proof}

Before we moving to the next subsection, we discuss the advantages of the proposed bound. 
Unlike point estimation, the model $\rho_\ta$ does not require the differentiability with respect to the parameter. 
The matrix $\Ksf_\Brm^{(\la)}$ is indeed calculated without taking the partial derivatives. 
The quantum version of the van Trees inequality, on the other hand, the model needs to be smooth \cite{personick1970efficient,personick1971application}. 
The posterior-mean operators defined by Eq.~\eqref{eq:fLD_bayes} exist uniquely even for rank-deficient models. 
This is because the averaged state $\rho_\mathrm{B}$ becomes full rank after the integration over the parameter in the following sense. 
Although each $\rho_\theta$ may be rank-deficient,
the averaged state
\begin{align}
\rho_\Brm=\int_\Theta d\theta\,\pi(\theta)\rho_\theta  
\end{align}
is strictly positive on its support.
Therefore, without loss of generality, we restrict the Hilbert space to the support of \(\rho_\Brm\), where \(\rho_\Brm>0\) holds. On this restricted space, the posterior-mean operators are uniquely determined. If one works outside this support, the operator solution may not be unique, but the induced Bayesian monotone metric and the matrix \(\Ksf^f_\Brm\) are uniquely defined on the support.
Bayesian averaging can effectively enlarge the support
when the supports of $\rho_\theta$ vary with $\theta$.
Even if each $\rho_\theta$ is rank-deficient,
the averaged state may become full rank on the restricted space
provided the states do not share a common null vector
on the support of the prior distribution. 
This reduction simplifies the numerical computation
and guarantees uniqueness of the solution.

The non-triviality of the maximum posterior variance bound comes from the fact that there is no ordering 
between the matrices $\Ksf_\Brm^{(\la)}$ for $\la\in[-1,1]$ and the role of the imaginary part should be stressed. 
If we utilize the real part of $\Ksf_\Brm^{(\la)}$ only, we have a rather simple result. 

Lastly, the proposed bound can be computed for a multiple copy setting. 
Suppose we wish to estimate the parameter by performing a collective measurement on $N$-copies of the state $\rho_\ta^{\otimes N}$ ($N>1$). 
The bound $\Ccal^{\max}_\mathrm{B}$ is still valid for this case, and it is given by calculating the matrix $\Ksf_\Brm^{(\la)}$ 
for the state $\rho_\ta^{\otimes N}$. 
An important difference from point estimation is non additivity of the bound. 
In point estimation, the quantum Cram\'er-Rao bound is additive and hence, $N$-copies case is simply obtained by 
a scaling factor $1/N$. In contrast, the maximum Bayesian posterior variance bound is not additive. 
This is seen by the fact that averaged state $\int_\Theta d\ta \pi(\ta)\rho_\ta^{\otimes N}$ is no longer a product state. 
Then, Bayesian posterior variance bound cannot be expressed in terms of $N=1$ case. Hence, $\Ksf_\Brm^{(\la)}$ is not given by a simple multiple factor for the $N=1$ case.

\subsection{General qubit model}\label{sec:generalqubit}
In this section, we give the formula of the Bayesian $\lambda$-posterior variance bound for the general qubit model. Suppose we parameterize a qubit model expanded in terms of the Pauli matrices by 
\begin{align*}
    \rho_\ta = \half ( I+s_\ta \cdot \sigma) ~ \text{with }s_\ta =[s^i_{\ta}] .
\end{align*}
Since single parameter estimation is solved by Personick, we focus on multiple parameter estimation for $n=2,3$.
We denote the expectation of the Bloch vector and the parameter by
\begin{align*}
    s^i_{\mathrm{B}} &:= \int_\Theta d\theta\pi(\theta){s^i_{\ta}}\quad (i=1,2,3) ,\\ 
    \mu^j&:=\int_\Theta d\theta\pi(\theta) \theta^j \quad(j=1,\ldots,n).
\end{align*}
Let $\Csf_\Brm$ be an $n\times 3 $ real matrix defined by 
\begin{align*}
   \Csf_\mathrm{B}=[\Csf_{\mathrm{B}}^{ij}],\, \Csf_{\mathrm{B}}^{ij} := \int_\Theta d\theta\pi(\theta)\ta^i s_{\ta}^{j}-\mu^is_{\mathrm{B}}^{j},
\end{align*}
and $\Fsf_\Brm$ is a $3 \times 3$ anti-symmetric matrix defined by
\begin{equation}\label{def:Fmatrix}
    \Fsf_\Brm = \begin{pmatrix}
        0 & -s_{\mathrm{B}}^{3} & s_{\mathrm{B}}^{2} \\
        s_{\mathrm{B}}^{3} & 0 & -s_{\mathrm{B}}^{1} \\
        -s_{\mathrm{B}}^{2} & s_{\mathrm{B}}^{1} & 0 
    \end{pmatrix}.  
\end{equation}
In the following, we also use the ket-bra notation for vectors as
\begin{align*}
\ket{s_{\mathrm{B}}}&=[s_{\Brm}^{i}] \in \R^3 ,\\
\ket{\mu}&=[\mu^j]\in\R^n.
\end{align*}
The covariance matrix for the parameter is defined by
\begin{equation}
\Csf_\pi=\left[\int_\Theta d\theta\pi(\theta)(\ta^i-\mu^i)(\ta^j-\mu^j) \right].
\end{equation}

\subsubsection{Analytical expression of the matrix $\Ksf_\Brm^{(\la)}$ Eq.~\eqref{def:Kmatrix}}
Firstly, we calculate the formula of $\Ksf_\Brm^{(\la)}$ which is defined by Eq.~\eqref{def:Kmatrix}. 
The procedure to obtain $\Ksf_\Brm^{(\la)}$ is similar to point estimation, and the detail calculation is omitted. 
We first solve the operators $E_{\Brm}^{(\la),j}$, and then to calculate the inner product. 
This gives the form of $\Ksf_\Brm^{(\la)}$ as
\begin{align} 
\Ksf_\Brm^{(\la)} &= \ketbra{\mu}{\mu} + \Csf_\mathrm{B}\nonumber
     \left[ I - \ketbra{s_\mathrm{B}}{s_\mathrm{B}} +i \la \Fsf_\mathrm{B} \right]^\inv \Csf_\mathrm{B}^\top\\
     &=\ketbra{\mu}{\mu}+  \frac{1}{1-\la^2 |s_\mathrm{B}|^2} \nonumber \\
     &\hspace{5mm}\times\Csf_\mathrm{B}\Big[I+\frac{1-\la^2}{1-|s_\mathrm{B}|^2}\ketbra{s_\mathrm{B}}{s_\mathrm{B}} -i\lambda  \Fsf_\Brm \Big] \Csf_\mathrm{B}^\top .\label{eq:qubitK}
\end{align}

\subsubsection{Explicit form of the Bayesian $\lambda$-posterior variance bound $\Ccal_{\mathrm{B}}^{(\la)}$}
By definition, we get the formula of the Bayesian $\lambda$-posterior variance bound for the general qubit model. 
\begin{theorem}\label{thm:Clam}
The Bayesian $\lambda$-posterior variance bound for the general qubit model is given by
\begin{multline*}
   \Ccal_\mathrm{B}^{(\la)}(\Wsf) = \Tr [\Wsf\Csf_\pi]- 
\frac{1}{1-\la^2 |s_\mathrm{B}|^2}\Big[ \Tr{[\Csf_\mathrm{B}^\top\Wsf\Csf_\mathrm{B}]}\\
+\frac{1-\la^2}{1-|s_\mathrm{B}|^2}\bra{s_\mathrm{B}} \Csf_\mathrm{B}^\top\Wsf \Csf_\mathrm{B} \ket{s_\mathrm{B}}
 -|\la| \Tr \big| \Wsf^{\frac12}\Csf_\mathrm{B} \Fsf_\Brm \Csf_\mathrm{B}^\top\Wsf^{\frac12}  \big| \Big] .
\end{multline*}
\end{theorem}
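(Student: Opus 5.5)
The plan is to substitute the closed form \eqref{eq:qubitK} for $\Ksf^{(\la)}_\Brm$ directly into Definition~\ref{def:BlaLD} and simplify the three terms one at a time.

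\emph{Constant part.} First, $\Msf-\ketbra{\mu}{\mu}=\Csf_\pi$, since $\Msf^{ij}-\mu^i\mu^j$ is exactly the prior covariance. The rank-one piece $\ketbra{\mu}{\mu}$ of $\Ksf^{(\la)}_\Brm$ is real. Therefore
\[
\Tr[\Wsf\Msf]-\Tr[\Wsf\Real\Ksf^{(\la)}_\Brm]=\Tr[\Wsf\Csf_\pi]-\Tr[\Wsf\Real(\Ksf^{(\la)}_\Brm-\ketbra{\mu}{\mu})].
\]

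\emph{Real and imaginary parts.} Next I would separate real and imaginary parts in the second line of \eqref{eq:qubitK}. The factor $1/(1-\la^2|s_\Brm|^2)$ is real and positive, because $|s_\Brm|\le 1$ and we may take $|s_\Brm|<1$ by full rank of $\rho_\Brm$. The matrices $\Csf_\Brm$, $I$, $\ketbra{s_\Brm}{s_\Brm}$ and $\Fsf_\Brm$ are all real. Hence
\[
\Real(\Ksf^{(\la)}_\Brm-\ketbra{\mu}{\mu})=\frac{1}{1-\la^2|s_\Brm|^2}\,\Csf_\Brm\Big[I+\frac{1-\la^2}{1-|s_\Brm|^2}\ketbra{s_\Brm}{s_\Brm}\Big]\Csf_\Brm^\top ,
\]
\[
\Imag\Ksf^{(\la)}_\Brm=-\frac{\la}{1-\la^2|s_\Brm|^2}\,\Csf_\Brm\Fsf_\Brm\Csf_\Brm^\top .
\]

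\emph{Trace terms.} Using cyclicity of the trace, $\Tr[\Wsf\Csf_\Brm\Csf_\Brm^\top]=\Tr[\Csf_\Brm^\top\Wsf\Csf_\Brm]$ and $\Tr[\Wsf\Csf_\Brm\ketbra{s_\Brm}{s_\Brm}\Csf_\Brm^\top]=\bra{s_\Brm}\Csf_\Brm^\top\Wsf\Csf_\Brm\ket{s_\Brm}$. These give the first two terms inside the bracket.

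\emph{Trace-norm term.} The trace norm is absolutely homogeneous, $\Tr|aX|=|a|\Tr|X|$ for real $a$. Applying this to the imaginary part gives
\[
\Tr|\Wsf^{\frac12}\Imag\Ksf^{(\la)}_\Brm\Wsf^{\frac12}|=\frac{|\la|}{1-\la^2|s_\Brm|^2}\,\Tr|\Wsf^{\frac12}\Csf_\Brm\Fsf_\Brm\Csf_\Brm^\top\Wsf^{\frac12}|.
\]
This enters the bound with a plus sign. Pulling the common factor outside the bracket turns it into the $-|\la|\Tr|\cdots|$ term inside the bracket.

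\emph{Main obstacle: justifying \eqref{eq:qubitK}.} The paper omits this computation, so for completeness I would verify it as follows.
\begin{itemize}
\item Write $D^j_\Brm=\tfrac12(\mu^j I+d^j\cdot\sigma)$ with $d^j=\int\pi\,\theta^j s_\ta$.
\item Use the ansatz $E^{(\la),j}=a^jI+b^j\cdot\sigma$ in \eqref{eq:lambdaposteriormeaneq}. The Pauli product gives $\rho_\Brm\sigma_k$ and $\sigma_k\rho_\Brm$ terms, and the $\la$-asymmetric combination produces an $i\la\, s_\Brm\times b$ contribution, i.e.\ $i\la\Fsf_\Brm b$ up to sign.
\item Solve the linear system: $a^j+s_\Brm\cdot b^j=\mu^j$ and $a^j s_\Brm+b^j+i\la(\ldots)=d^j$. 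Eliminating $a^j$ yields $(I-\ketbra{s_\Brm}{s_\Brm}+i\la\Fsf_\Brm)b^j=d^j-\mu^js_\Brm=\Csf_\Brm^{j\cdot}$.
\item Compute $\tr[D^{i\dagger}E^j]$ and check that it equals $\mu^i\mu^j+\Csf_\Brm^{i\cdot}(I-\ketbra{s_\Brm}{s_\Brm}+i\la\Fsf_\Brm)^{-1}\Csf_\Brm^{j\cdot\top}$.
\item Invert $A=I-\ketbra{s}{s}+i\la\Fsf$. Use $\Fsf^2=\ketbra{s}{s}-|s|^2I$ and $\Fsf s=0$, and try the ansatz $A^{-1}=\alpha I+\beta\ketbra{s}{s}+\gamma\Fsf$. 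Solving for $\alpha,\beta,\gamma$ should give $\alpha=1/(1-\la^2|s|^2)$, $\gamma=-i\la\alpha$, and $\beta=\alpha(1-\la^2)/(1-|s|^2)$.
\end{itemize}
The sign conventions for $\Fsf_\Brm$ are the delicate point, but they do not affect the final formula, since only $|\la|$ appears.
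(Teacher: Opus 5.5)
Your proposal is correct and follows the paper's own route: the paper obtains the theorem by substituting the closed form \eqref{eq:qubitK} into Definition~\ref{def:BlaLD}, using $\Msf-\ketbra{\mu}{\mu}=\Csf_\pi$, the real/imaginary split, and homogeneity of the trace norm with $1-\la^2|s_\Brm|^2>0$, exactly as you do. Your verification of \eqref{eq:qubitK}, which the paper omits, also checks out: with the paper's convention $\Fsf_\Brm b=s_\Brm\times b$, the sign is exactly $+i\la\Fsf_\Brm$, and your values of $\alpha,\beta,\gamma$ for the inverse are right.
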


\subsubsection{Maximum posterior variance bound for the qubit model}
To obtain the maximum posterior variance bound, we need to perform optimization over $\lambda$ of the bound given in Theorem \ref{thm:Clam}. This is done in Appendix \ref{sec:appC}. 
The possible optimal choices of $\lambda$ are $\la=0,\pm1$, or $\la_-$, which is defined below.  
Define the projector onto the orthogonal direction to $\ket{s_\mathrm{B}}$ by 
\begin{equation}
P_{\mathrm{B}}^\bot:=I-\frac{1}{|s_\mathrm{B}|^2} \ketbra{s_\mathrm{B}}{s_\mathrm{B}}, 
\end{equation}
for non-zero $\ket{s_\mathrm{B}}$. 
The ratio 
\begin{equation}
\Delta_\mathrm{B}(\Wsf):=
\begin{cases}
\infty& \mbox{for }\Tr \big| \Wsf^{\frac12}\Csf_\mathrm{B} \Fsf_\Brm \Csf_\mathrm{B}^\top\Wsf^{\frac12}  \big|=0\\[1.5ex]
\displaystyle\frac{\Tr[\Wsf^{\frac12}\Csf_\mathrm{B} P^\bot_\Brm \Csf_\mathrm{B}^\top\Wsf^{\frac12}]}{\Tr \big| \Wsf^{\frac12}\Csf_\mathrm{B} \Fsf_\Brm \Csf_\mathrm{B}^\top\Wsf^{\frac12}  \big| }& \mbox{otherwise}
\end{cases},
\end{equation}
plays a crucial role in the following discussion.

The maximum posterior variance bound is given as follows. 
The details of the classification are written in Appendix \ref{sec:appC}.
\begin{theorem} \label{thm:qubit_formula}
$\Ccal_\mathrm{B}^{(\la)}(\Wsf)$ is $\lambda$ independent if and only if $s_\mathrm{B}=0$. 
For non zero $s_\mathrm{B}\neq0$, we have three cases, 
\begin{equation*}
\Ccal_{\mathrm{B}}^{\max}(\Wsf)=\begin{cases}
\Ccal_\mathrm{B}^{(\la=0)}(\Wsf)& \mbox{ for } \Tr \big| \Wsf^{\frac12}\Csf_\mathrm{B} \Fsf_\Brm \Csf_\mathrm{B}^\top\Wsf^{\frac12}  \big|=0 \\[1ex]
\Ccal_\mathrm{B}^{(\la=\pm1)}(\Wsf)& \mbox{ for }\Delta_\mathrm{B}(\Wsf)\leq\frac12\left(1+\frac{1}{|s_\mathrm{B}|^2}\right) \\[1ex]
\Ccal_\mathrm{B}^{(\la=\la_-)}(\Wsf)& \mbox{ otherwise }
\end{cases},
\end{equation*}
where $\lambda_-$ is
\begin{equation}
\lambda_-:=\Delta_\mathrm{B}(\Wsf)-\sqrt{\Delta_\mathrm{B}(\Wsf)^2-\frac{1}{|s_\mathrm{B}|^2}}.
\end{equation}
\end{theorem}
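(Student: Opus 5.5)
The plan is to reduce $\Ccal_\mathrm{B}^{(\la)}(\Wsf)$ in Theorem~\ref{thm:Clam} to a scalar function of $t:=|\la|\in[0,1]$ and minimize it by elementary calculus. Set $r:=|s_\mathrm{B}|^2\in[0,1)$. Define
\[
A:=\Tr[\Csf_\mathrm{B}^\top\Wsf\Csf_\mathrm{B}],\qquad B:=\bra{s_\mathrm{B}}\Csf_\mathrm{B}^\top\Wsf\Csf_\mathrm{B}\ket{s_\mathrm{B}},\qquad F:=\Tr\big|\Wsf^{\frac12}\Csf_\mathrm{B}\Fsf_\Brm\Csf_\mathrm{B}^\top\Wsf^{\frac12}\big|,
\]
and $A_\perp:=\Tr[\Wsf^{\frac12}\Csf_\mathrm{B}P^\perp_\Brm\Csf_\mathrm{B}^\top\Wsf^{\frac12}]=A-B/r$ for $s_\mathrm{B}\neq0$. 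Then $\Ccal_\mathrm{B}^{(\la)}(\Wsf)=\Tr[\Wsf\Csf_\pi]-g(t)$, where $g(t)=[A+\tfrac{1-t^2}{1-r}B-tF]/(1-rt^2)$. Since $\la$ enters only through $\la^2$ and $|\la|$, maximizing the bound is the same as minimizing $g$ on $[0,1]$.

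\emph{Algebraic simplification.} Substituting $A=A_\perp+B/r$ and combining $B/r+(1-t^2)B/(1-r)=B(1-rt^2)/(r(1-r))$ gives
\[
g(t)=\frac{B}{r(1-r)}+h(t),\qquad h(t):=\frac{A_\perp-tF}{1-rt^2}.
\]
The first term is $\la$-independent, so only $h$ matters.

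\emph{The case $s_\mathrm{B}=0$.} Then $\Fsf_\Brm=0$, so $F=0$, and the $B$-terms vanish. In the original formula $g(t)=A$ for all $t$, so the bound is constant in $\la$. For the converse, I would argue that for $s_\mathrm{B}\neq0$ the function $h$ is non-constant, using the case analysis below. I expect the genuinely degenerate subcase $A_\perp=F=0$ to need a separate remark, or a nondegeneracy assumption, and this is the point I would check most carefully.

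\emph{The case $F=0$ with $s_\mathrm{B}\neq0$.} Here $h(t)=A_\perp/(1-rt^2)$ with $A_\perp\ge0$, which is nondecreasing in $t$. So the minimum is at $t=0$, giving $\la=0$.

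\emph{The case $F>0$.} Write $h=F\,(\Delta-t)/(1-rt^2)$ with $\Delta=\Delta_\mathrm{B}(\Wsf)$. Differentiating, the sign of $h'(t)$ is that of $-(rt^2-2r\Delta t+1)=-r\,q(t)$, where $q(t):=t^2-2\Delta t+1/r$.
\begin{itemize}
\item If $q$ has no real roots, then $h$ is strictly decreasing and the minimum is at $t=1$.
\item If $q$ has real roots $t_\mp=\Delta\mp\sqrt{\Delta^2-1/r}$, then $\Delta\ge 1/|s_\mathrm{B}|>1$, and since $t_-t_+=1/r>1$ we have $t_+>1$. Thus $h$ decreases on $[0,t_-]$ and increases on $[t_-,t_+]$.
\item Consequently the minimizer on $[0,1]$ is $t_-=\la_-$ if $t_-\le1$, and $t=1$ otherwise.
\end{itemize}
Since $1<t_+$, the condition $t_-\le 1$ is equivalent to $q(1)\le0$, that is, $\Delta\ge\frac12(1+1/r)$. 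When the roots are not real we have $\Delta<1/|s_\mathrm{B}|\le\frac12(1+1/r)$ by AM--GM, so this case is also covered by the $\la=\pm1$ branch.

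At the boundary $\Delta=\frac12(1+1/r)$ we have $t_-=1$, so both formulas agree and the non-strict inequality in the statement is consistent. Translating $t=1$ to $\la=\pm1$ and $t=t_-$ to $\la=\la_-$ yields the three cases of the theorem. I expect the main obstacle to be the careful bookkeeping of these cases, in particular ensuring that $\Delta\ge1$ whenever real roots exist. The relation $t_-t_+=1/r$ settles this without any positivity argument on $\Ksf^{(\la)}_\Brm$.
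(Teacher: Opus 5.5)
Your analysis of the three cases is correct and follows essentially the paper's route. The paper likewise differentiates in $\la\ge0$, reduces the sign of the derivative to the quadratic $\la^2-2\Delta_\Brm(\Wsf)\la+1/|s_\Brm|^2$, uses $\la_+>1$, and compares $\la_-$ with $1$. The one difference is how real roots are handled. The paper first proves $\Delta_\Brm(\Wsf)\ge 1/|s_\Brm|$, using positivity of $P^\bot_\Brm+i\Fsf_\Brm/|s_\Brm|$ (eigenvalues $0,0,2$) conjugated by $\Wsf^{1/2}\Csf_\Brm$, so complex roots never occur. You instead treat that case directly via AM--GM and obtain $t_+>1$ from $t_-t_+=1/r$. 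Your split $g=B/(r(1-r))+h$ also isolates the $\la$-dependence more cleanly than differentiating the full expression.

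Your worry about the converse of the first claim is justified. It is a defect of the statement, not of your argument.
\begin{itemize}
\item Since $A_\perp=\|\Wsf^{1/2}\Csf_\Brm P^\bot_\Brm\|_{\mathrm{HS}}^2$ and $\Fsf_\Brm s_\Brm=0$, $A_\perp=0$ already forces $F=0$. So for $s_\Brm\neq0$ your decomposition shows that $\Ccal^{(\la)}_\Brm(\Wsf)$ is $\la$-independent exactly when $\Wsf^{1/2}\Csf_\Brm P^\bot_\Brm=0$.
\item This does happen with $s_\Brm\neq0$. A $\ta$-independent model $\rho_\ta\equiv\rho_0$ with nonzero Bloch vector has $\Csf_\Brm=0$.
\item Another case uses the Model~A data at $\ep=0$, $\mu\neq0$ with the rank-one weight $\Wsf=\frac12\begin{pmatrix}1&1\\1&1\end{pmatrix}$. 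There the bound equals $v-v^2/(1-2\mu^2)$ for every $\la$, even though $\Ksf^{(\la)}_\Brm$ itself depends on $\la$.
\item The paper's argument for this direction relies on the equivalence $\partial_\la\bigl[\Csf_\Brm M(\la)^{-1}\Csf_\Brm^\top\bigr]=0\iff\partial_\la M(\la)=0$, with $M(\la)=I-\ketbra{s_\Brm}{s_\Brm}+i\la\Fsf_\Brm$. That equivalence fails for degenerate $\Csf_\Brm$. Its preceding step, that $\Ccal^{(\la)}_\Brm$ is $\la$-independent iff $\Ksf^{(\la)}_\Brm$ is, fails for degenerate $\Wsf$.
\end{itemize}

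So you should state the ``only if'' part under the hypothesis $\Wsf^{1/2}\Csf_\Brm P^\bot_\Brm\neq0$, and your analysis then proves it. For $F=0$ one has $A_\perp>0$, so $h$ is strictly increasing. For $F>0$ the derivative $h'$ is a nonzero multiple of the monic quadratic $q$, so it cannot vanish identically.
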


\section{Examples}\label{sec:example}
In the example section, we analyze four models. 
In model A, three different optimal choices for the $\lambda$ are possible depending on the choice for the prior. 
In model B, the Bayesian SLD posterior variance bound is the best choice. In model C, on the other hand, the Bayesian RLD posterior variance bound is the optimal choice. The last example, model D, is a $\lambda$-independent one. 
To simplify the result, we set the weight matrix as $\Wsf=\Isf$ and denote the bound $\Ccal_\mathrm{B}^{(\la)}=\Ccal_\mathrm{B}^{(\la)}(I)$. The results are given in the main text and calculations will be given in Appendix \ref{sec:App_Exs}. 

\subsection*{Model A: The maximum posterior variance bound}
The scenario in which the maximum posterior variance bound exceeds both the Bayesian SLD and RLD posterior variance bounds provides a significant clue about the importance of this maximum posterior variance bound. This occurs in the standard two-parameter model where the linear Pauli model is characterized by a fixed parameter $\ep$ ($0 \leq \ep <  1$). 
(Negative $\ep$ case can be treated in the same manner.)
\begin{align*}
    \rho_\ta=\half (I+\ta^1 \sigma_1 +\ta^2 \sigma_2 +\ep \sigma_3),
\end{align*}
with the parameter space is $\Theta=\{ \ta \in \R^2 \,\big|\, |\ta|^2 \leq 1-\ep^2\}$. 
The prior is distributed independently as the cross item of \(\Csf_\pi\) is zero. If we assume the two parameters are independently and identically distributed (i.i.d.), the model is characterized by three numbers. These are the covariance matrix for the parameter and the expectation value
\begin{align*}
    \pi_1=\pi_2, ~ \Csf_\pi = \begin{pmatrix}
        v & 0 \\ 0 & v
    \end{pmatrix}, ~
    \ket{\mu} = \begin{pmatrix}
        \mu \\ \mu
    \end{pmatrix},
\end{align*}
and the parameter $\epsilon$. In the following, we analyze the maximum posterior variance bound as a function of $(v,\mu,\ep)$.
 
We take the calculation in Appendix~\ref{App:modelA}. The covariance matrix appears in both the numerator and denominator and cancels out. The result is shown as follows. 
\begin{equation*}
    \Ccal^{\max}_\mathrm{B} = 
    \begin{cases}
        \Ccal^{(\la=0)}_\mathrm{B} \quad &\mbox{for }\ep=0\\[1ex]
        \Ccal^{(\la=\pm 1)}_\mathrm{B}   \quad&\mbox{for }\ep>0 \mbox{ and } 2\mu^2 \leq \ep (1-\ep)\\[1ex]
        \Ccal^{(\la= \la_-)}_\mathrm{B}\quad   &\mbox{for }\ep>0 \mbox{ and } \ep(1-\ep) < 2\mu^2 < 1-\ep^2
    \end{cases},
\end{equation*}

and
\begin{align*}
    \la_- &= \frac{ \ep }{2 \mu^2 +\ep^2},\\
\Ccal_\mathrm{B}^{(\la=\la_-)} &= 2v- v^2 \left[3-\frac{2 \mu^2}{(2\mu^2 + \ep^2)(1- 2\mu^2 -\ep^2)}\right],\\
    \Ccal_\mathrm{B}^{(\la=1)} & = 2 v - 2 v^2 \frac{ 1-\ep  }{ 1-2\mu^2 -\ep^2 },\\
    \Ccal_\mathrm{B}^{(\la=0)} & =2v-2v^2\left[1+\frac{\mu^2}{1-2\mu^2-\ep^2} \right].
\end{align*}
Note that the optimal choice for $\lambda$ does not depend on the covariance $v$. 
This result allows for clear visualization of the maximum posterior variance bound in the $\mu,\ep$ plot. This is shown in Fig.~\ref{fig:max}. The red region is when the RLD posterior variance bound takes the lead, whereas the blue area is when $\la_-$ is the optimal solution which means the maximum posterior variance bound has a non-trivial $\lambda$ choice. Additionally, we provide the explanation of the boundaries, which are given by green lines. When $\ep=0$, the SLD posterior variance ($\la=0$) is always optimal. 
For $\mu=0$ with $\ep>0$, the condition $2\mu^2\le \ep(1-\ep)$ is automatically satisfied, hence the RLD posterior variance bound ($\la=1$) is optimal. 
The $\la$-independent case occurs only at $s_\mathrm{B}=0$, i.e., $(\mu,\ep)=(0,0)$.

\usepgfplotslibrary{fillbetween}
\begin{figure}[h]
\begin{center}
\begin{tikzpicture}[scale=1]
\begin{axis}[
    xlabel={{\large $\ep$}},
    ylabel={{\large $2\mu^2$}},
    title={}]
    \addplot [blue,name path=A,domain=0:1] {1-x*x};
 
    \addplot [red, name path=B,domain=0:1] {x*(1-x)};

    \addplot [red, line width = 1, name path=C,domain=0:1] {0};
    
    \addplot[green, line width = 1, samples=50, smooth,domain=0:1] coordinates {(0,0)(0,1)};

    \addplot [blue!50, fill opacity=0.5] 
        fill between[of=A and B];
    \addplot [red!50, fill opacity=0.5] 
        fill between[of=B and C];
\end{axis}
\end{tikzpicture}
\caption{Phase diagram for the optimal choice of maximum posterior variance bound. The blue region: $\la_-$ is optimal. The red region: RLD ($\la=1$) is optimal. The green lines: SLD ($\la=0$) is optimal.} \label{fig:max}
\end{center}
\end{figure}
We have a few remarks on the result based on the phase diagram in Fig.~\ref{fig:max}. 
\begin{itemize}
    \item For any fixed $\ep$, which means considering a line parallel to the $y$ axis, there exists a segment in the blue area at all $\ep$. This indicates the wide-ranging applicability of the maximum posterior variance bound.
    \item For any given $\mu$ with $2\mu^2\ge \frac{1}{4}$, whenever the point lies in the admissible region $2\mu^2<1-\ep^2$, we have $\la_-$ as the optimal choice. In other words, within the physical parameter region the maximum posterior variance bound dominates the SLD and RLD posterior variance bounds.
    \item On the other hand, when $2 \mu^2 < \frac{1}{4}$, the Bayesian RLD posterior variance bound is the preferred option.
    \item The Bayesian SLD posterior variance bound is not the optimal choice except for the special case when $\ep=0$.
    \item The expression of $\la_-$ is smooth at the boundary of the blue region. As $\ep \rightarrow 0$, $\la_- $ approaches 0, corresponding to the Bayesian SLD posterior variance bound. In addition, as $2 \mu^2 \rightarrow \ep(1-\ep)$, $\la_-$ approaches 1, representing the Bayesian RLD posterior variance bound. 
\end{itemize} 
We visualize the 3D envelope of the three candidates (SLD, RLD, and the $\lambda_-$ choice) in Fig.~\ref{fig:max3d}, and plot the corresponding gaps $C_{\mathrm{B}}^{(\lambda = \lambda_-)}-C_{\mathrm{B}}^{(\lambda=1)}$ and $C_{\mathrm{B}}^{(\lambda = \lambda_-)}-C_{\mathrm{B}}^{(\lambda=0)}$ in Figs.~\ref{fig:diffrld} and~\ref{fig:diffsld}, respectively.
\begin{figure}[!htb]
  \centering
  \includegraphics[width=\linewidth]{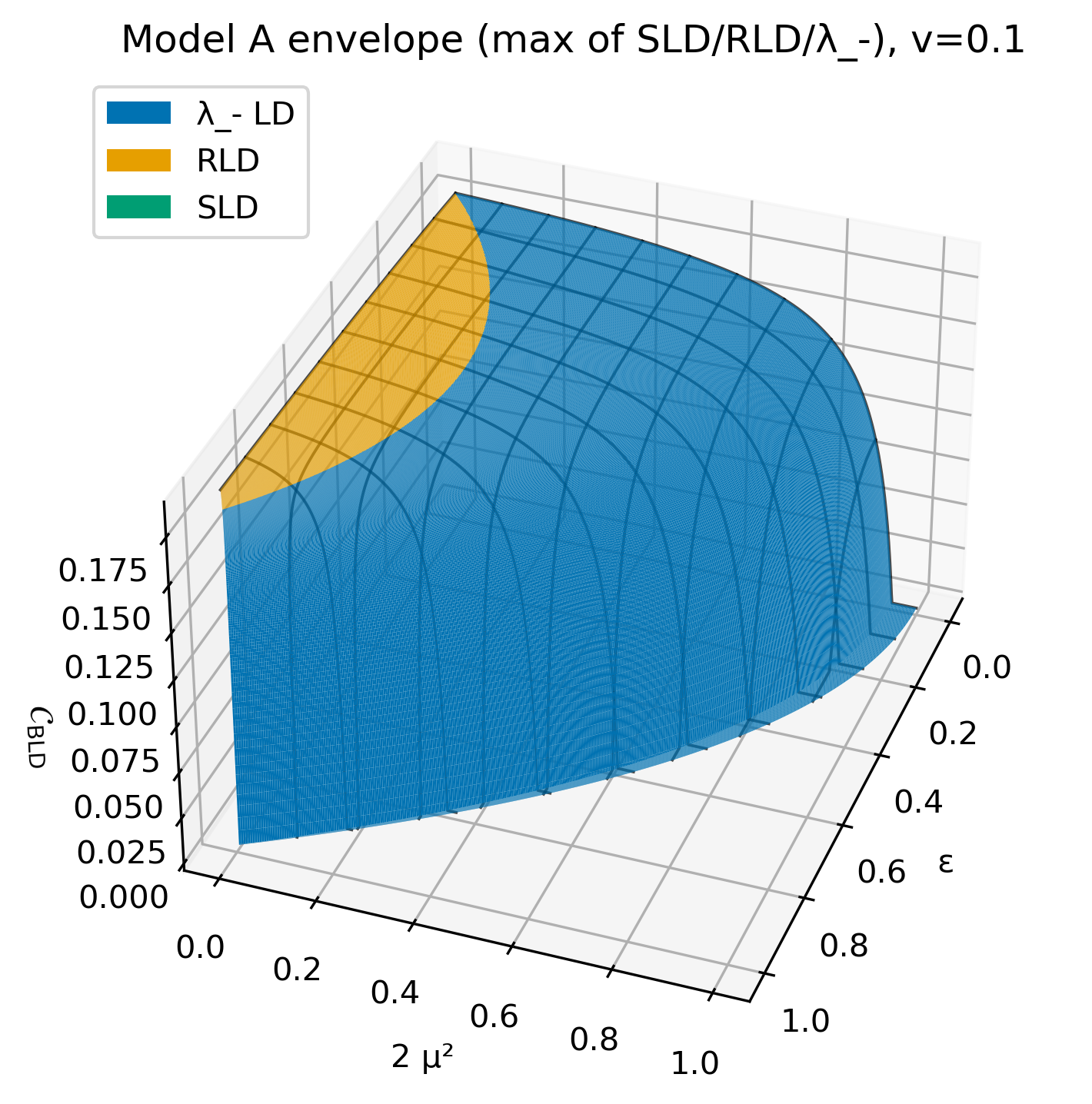}
   \caption{Model A: 3D envelope (pointwise maximum) of the three quantum posterior variance bounds.
The horizontal axes are $2\mu^2$ and $\epsilon$, and the vertical axis shows the envelope value
$C_{\mathrm{B}}^{\max}=\max\!\big\{C_{\mathrm{B}}^{(\lambda=0)},\,C_{\mathrm{B}}^{(\lambda=1)},\,C_{\mathrm{B}}^{(\lambda=\lambda_-)}\big\}$ (here $v=0.1$).
The surface is colored by the bound that attains the maximum at each $(2\mu^2,\epsilon)$:
blue for the $\lambda_-$ choice, orange for the RLD posterior variance bound ($\lambda=1$), and green for the SLD posterior variance bound ($\lambda=0$).}
\label{fig:max3d}
\end{figure}
\begin{figure}[!htb]
  \centering
  \includegraphics[width=\linewidth]{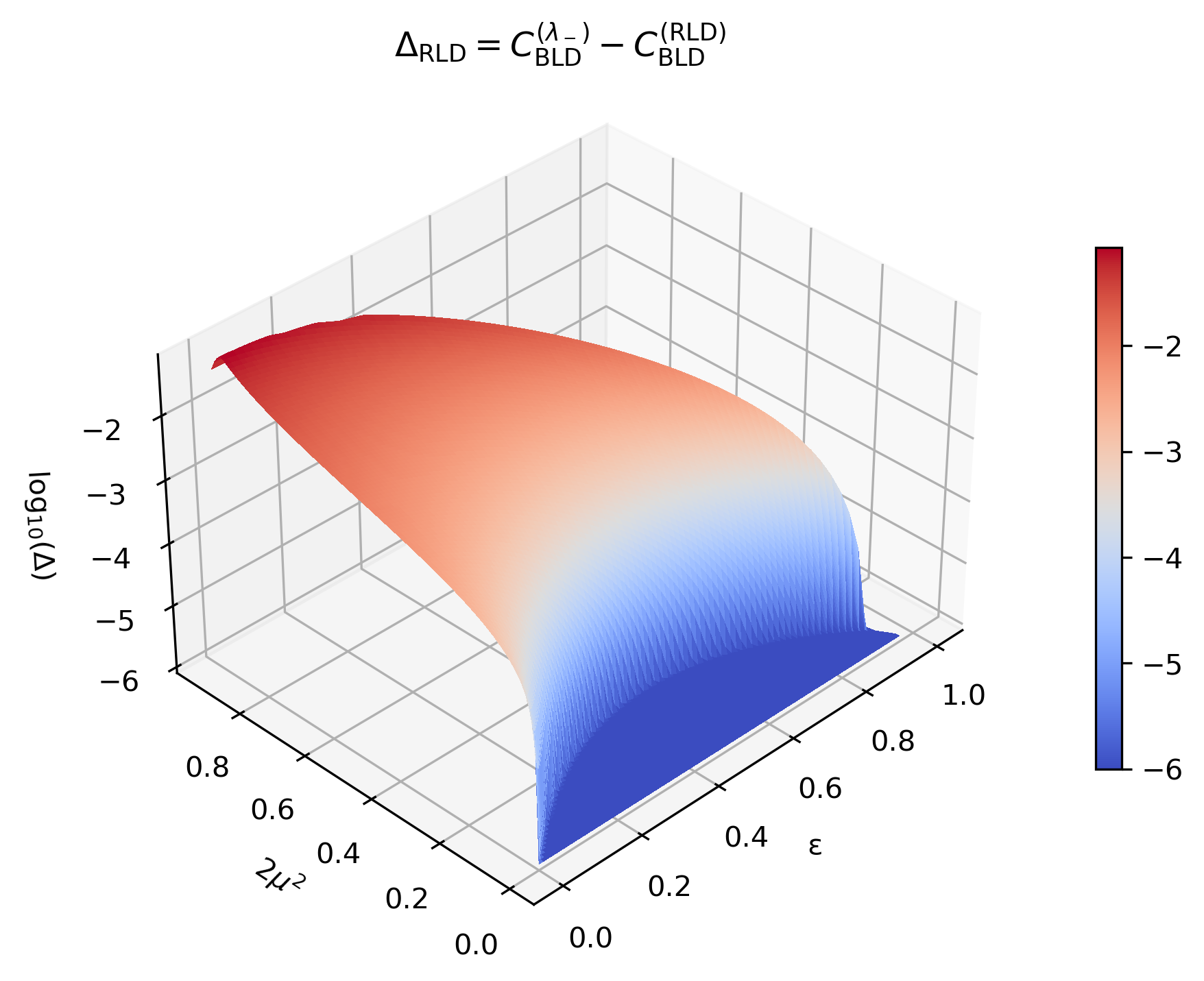}
   \caption{Model A: log-scale gap between the $\lambda_-$ and RLD posterior variance bounds.
The horizontal axes are $2\mu^2$ and $\epsilon$ (with $v=0.1$), and the plotted quantity is
$\log_{10}(\Delta_{\mathrm{RLD}})$, where
$\Delta_{\mathrm{RLD}} := C_{\mathrm{B}}^{(\lambda_-)} - C_{\mathrm{B}}^{(\mathrm{RLD})}$.
For visualization we clip $\Delta_{\mathrm{RLD}}$ from below at $10^{-6}$, i.e., we plot $\log_{10}(\max\{\Delta_{\mathrm{RLD}},10^{-6}\})$.
More negative values indicate a smaller improvement of the $\lambda_-$LD bound over the RLD posterior variance bound.}
  \label{fig:diffrld}
\end{figure}
\begin{figure}[!htb]
  \centering
  \includegraphics[width=\linewidth]{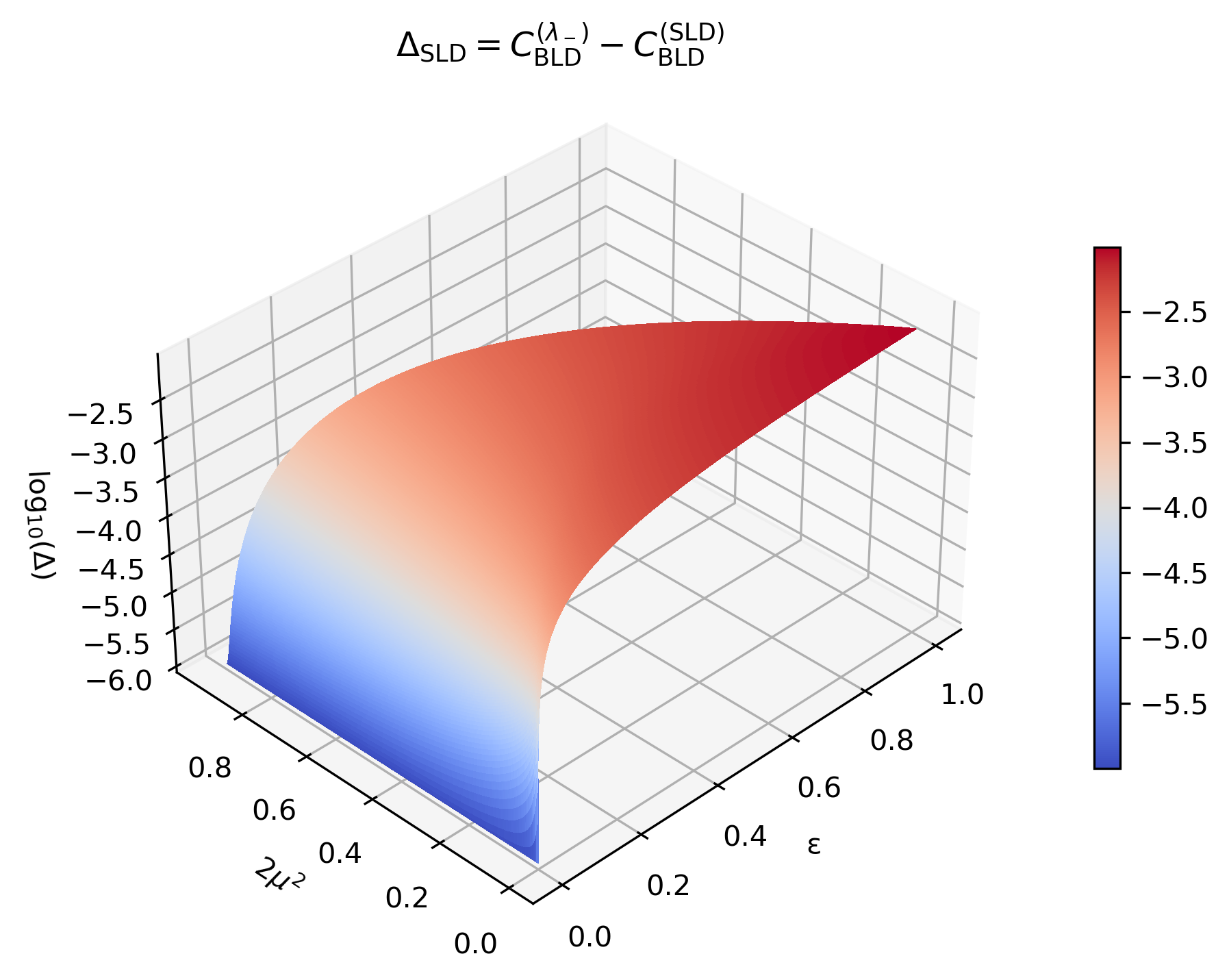}
   \caption{Model A: log-scale gap between the $\lambda_-$ and SLD posterior variance bounds.
The horizontal axes are $2\mu^2$ and $\epsilon$ (with $v=0.1$), and the plotted quantity is
$\log_{10}(\Delta_{\mathrm{SLD}})$, where
$\Delta_{\mathrm{SLD}} := C_{\mathrm{B}}^{(\lambda_-)} - C_{\mathrm{B}}^{(\mathrm{SLD})}$.
For visualization we clip $\Delta_{\mathrm{SLD}}$ from below at $10^{-6}$, i.e., we plot $\log_{10}(\max\{\Delta_{\mathrm{SLD}},10^{-6}\})$.
More negative values indicate a smaller improvement of the $\lambda_-$ posterior variance bound over the SLD posterior variance bound.}
  \label{fig:diffsld}
\end{figure}

\subsection*{Model B: The Bayesian SLD posterior variance bound}
In this part, we present a two-parameter qubit model in which the maximum posterior variance bound always coincides with the Bayesian SLD posterior variance bound. Consider the unitary model
\begin{align}
    \rho_\ta=\half\Bigl(I+\ta^1(\cos\ta^2\,\sigma_1+\sin\ta^2\,\sigma_2)\Bigr),
    \label{eq:modelB_state}
\end{align}
where $\ta^1\in(0,1)$ and $\ta^2\in[0,2\pi)$. The corresponding Bloch vector is
\begin{equation}
    s_\ta=\begin{pmatrix}\ta^1\cos\ta^2\\ \ta^1\sin\ta^2\\ 0\end{pmatrix},
\end{equation}
whose third component vanishes identically. 

We calculate the Bayesian $\lambda$-posterior variance bound in Appendix~\ref{App:modelB}. Here we only highlight the key structural reason why $\la=0$ is always optimal. 
Assume an independent prior
\begin{equation}
    \pi(\ta)=\pi_1(\ta^1)\pi_2(\ta^2).
\end{equation}
Let
\begin{equation}
    \ket{s_\mathrm{B}}:=\int_\Theta d\ta\,\pi(\ta)\,s_\ta
    =\begin{pmatrix}\mu_1 \csf\\ \mu_1 \ssf\\ 0\end{pmatrix},
    \qquad
    \mu^i:=\int_\Theta d\ta\,\pi(\ta)\,\ta^i,
\end{equation}
where we introduced the $\ta^2$-averages
\begin{align}
    \begin{pmatrix}\csf\\ \ssf\end{pmatrix}
    &:=\int_\Theta d\ta^2\,\pi_2(\ta^2)\begin{pmatrix}\cos\ta^2\\ \sin\ta^2\end{pmatrix}, \\
    \begin{pmatrix}\csf_2\\ \ssf_2\end{pmatrix}
    &:=\int_\Theta d\ta^2\,\pi_2(\ta^2)\ta^2\begin{pmatrix}\cos\ta^2\\ \sin\ta^2\end{pmatrix}.
\end{align}
Moreover, the matrix $\Csf_\mathrm{B}\in\mathbb{R}^{2\times 3}$ defined by
$\Csf_\mathrm{B}^{ij}:=\int_\Theta d\ta\,\pi(\ta)\,\ta^i s_\ta^j-\mu^i s_\mathrm{B}^j$
takes the explicit form
\begin{equation}
    \Csf_\mathrm{B}=
    \begin{pmatrix}
        \Csf_{\pi,11}\csf & \Csf_{\pi,11}\ssf & 0\\
        \mu_1(\csf_2-\mu_2\csf) & \mu_1(\ssf_2-\mu_2\ssf) & 0
    \end{pmatrix},
    \label{eq:modelB_CB}
\end{equation}
whose third column is identically zero. Since $s_\mathrm{B}^3=0$, the antisymmetric matrix $\Fsf_\mathrm{B}$ defined from $\ket{s_\mathrm{B}}$ has the structure
\begin{equation}
    \Fsf_\mathrm{B}=
    \begin{pmatrix}
        0&0&s_\mathrm{B}^2\\
        0&0&-s_\mathrm{B}^1\\
        -s_\mathrm{B}^2&s_\mathrm{B}^1&0
    \end{pmatrix}.
\end{equation}
As a consequence, we have
\begin{equation}
    \Csf_\mathrm{B}\Fsf_\mathrm{B}\Csf_\mathrm{B}^\top=\mathbf{0},
    \qquad
    \Tr\Bigl|\Wsf^{\frac12}\Csf_\mathrm{B}\Fsf_\mathrm{B}\Csf_\mathrm{B}^\top\Wsf^{\frac12}\Bigr|=0.
\end{equation}
Therefore, by the first case of Theorem~\ref{thm:qubit_formula}, the maximum posterior variance bound always equals the Bayesian SLD posterior variance bound:
\begin{align}
    \Ccal_\mathrm{B}^{\max}(\Wsf)=\Ccal_\mathrm{B}^{(\la=0)}(\Wsf).
\end{align}

Finally, using the explicit expression of $\Ksf_\Brm^{(\la=0)}$ given in Appendix~\ref{App:modelB}, the bound can be written in a compact closed form as follows. 
Introduce the $2\times 2$ matrix $\bar{\Csf}$ and the two-dimensional vector $\bar{s}$ by taking the first two components (since $s_\mathrm{B}^3=0$ in this model):
\begin{align}
    \bar{\Csf}&:=\!
    \begin{pmatrix}
        \Csf_{\pi,11}\csf&\Csf_{\pi,11}\ssf\\
        \mu_1(\csf_2-\mu_2\csf)&\mu_1(\ssf_2-\mu_2\ssf)
    \end{pmatrix},\
    \ket{\bar{s}} \! :=\!
    \begin{pmatrix}s_\mathrm{B}^1\\ s_\mathrm{B}^2\end{pmatrix}
    =\! \mu_1\begin{pmatrix}\csf\\ \ssf\end{pmatrix}.
\end{align}
Then
\begin{align}
    \Ccal_\mathrm{B}^{\max}(\Wsf)
    &=\Ccal_\mathrm{B}^{(\la=0)}(\Wsf)
    \\
    &=\Tr[ \Wsf \Csf_\pi]-\Tr[\bar{\Csf}^\top \Wsf \bar{\Csf}]
    -\frac{1}{1-\|\bar{s}\|^2}\,\bra{\bar{s}}\,\bar{\Csf}^\top \Wsf \bar{\Csf}\,\ket{\bar{s}}.
\end{align}

\subsection*{Model C: The Bayesian RLD posterior variance bound}
As another example, we consider a three-parameter qubit model in which the choice $\la=\pm1$ (the Bayesian RLD posterior variance bound) is always optimal. 
Let
\begin{align}
    \rho_\ta=\half\bigl(I+s_\ta\cdot\sigma\bigr),
    \qquad
    s_\ta=\begin{pmatrix}\ta^1\\ \ta^2\\ \ta^3\end{pmatrix},
    \qquad |\ta|<1 .
    \label{eq:modelC_state}
\end{align}
Assume an independent prior
\begin{equation}
    \pi(\ta)=\pi_1(\ta^1)\pi_2(\ta^2)\pi_3(\ta^3),
\end{equation}
and moreover $\pi_1=\pi_2=\pi_3$ (i.i.d.), so that the covariance matrix is isotropic:
\begin{equation}
    \Csf_\pi=v\,\Isf_3,\qquad v>0.
\end{equation}
Since $s_\ta=\ta$, the Bayesian mean Bloch vector is $\ket{s_\mathrm{B}}=\int_\Theta d\ta\,\pi(\ta)\, \ta$ and the matrix $\Csf_\mathrm{B}$ in Theorem~\ref{thm:qubit_formula} reduces to the covariance,
\begin{equation}
    \Csf_\mathrm{B}=\Csf_\pi=v\,\Isf_3.
\end{equation}

We take the isotropic weight $\Wsf=\Isf_3$. Then the quantity $\Delta_\mathrm{B}(\Wsf)$ defined in Theorem~\ref{thm:qubit_formula} can be evaluated explicitly (see Appendix~\ref{App:modelC} for details), yielding
\begin{equation}
    \Delta_\mathrm{B}=\frac{1}{\|s_\mathrm{B}\|}\qquad (\|s_\mathrm{B}\|>0).
\end{equation}
Therefore,
\[
\Delta_\mathrm{B}=\frac{1}{\|s_\mathrm{B}\|}\leq \frac12\left(1+\frac{1}{\|s_\mathrm{B}\|^2}\right)
\ \Longleftrightarrow\ 
\left(1-\frac{1}{\|s_\mathrm{B}\|}\right)^2\geq0,
\]
so the second condition in Theorem~\ref{thm:qubit_formula} is always satisfied whenever $\|s_\mathrm{B}\|>0$. 
Hence we are always in the second case of Theorem~\ref{thm:qubit_formula}, and the optimal choice is $\la=\pm1$ (both signs give the same value). 
Consequently, the maximum posterior variance bound coincides with the Bayesian RLD posterior variance bound:
\begin{align}
    \Ccal_\mathrm{B}^{\max}
    =\Ccal_\mathrm{B}^{(\la=\pm1)}
    =\Ccal_\mathrm{B}^{(\la=1)}
    = 3 v - \frac{3v^2 }{1-\|s_\mathrm{B}\|^2}  - 2 v^2 \frac{ 1 }{1 + \|s_\mathrm{B}\|} .
    \label{eq:modelC_Cmax}
\end{align}
For the special point $s_\mathrm{B}=0$, the bound is $\la$-independent by Theorem~\ref{thm:qubit_formula}, so the above conclusion remains valid.

Moreover, the corresponding matrix $\Ksf_\Brm^{(\la)}$ at the optimal choice $\la=1$ is given by
\begin{align}
    \Ksf_\Brm^{(\la=1)}
    = \ketbra{\mu}{\mu} +  \frac{v^2}{1-\|s_\mathrm{B}\|^2}\left(\Isf_3 - i \Fsf_\Brm \right),
    \label{eq:modelC_Klambda}
\end{align}
where $\Fsf_\Brm$ is the $3\times3$ antisymmetric matrix associated with $\ket{s_\mathrm{B}}$ as defined in \eqref{def:Fmatrix}.
(If desired, one may replace $\ketbra{\mu}{\mu}$ by $\ketbra{s_\mathrm{B}}{s_\mathrm{B}}$ since $s_\mathrm{B}=\mu$ in the present model.)

\subsection*{Model D: $\la$-independent case}
By Theorem~\ref{thm:qubit_formula}, $\Ccal_\mathrm{B}^{(\la)}(\Wsf)$ is $\la$-independent if and only if the Bayesian mean Bloch vector vanishes,
\begin{equation}
    \ket{s_\mathrm{B}}=\int_\Theta d\ta\,\pi(\ta)\,s_\ta=\mathbf{0},
\end{equation}
i.e., the averaged state $\bar{\rho}=\int_\Theta d\ta\,\pi(\ta)\rho_\ta$ is the maximally mixed state. 
Under this condition, the antisymmetric matrix $\Fsf_\mathrm{B}$ defined from $\ket{s_\mathrm{B}}$ is identically zero, $\Fsf_\mathrm{B}=\mathbf{0}$, and hence
\begin{equation}
    \Tr\Bigl|\Wsf^{\frac12}\Csf_\mathrm{B}\Fsf_\mathrm{B}\Csf_\mathrm{B}^\top\Wsf^{\frac12}\Bigr|=0.
\end{equation}
Therefore, we are in the first case of Theorem~\ref{thm:qubit_formula} and the Bayesian $\lambda$-posterior variance bound reduces to the $\la$-independent expression
\begin{align}
    \Ccal_\mathrm{B}^{(\la)}(\Wsf)
    =\Tr[\Wsf\,\Csf_\pi]-\Tr\!\big[ \Csf_\mathrm{B}^\top \Wsf\,\Csf_\mathrm{B}\big],
    \qquad (s_\mathrm{B}=0).
    \label{eq:modelD_lambda_indep}
\end{align}
In particular, any choice of $\la\in[-1,1]$ is equivalent (e.g., $\la=0$), and
\[
\Ccal_\mathrm{B}^{\max}(\Wsf)=\Ccal_\mathrm{B}^{(\la=0)}(\Wsf)=\Ccal_\mathrm{B}^{(\la)}(\Wsf).
\]

\section{Conclusion} \label{sec:conclusion}
We developed a metric-based approach to multiparameter quantum Bayesian estimation by introducing Bayesian monotone metrics, obtained by evaluating Petz monotone metrics on the prior-averaged state.
This construction transfers the metric viewpoint of quantum point estimation to the Bayesian setting while keeping the Bayes risk as the objective.

Within this framework, a choice of monotone metric (equivalently, an operator monotone function) specifies Bayesian analogues of quantum posterior mean operators and Fisher-information-type quantities: 
the quantum posterior-mean operators defined through a Bayesian operator equation, and the associated quantum Bayesian dual Fisher-information matrix \(\Ksf_{\mathrm B}\). 
These objects yield a transparent decomposition of the second-moment matrix into a metric-induced information contribution and admits a natural interpretation as a quantum posterior variance term. 
As a consequence, we obtained a systematic family of lower bounds, posterior variance bounds, on the Bayes risk.

We also established a universal ordering between two families of Bayesian lower bounds. 
For every monotone metric, the corresponding posterior variance bound dominates the associated quantum Bayesian Cram\'er--Rao bound, showing that the metric-induced posterior-variance formulation provides uniformly stronger performance limits in this Bayesian setting. 
Moreover, we resolved the main structural challenge of optimizing over the full (infinite-dimensional) family of operator monotone functions: the tightest bound can be obtained by restricting to a one-parameter \(\lambda\)-subfamily. 
This reduction turns the search for the best metric-induced bound into a tractable single-parameter optimization, leading to the maximum posterior variance bound as the best limit in the family.

Finally, through explicit qubit examples we demonstrated that the resulting bounds can be strictly tighter than the Bayesian SLD and RLD posterior variance bounds, with clear quantitative gaps. 
This highlights that importing monotone metrics into Bayesian quantum estimation is not merely a reformulation, but reveals genuinely stronger and systematically organized performance limits.

Several directions are open for future work.
A first question is to characterize attainability and tightness conditions for the maximum posterior variance bound, i.e., to identify when an explicit measurement and estimator can saturate (or asymptotically approach) the bound in multiparameter Bayesian settings.
Second, while the optimization over operator monotone functions reduces to a single real parameter \(\lambda\), it would be valuable to develop analytic criteria for the optimal \(\lambda\) and to obtain closed form solutions for broader classes of models beyond the examples studied here.
Third, the intrinsic metric-dependent term capturing multiparameter measurement incompatibility invites a deeper geometric interpretation, potentially linking Bayesian incompatibility to invariant quantities of the underlying geometry.
Finally, extending the framework to quantum processes and to noisy and multi-copy collective settings including adaptive and sequential Bayesian protocols may further broaden the operational scope of Bayesian monotone metrics and lead to new benchmarks in realistic finite-data regimes.

\section*{Acknowledgements}
This work was supported in part by ERATO ``Super Quantum Entanglement" (Grant No. JPMJER2402) from JST. KY is supported by JSPS KAKENHI Grant Numbers JP23H01090, JP22K03466. JS is supported by JSPS KAKENHI Grant Numbers JP21K11749, JP24K14816.

\bibliography{myref}

\appendix 

\section{Optimal estimator in classical setting}\label{sec:appBayes}
The optimal estimator is known to be given as follows. 
\begin{proposition}[Optimal Bayesian estimator]
For any prior and model, the optimal estimator and Bayes risk are given as
\begin{align*}
\arg\min_{\hattheta} \Rsf_\Brm[\hattheta|\Wsf]&=(\hattheta_{\Brm,i}) ,\\
\min_{\hattheta}\Rsf_\Brm[\hattheta|\Wsf]&=\Tr[\Wsf(\Msf-\Ksf_\Brm)]. 
\end{align*}
\end{proposition}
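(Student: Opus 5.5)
The plan is to reduce the weighted problem to a pointwise (in $x$) minimization through the posterior distribution, and then evaluate the minimal risk using the quantities $d^i_\Brm$ and $P_X$. Since $\Wsf\ge0$ is fixed and does not depend on $x$ or $\theta$, write
\[
\Rsf_\Brm[\hattheta|\Wsf]=\int_\Xcal dx\,P_X(x)\int_\Theta d\ta\,\frac{P(\ta,x)}{P_X(x)}\,(\hattheta(x)-\ta)^\top\Wsf(\hattheta(x)-\ta).
\]
This uses Fubini and the factorization $P(\ta,x)=P_X(x)\,p(\ta|x)$, restricted to outcomes with $P_X(x)>0$; outcomes with $P_X(x)=0$ contribute nothing. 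For each fixed $x$, the inner integral is a quadratic function of the vector $a=\hattheta(x)$. Expanding it around the posterior mean $\hattheta_\Brm(x)$ from Eq.~\eqref{eq:mmse} gives
\[
(a-\hattheta_\Brm(x))^\top\Wsf(a-\hattheta_\Brm(x))+\int_\Theta d\ta\,p(\ta|x)\,(\hattheta_\Brm(x)-\ta)^\top\Wsf(\hattheta_\Brm(x)-\ta).
\]
The cross term vanishes because $\int d\ta\,p(\ta|x)(\hattheta_\Brm(x)-\ta)=0$ by the definition of the posterior mean. Since $\Wsf\ge0$, the first term is nonnegative, so $\hattheta_\Brm$ minimizes the risk for every $x$, and hence minimizes $\Rsf_\Brm$. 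It is the unique minimizer when $\Wsf>0$; for singular $\Wsf$, it is a minimizer up to components in $\ker\Wsf$.

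To evaluate the minimum, I would compute the full MSE matrix $\Vmin=\Vsf_\Brm[\hattheta_\Brm]$ entrywise. Expanding $\Ebb[(\hattheta^i_\Brm-\ta^i)(\hattheta^j_\Brm-\ta^j)]$ gives $\Msf^{ij}$ from the $\ta^i\ta^j$ term. The two cross terms each equal $-\int dx\,\hattheta^i_\Brm(x)\,d^j_\Brm(x)$ (respectively with $i$ and $j$ swapped). The estimator-squared term is $\int dx\,P_X\,\hattheta^i_\Brm\hattheta^j_\Brm$. Using relation \eqref{eq:cLD}, $d^i_\Brm=\hattheta^i_\Brm P_X$, every one of these three terms equals $\int dx\,d^i_\Brm d^j_\Brm/P_X=\Ksf_\Brm^{ij}$. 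Their total contribution is $-2\Ksf_\Brm^{ij}+\Ksf_\Brm^{ij}=-\Ksf_\Brm^{ij}$, so $\Vmin=\Msf-\Ksf_\Brm$. Taking $\Tr[\Wsf\,\cdot\,]$ then gives the claimed minimum, and this also establishes Eq.~\eqref{eq:CoptBayesRisk}.

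The main obstacle is only technical: justifying the interchange of integrals, and handling outcomes where $P_X(x)=0$. I would assume finite second moments of the prior, so that $\Msf$ is finite; this makes all the integrals absolutely convergent. The outcomes with $P_X(x)=0$ form a null set and can be excluded throughout. With those points settled, the rest of the argument is the elementary completion of squares described above.
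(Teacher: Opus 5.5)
Your argument is correct and is essentially the paper's completion of squares around the posterior mean. The paper performs it once, at the level of the full MSE matrix of an arbitrary estimator, $\Vsf_{\Brm}[\hattheta]=\Ebb_X[(\hattheta-\hattheta_{\Brm})(\hattheta-\hattheta_{\Brm})^\top]+\Msf-\Ksf_{\Brm}$ (via $\Ebb[\ta^i\hattheta^j(X)]=\Ebb_X[\hattheta^i_{\Brm}\hattheta^j]$), which gives the minimizer, the minimal value and the $\Wsf$-independent matrix inequality $\Vsf_{\Brm}[\hattheta]\geq\Msf-\Ksf_{\Brm}$ in one step, whereas you complete the square pointwise in $x$ with the weight and then compute $\Vmin$ separately; your remark that the minimizer is unique only up to $\ker\Wsf$ (and $P_X$-null sets) when $\Wsf$ is singular is a correct refinement of the paper's $\arg\min$ phrasing.
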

Remark: This result shows that the Bayesian MSE matrix $\Vsf_\Brm$ can be minimized as a matrix inequality. 
\begin{proof}
Define the expectation value of the parameter with respect to the joint distribution and posterior by 
\begin{align*}
d^i(x)&:=\int_\Theta d\ta\,\theta^iP(\ta,x),\\
\hattheta_{\Brm}^{i}(x)&:= \int_\Theta d\ta\,\theta^i\frac{P(\ta,x)}{P_X(x)}.
\end{align*} 
By definition, they are related by $\hattheta_{\Brm}^{i}P_X=d^i$. 
The Bayesian MSE matrix for any estimator $\hattheta$ is rewritten as
\begin{align}
  \Vsf_{\mathrm{B}}^{ij}[\hattheta]&=  \Ebb \left[ (\hattheta^i(X)-\theta^i) (\hattheta^j(X)-\theta^j) \right] \nonumber\\
  &= \Ebb [\hattheta^i(X)\hattheta^j(X)]-\Ebb[\ta^i \hattheta_j(X)]-\Ebb[\hattheta^i(X)\ta^j ]+\Msf^{ij} \nonumber\\
  &=\Ebb_X[\hattheta^i\hattheta^j]-\Ebb_X[\hattheta_{\Brm}^{i}\hattheta^j]-\Ebb_X[\hattheta^i \hattheta_{\Brm}^{j}]+\Msf^{ij}\nonumber\\
  &=\Ebb_X[(\hattheta^i-\hattheta_{\Brm}^{i})(\hattheta^j-\hattheta_{\Brm}^{j})]-\Ebb_X[\hattheta_{\Brm}^{i}\hattheta_{\Brm}^{j}]+\Msf^{ij}.\label{eq:appBayes1} 
\end{align}
To get the third line, we used 
\begin{align*}
\Ebb[\ta^i \hattheta^j(X)]&=\int_\Xcal dx\,d^i(x)\hattheta^j(x)\\
&=\int_\Xcal dx P_X(x) \frac{d^i(x)}{P_X(x)} \hattheta^j(x)\\
&=\int_\Xcal dx P_X(x) \hattheta_{\Brm}^{i} \hattheta^j(x)\\
&= \Ebb_X[\hattheta_{\Brm}^{i}(X)\hattheta^j(X)].
\end{align*}
Since the first term of Eq.~\eqref{eq:appBayes1} defines a positive semidefinite matrix, and it vanishes if and only if 
$\hattheta_i-\hattheta_{\Brm,i}$ for all $i$, we obtain
\begin{align*}
\Vsf_{\mathrm{B}}[\hattheta]\geq -\Ksf_\Brm+\Msf, 
\end{align*}
where $\Ksf_\Brm$ is defined by
\begin{align*}
\Ksf_{\Brm}^{ij}&= \Ebb_X[\hattheta_{\Brm}^{i}\hattheta_{\Brm}^{j}]\\
&=\int_\Xcal dx\, \frac{d_{\Brm}^{i}(x)d_{\Brm}^{j}(x)}{P_X(x)}. 
\end{align*}
\end{proof}

The van Trees lower bound is defined by the matrix $\Jsf_{\mathrm{B}}$. 
\begin{equation}
\Jsf_{\mathrm{B},ij}:= \Ebb\left[ \frac{\del\log P(\theta,X)}{\del \ta^i} \frac{\del\log P(\theta,X)}{\del \ta^j}\right]. 
\end{equation}
Note that this matrix is also expressed as 
\begin{equation}
\Jsf_{\mathrm{B},ij}=\Ebb_\pi\left[\frac{\del\log \pi(\ta)}{\del \ta^i}\frac{\del\log \pi(\ta)}{\del \ta^j}\right]
+\Ebb_\pi[\Jsf_{ij}(\ta)],
\end{equation}
where $\Jsf_{ij}(\ta)=\Ebb_\theta[\frac{\del\log p_\ta(x)}{\del \ta^i}\frac{\del\log p_\ta(x)}{\del \ta^j}]$ is the Fisher information about the model $p_\theta(x)$. 

Theorem \ref{thm:Cmain} states the matrix inequality: 
\begin{equation}
\Vsf_{\mathrm{B}}[\hattheta]\geq \Vmin=\Vsf_{\mathrm{B}}[\hattheta_\Brm]\geq (\Jsf_{\mathrm{B}})^{-1}. 
\end{equation}

To prove this result, we need a lemma. 
\begin{lemma}\label{lem:Bscore}
For a function $f$ of $(\theta, x)$ satisfying the weak boundary condition with respect to the prior, we have
\[
\Ebb\left[f(\theta,X) \frac{\del\log P(\theta,X)}{\del \ta^j}\right]=-\Ebb\left[\frac{\del f(\theta,X)}{\del \ta^j}\right]. 
\]
\end{lemma}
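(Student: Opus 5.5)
The plan is to prove Lemma~\ref{lem:Bscore} by integration by parts in $\theta^j$, done separately for each fixed outcome $x$. First I rewrite the left-hand side using the joint density $P(\theta,x)=\pi(\theta)p_\theta(x)$:
\[
\Ebb\left[f(\theta,X)\frac{\del\log P(\theta,X)}{\del\ta^j}\right]
=\int_\Xcal dx\int_\Theta d\ta\, f(\theta,x)\,\frac{\del P(\theta,x)}{\del\ta^j}.
\]
This holds because $P\,\del_j\log P=\del_j P$ wherever $P>0$. On the set where $P=0$, I take the integrand to be zero. This is the usual convention, and it is consistent because $\del_j P$ vanishes at interior zeros of a nonnegative differentiable function.

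Next, for fixed $x$, I apply the product rule in $\theta^j$:
\[
f\,\del_j P=\del_j\big(fP\big)-(\del_j f)P.
\]
Integrating over $\Theta$, the divergence theorem reduces the first term to a boundary integral of $f(\theta,x)P(\theta,x)\,\nu_j$, where $\nu$ is the outward normal. If $\Theta$ is a product of intervals, or all of $\Rbb^n$, this is just the one-dimensional fundamental theorem of calculus in the $\theta^j$ variable, applied with the other coordinates held fixed.

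The weak boundary condition is meant exactly to say that $f(\theta,x)\pi(\theta)$ vanishes on $\del\Theta$, or tends to zero at infinity fast enough. Since $p_\theta(x)$ is bounded (it is a probability density or mass, and $\tr[\rho_\theta\Pi_x]\le\tr\Pi_x$ in the quantum case), the product $fP=(f\pi)\,p_\theta(x)$ also vanishes on the boundary, so the boundary term is zero. What remains is
\[
-\int_\Xcal dx\int_\Theta d\ta\,P(\theta,x)\,\del_j f(\theta,x)=-\Ebb[\del_j f(\theta,X)],
\]
which is the claim.

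The only delicate step is the justification of the analytic manipulations rather than the algebra. Exchanging the order of the $x$ and $\theta$ integrals requires integrability of $f\,\del_jP$ and of $P\,\del_j f$, which I would assume as part of the regularity hypotheses. The boundary term for an unbounded $\Theta$ requires $f\pi p_\theta\to0$. I would state these as standing assumptions and note that, for the intended uses ($f=\hattheta^i(x)-\theta^i$ and its quantum analogue), the condition reduces to $\theta^i\pi(\theta)$ vanishing on the boundary together with $\pi$ vanishing there, which matches the weak boundary condition stated after Theorem~\ref{thm:Cmain}.
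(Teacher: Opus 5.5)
Your proposal is correct and is essentially the paper's proof: write the left side as $\int_\Theta\int_\Xcal f\,\partial_j P$, apply the product rule, and drop the boundary term because $fP$ vanishes on $\partial\Theta$ by the weak boundary condition. Your extra care (the $P=0$ convention, Fubini, and noting that the intended uses need $\pi$ as well as $\theta^i\pi$ to vanish at the boundary) makes explicit what the paper leaves implicit. One caveat: boundedness of $p_\theta(x)$ in $\theta$ follows in the quantum setting from $\tr[\rho_\theta\Pi_x]\le\tr\Pi_x$, but not for a general classical density; the paper avoids this by taking the weak boundary condition to mean that $fP$ itself vanishes on $\partial\Theta$.
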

Remark: If $f$ is independent of the parameter $\ta$, it vanishes. 
If, on the other hand, $x$ independent, it is given by $-\Ebb_\pi[\frac{\del f(\theta)}{\del \ta^j}]$. 
\begin{proof}
The left hand side is calculated by using the integration by part as 
\begin{align*}
\mbox{(LHS)}&= \int_\Theta\int_\Xcal d\ta dx\, f(\theta,x) \frac{\del P(\theta,x)}{\del \ta^j}\\
&=\int_\Theta\int_\Xcal d\ta dx\, \frac{\del [f(\ta,x)P(\theta,x)]}{\del \ta^j}\\
&\quad- \int_\Theta\int_\Xcal d\ta dx\, P(\theta,x) \frac{\del f(\theta,x)}{\del \ta^j}\\
&=-\Ebb[\frac{\del f(\theta,X)}{\del \ta^j}].
\end{align*}
The last expression follows from that $f(\ta,x)P(\theta,X)$ along $\theta^j$ direction vanishes on $\del\Theta=$ the boundary of the parameter space $\Theta$. 
\end{proof}

\begin{proof}[Proof: Theorem \ref{thm:Cmain}]
Define an inner product for $F,G$ with respect to the joint distribution by
\begin{equation}
\inner{F}{G}_P:=\Ebb[F(\ta,X)G(\ta,x)]. 
\end{equation}
The optimal Bayesian MSE matrix is also expressed as 
\[
\Vsf_{\mathrm{B},ij}[\hattheta_\Brm]=\inner{\hattheta_{\Brm,i}-\theta^i}{\hattheta_{\Brm,j}-\theta^j}_P. 
\]
We introduce a $2n$-dimensional vector by
\begin{equation}
v(\ta,x):= \left[(\hattheta_{\Brm,i}(x)-\theta^i)_{i=1}^n,(\ell_{\Brm,j}(\ta,x))_{j=1}^n \right]^\top, 
\end{equation}
where $\ell_{\Brm,j}(\ta,x)=\frac{\del\log P(\theta,x)}{\del \ta^j}$. 
The expectation value of $v(\ta,X)v(\ta,X)^\top$ gives $2n\times2n$ positive matrix. 
We observe that its 11 block matrix is $\Vmin$, and the 22 block matrix is equal to $\Jsf_{\mathrm{B}}$. 
The off-diagonal block matrix is shown to be the identity matrix as follows.
\begin{align*}
\Ebb[(\hattheta_{\Brm,i}(X)&-\theta^i)\ell_{\Brm,j}(\ta,X)]\\
&=\Ebb[\hattheta_{\Brm,i}(X)\ell_{\Brm,j}(\ta,X)]-\Ebb[\theta^i \ell_{\Brm,j}(\ta,X)] \\
&=0-(-\frac{\del \ta^i}{\del\ta^j})\\
&=\delta_{ij},
\end{align*}
where we have used Lemma \ref{lem:Bscore}. 
Combining all gives
\[ 
\Ebb[v(\ta,X)v(\ta,X)^\top]=
\left[\begin{array}{cc} \Vmin& I \\ I & \Jsf_{\mathrm{B}}\end{array}\right]\geq 0. 
\]
By Schur's complement, we therefore obtain
\begin{equation}
\Vmin-I \Jsf_{\mathrm{B}}^{-1}I\geq0\ \iff\ \Vmin\geq \Jsf_{\mathrm{B}}^{-1}. 
\end{equation}
\end{proof}


\section{Bayesian Nagaoka--Hayashi bound} \label{App:BNH}

In this appendix, we recall the Bayesian Nagaoka--Hayashi (BNH) bound introduced in Ref.~\cite{suzuki2024bayesian}. 
For a POVM and estimator \(\hPi=(\Pi,\hattheta)\), define
\begin{align}
\mathbb{L}^{jk}[\hPi]
&:=\sum_{x\in\Xcal} \hat{\theta}^{j}(x)\Pi_x\hat{\theta}^{k}(x),
\quad j,k=1,\ldots,n,\\
X^j[\hPi]
&:=\sum_{x\in\Xcal}\hat{\theta}^{j}(x)\Pi_x,
\quad j=1,\ldots,n.
\end{align}
We regard \(\Lbb=[\mathbb{L}^{jk}[\hPi]]\) as an operator-valued matrix on \(\Cbb^n\otimes\Hil\), and
\[
X[\hPi]=[X^1[\hPi],X^2[\hPi],\ldots,X^n[\hPi]]^{\top_1}
\]
as an operator-valued vector, where \({\top_1}\) denotes the transpose over the parameter space. 
The model and prior distribution define
\begin{align}
    \rho_\Brm&:=\int_\Theta d\ta\, \pi(\ta) \rho_\ta, \\
    D_{\Brm}&:=[D^j_{\Brm}],\qquad
    D^j_{\Brm}:=\int_\Theta d\ta\, \pi(\ta) \ta^j \rho_\ta,\\
    \Msf&:=[\Msf^{jk}],\qquad
    \Msf^{jk}:=\int_\Theta d\ta\, \pi(\ta)\ta^j\ta^k .
\end{align}

The first ingredient is an operator-valued representation of the Bayes risk.

\begin{lemma}[Suzuki \cite{suzuki2024bayesian}, Lemma 2]\label{lem:Briskrep}
For any POVM and estimator \(\hPi=(\Pi,\hattheta)\), the Bayes risk is expressed as
\begin{align}
 \Rsf_\mathrm{B}[\hPi|\Wsf]
 &= \Ttr{\left[(\Wsf \ot \rho_\Brm) \Lbb\right]}
  -\Ttr{\left[(\Wsf \ot I) D_\Brm  X^{\top_1}\right]}
  \nonumber\\
 &\quad
  -\Ttr {\left[X  D_\Brm^{\top_1}(\Wsf \ot I)\right] }
  + \Tr[\Wsf\Msf] .
\end{align}
Here \(\Ttr[\cdot]\) denotes the trace over both the Hilbert space and the parameter space.
\end{lemma}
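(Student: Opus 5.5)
We need to prove Lemma~\ref{lem:Briskrep}: the Bayes risk equals that expression. The identity is a direct expansion of the quadratic risk, using that $\rho_\Brm$ and $D_\Brm$ absorb the prior integrals. We follow the same route as Eq.~\eqref{eq:appBayes1} in Appendix~\ref{sec:appBayes}, but keep the POVM elements unevaluated.

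We start from
\[
\Rsf_\Brm[\hPi|\Wsf]=\sum_{j,k}\Wsf_{jk}\int_\Theta d\ta\,\pi(\ta)\sum_{x\in\Xcal}\tr[\rho_\ta\Pi_x]\,(\hattheta^j(x)-\ta^j)(\hattheta^k(x)-\ta^k),
\]
and expand the product into four terms. Each is linear in $\rho_\ta$, so the integral over $\ta$ can be exchanged with the sum over $x$ and the trace (finite dimension; for a continuous outcome set we assume integrability).

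\emph{Quadratic term.} Since $\hattheta^j(x)$ is a scalar, $\hattheta^j\hattheta^k\tr[\rho_\ta\Pi_x]=\tr[\rho_\ta\,\hattheta^j(x)\Pi_x\hattheta^k(x)]$. Integrating against $\pi$ replaces $\rho_\ta$ by $\rho_\Brm$, which gives $\sum_{jk}\Wsf_{jk}\tr[\rho_\Brm\Lbb^{jk}]=\Ttr[(\Wsf\ot\rho_\Brm)\Lbb]$. We use the index convention $\Ttr[(A\ot B)C]=\sum_{jk}A_{jk}\tr[B\,C^{kj}]$ and the symmetry of $\Wsf$ and of $\Lbb^{jk}$ in $j,k$.

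\emph{Cross terms.} We have $-\int d\ta\,\pi(\ta)\ta^j\sum_x\hattheta^k(x)\tr[\rho_\ta\Pi_x]=-\tr[D^j_\Brm X^k]$. Summed with $\Wsf_{jk}$, this gives $-\Ttr[(\Wsf\ot I)D_\Brm X^{\top_1}]$, where $D_\Brm X^{\top_1}$ is the operator-valued matrix with entries $D^j_\Brm X^k$. The other cross term gives $-\tr[X^jD^k_\Brm]$, i.e.\ $-\Ttr[XD_\Brm^{\top_1}(\Wsf\ot I)]$. The order of the factors inside the Hilbert-space trace has to be recorded carefully so that the stated matrix form is matched exactly; cyclicity makes both orderings equal in value.

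\emph{Constant term.} Using $\sum_x\Pi_x=I$ and $\tr\rho_\ta=1$, it becomes $\sum_{jk}\Wsf_{jk}\Msf^{jk}=\Tr[\Wsf\Msf]$.

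There is no real obstacle. The only point needing care is the bookkeeping of the tensor and partial-transpose conventions, so that each scalar sum is identified with the stated $\Ttr$ expression; we will verify this entrywise.
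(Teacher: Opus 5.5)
Your proposal is correct and follows essentially the same route as the standard argument. The paper only cites Lemma 2 of Suzuki without reproducing its proof, and your expansion of $(\hattheta^j(x)-\theta^j)(\hattheta^k(x)-\theta^k)$ into four terms is the operator-valued analogue of Eq.~\eqref{eq:appBayes1}: each term is linear in $\rho_\theta$ and is absorbed into $\rho_\Brm$, $D_\Brm$ or $\Msf$, and your index bookkeeping checks out entrywise (with $\Tr[\Wsf\Vsf_\Brm]=\sum_{jk}\Wsf_{jk}\Vsf_{\Brm,kj}$ the match holds using only cyclicity of the Hilbert-space trace, even without the symmetry of $\Wsf$ and $\Lbb$ that you invoke).
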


The second ingredient is a matrix inequality for \(\Lbb\) and \(X\), which follows from a variant of Holevo's lemma \cite{holevobook}.

\begin{lemma}[\cite{hayashi99}]
For all POVMs and estimators, the corresponding operator-valued matrix \(\Lbb\) and vector \(X\) satisfy
\begin{align}
    \Lbb[\hPi] \geq X[\hPi] X[\hPi]^{\top_1}.
\end{align}
\end{lemma}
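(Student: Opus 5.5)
The plan is to reduce the operator-valued matrix inequality to a scalar inequality. We test it against an arbitrary vector of the enlarged space $\Cbb^n\otimes\Hil$, and then apply a Cauchy--Schwarz / Naimark-type contraction argument using only $\Pi_x\ge0$ and $\sum_x\Pi_x=I$.

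\emph{Step 1: testing against a vector.} Any vector of $\Cbb^n\otimes\Hil$ can be written as $\psi=\sum_{j=1}^n e_j\otimes\psi_j$ with $\psi_j\in\Hil$. The $(j,k)$ block of $\Lbb$ is $\sum_x\hat\theta^j(x)\hat\theta^k(x)\Pi_x$. Since the estimator is real-valued, setting
\[
\phi_x:=\sum_k\hat\theta^k(x)\psi_k
\]
gives
\[
\langle\psi,\Lbb\psi\rangle=\sum_x\langle\phi_x,\Pi_x\phi_x\rangle .
\]
The operators $X^j=\sum_x\hat\theta^j(x)\Pi_x$ are Hermitian, so $XX^{\top_1}$ has blocks $X^jX^k$ and equals $XX^\dagger$ for the column $X$. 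Hence
\[
\langle\psi,XX^{\top_1}\psi\rangle=\Big\|\sum_k X^k\psi_k\Big\|^2=\Big\|\sum_x\Pi_x\phi_x\Big\|^2 .
\]
It therefore suffices to show
\[
\Big\|\sum_x\Pi_x\phi_x\Big\|^2\le\sum_x\langle\phi_x,\Pi_x\phi_x\rangle
\]
for arbitrary vectors $\phi_x$.

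\emph{Step 2: the contraction argument.} Define $V:\Hil\to\bigoplus_x\Hil$ by $Vu=(\Pi_x^{1/2}u)_x$. Completeness of the POVM gives $V^\dagger V=\sum_x\Pi_x=I$, so $V$ is an isometry and $V^\dagger$ is a contraction. Set $w:=(\Pi_x^{1/2}\phi_x)_x$. Then $\sum_x\Pi_x\phi_x=V^\dagger w$, and
\[
\|V^\dagger w\|^2\le\|w\|^2=\sum_x\langle\phi_x,\Pi_x\phi_x\rangle .
\]
Equivalently, one can use Cauchy--Schwarz directly: $\langle u,\sum_x\Pi_x\phi_x\rangle=\sum_x\langle\Pi_x^{1/2}u,\Pi_x^{1/2}\phi_x\rangle$ is bounded in modulus by $\|u\|\,(\sum_x\langle\phi_x,\Pi_x\phi_x\rangle)^{1/2}$, and taking the supremum over unit $u$ gives the same bound. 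Since $\psi$ was arbitrary, this yields $\Lbb[\hPi]\ge X[\hPi]X[\hPi]^{\top_1}$.

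\emph{Main obstacle.} The only delicate point is bookkeeping. One must identify the block structure correctly: that $XX^{\top_1}$ is a positive operator of the form $XX^\dagger$ uses the Hermiticity of each $X^j$ and the reality of $\hat\theta$. For continuous outcome sets, the direct sum in Step 2 is replaced by a direct integral, or one approximates by finite partitions; the inequality is preserved in the limit.
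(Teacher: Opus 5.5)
Your proof is correct. In the reduction, the reality of $\hat\theta^j(x)$ gives $\langle\psi,\Lbb[\hPi]\psi\rangle=\sum_x\langle\phi_x,\Pi_x\phi_x\rangle$, and the Hermiticity of $X^j$ turns $X[\hPi]X[\hPi]^{\top_1}$ into $XX^\dagger$. The contraction $\|V^\dagger w\|\le\|w\|$, with $V$ the Naimark isometry built from the POVM, then gives the scalar inequality.

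The paper contains no proof of this lemma to compare against. It only imports the result from \cite{hayashi99} as a variant of Holevo's lemma. The usual short proof of that lemma is by completing the square. Using $\sum_x\Pi_x=I$ and the Hermiticity of $X^j$, the $(j,k)$ block of $\Lbb[\hPi]-X[\hPi]X[\hPi]^{\top_1}$ is exactly
\[
\sum_{x}\bigl(\hat\theta^j(x)I-X^j\bigr)\,\Pi_x\,\bigl(\hat\theta^k(x)I-X^k\bigr).
\]
This equals $\sum_x A_x\Pi_xA_x^\dagger\ge0$, where $A_x$ is the operator column $(\hat\theta^j(x)I-X^j)_j$.

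Your argument is the dual, dilation form of the same fact. Instead of an operator identity, you test against vectors and use $VV^\dagger\le I$.

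The completing-the-square identity gives the gap directly as a positive operator-valued matrix. It also carries over verbatim to continuous outcome sets by replacing the sum with an integral against the POVM, so your remark about direct integrals or finite partitions is not needed there.

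Your version makes the role of the completeness relation transparent as an isometry (Naimark) condition. It also exhibits the gap as $\|(I-VV^\dagger)w\|^2$, which shows when equality holds. Both routes are elementary and equally complete.
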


Combining these two lemmas, one obtains the following Bayesian Nagaoka--Hayashi bound.

\begin{theorem}[Bayesian Nagaoka--Hayashi bound, Suzuki \cite{suzuki2024bayesian}, Theorem 1]\label{thm:BNHbound}
For any POVM and estimator \(\hPi=(\Pi,\hattheta)\), the Bayes risk satisfies
\begin{align}
    \Rsf_\mathrm{B}[\hPi|\Wsf]\geq \cNH(\Wsf),
\end{align}
where
\begin{align}
  \cNH(\Wsf)
  :=
  \min_{\Lbb,X}
  \Big\{
  &\Ttr{\left[(\Wsf \ot \rho_\Brm) \Lbb\right]}
  -\Ttr{\left[(\Wsf \ot I) D_\Brm  X^{\top_1}\right]}
  \nonumber\\
  &-\Ttr{\left[X  D_\Brm^{\top_1}(\Wsf \ot I)\right]}
  \Big\}
  + \Tr[\Wsf\Msf].
\end{align}
The optimization is subject to the constraints that
\(\Lbb^{jk}=\Lbb^{kj}\), each \(\Lbb^{jk}\) is Hermitian, each \(X^j\) is Hermitian, and
\begin{align}
    \Lbb\geq X X^{\top_1}.
\end{align}
\end{theorem}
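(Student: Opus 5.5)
The statement to prove is Theorem~\ref{thm:BNHbound}: $\Rsf_\Brm[\hPi|\Wsf]\ge\cNH(\Wsf)$ for every POVM and estimator. The plan is to combine Lemma~\ref{lem:Briskrep} with the Holevo-type matrix inequality $\Lbb[\hPi]\ge X[\hPi]X[\hPi]^{\top_1}$. The idea is that the actual pair $(\Lbb[\hPi],X[\hPi])$ generated by any measurement is one feasible point of the minimization that defines $\cNH(\Wsf)$.

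The first step is to fix an arbitrary $\hPi=(\Pi,\hattheta)$ and write its Bayes risk through Lemma~\ref{lem:Briskrep}. This gives a function $F(\Lbb[\hPi],X[\hPi])+\Tr[\Wsf\Msf]$, where $F$ is exactly the objective inside the minimum in $\cNH$.

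The second step is to check that $(\Lbb[\hPi],X[\hPi])$ satisfies every constraint of the optimization.
\begin{itemize}
\item Symmetry and Hermiticity: since $\hattheta^j(x)$ is real and $\Pi_x$ is Hermitian, $\Lbb^{jk}=\sum_x\hattheta^j(x)\hattheta^k(x)\Pi_x$ is Hermitian and symmetric in $(j,k)$, and each $X^j=\sum_x\hattheta^j(x)\Pi_x$ is Hermitian.
\item Positivity: the constraint $\Lbb\ge XX^{\top_1}$ is precisely the content of the lemma quoted from \cite{hayashi99}.
\end{itemize}
If I had to justify that lemma rather than cite it, I would use a Naimark dilation, or more directly compute, for any vectors $c_j\in\Hil$,
\[
\sum_x\Bigl\|\Pi_x^{1/2}\sum_j\hattheta^j(x)c_j\Bigr\|^2\ \ge\ \Bigl\|\sum_x\Pi_x^{1/2}\,\Pi_x^{1/2}\sum_j\hattheta^j(x)c_j\Bigr\|^2 .
\]
This follows from Cauchy--Schwarz together with $\sum_x\Pi_x=I$: the map $v\mapsto\sum_x\Pi_x^{1/2}v_x$ is a contraction, being the adjoint of the isometry $w\mapsto(\Pi_x^{1/2}w)_x$. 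The left-hand side is $\langle c,\Lbb c\rangle$ and the right-hand side is $\langle c,XX^{\top_1}c\rangle$.

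The third step is to conclude. Since the actual pair is feasible, the risk is at least the infimum over feasible pairs, so $\Rsf_\Brm[\hPi|\Wsf]\ge\cNH(\Wsf)$. As $\hPi$ was arbitrary, taking the minimum over $\hPi$ gives $\Rsf^{\mathrm{opt}}_\Brm(\Wsf)\ge\cNH(\Wsf)$.

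The main obstacle is Lemma~\ref{lem:Briskrep} itself, if it has to be re-derived. That would mean expanding
\[
\int\! d\theta\,\pi(\theta)\sum_x\tr[\rho_\theta\Pi_x]\,\Wsf_{jk}\,(\hattheta^j-\theta^j)(\hattheta^k-\theta^k)
\]
term by term. The term quadratic in $\hattheta$ gives $\Ttr[(\Wsf\ot\rho_\Brm)\Lbb]$. The two cross terms give the $D_\Brm X^{\top_1}$ and $XD_\Brm^{\top_1}$ traces after integrating $\theta^j\rho_\theta$ against the prior. The term quadratic in $\theta$ gives $\Tr[\Wsf\Msf]$, using $\sum_x\Pi_x=I$.

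Two further details need attention. One is ordering and transpose conventions in the mixed trace $\Ttr$, which requires care but no new idea. The other is attainment of the minimum in $\cNH$: the objective is convex and the constraint set is closed, so the infimum suffices for the inequality in any case.
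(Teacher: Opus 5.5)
Your proposal is correct and follows the paper's route exactly. You evaluate the Bayes risk of an arbitrary $\hPi$ through Lemma~\ref{lem:Briskrep} at the pair $(\Lbb[\hPi],X[\hPi])$, note that this pair is feasible (symmetry, Hermiticity, and the Holevo-type inequality $\Lbb\geq XX^{\top_1}$), and conclude by relaxation that $\Rsf_\Brm[\hPi|\Wsf]\geq\cNH(\Wsf)$; your isometry/contraction proof of $\Lbb\geq XX^{\top_1}$ and your term-by-term derivation of the risk representation are valid fillings-in of the two lemmas the paper only cites.
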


Indeed, the above optimization is obtained by replacing the original optimization over measurements and estimators \((\Pi,\hattheta)\) with an optimization over the larger feasible set of operator-valued variables \((\Lbb,X)\). 
This relaxation yields a lower bound on the optimal Bayes risk:
\begin{align}
    \Rsf^{\mathrm{opt}}_\Brm(\Wsf)\geq \cNH(\Wsf).
\end{align}
Moreover, as shown in Ref.~\cite{suzuki2024bayesian}, this optimization can be formulated as a semidefinite program.

\section{Monotone metric on matrix space}\label{sec:AppOpMono}
In this appendix, we summarize known results on monotone metric. 
The material in this appendix is due to Ref.~\cite{petz1996monotone}. 

\subsection{Monotone metric}
Let $M_d=\Cbb^{d\times d}$ denote the set of all $d\times d$ complex matrices, and define $M_d^{++}$ $(M_d^{+})$ be the set of 
all positive definite (positive semidefinite) matrices. 
The set of all Hermitian matrices is denoted by $M_{d}^{H}$. 
The set of full-rank quantum states is defined by
\[
S_d:=\{\rho\in M_d^{++}\,|\,\tr[\rho]=1 \}.
\]

Given $A,B\in M_d^{++}$, and $f:(0,\infty)\to(0,\infty)$ define a super-operator $\Jbb^f_{A,B}$ acting on $X\in M_d$ by 
\[
\Jbb^f_{A,B} (X)= f(\Delta_{A,B})X B, 
\]
with $\Delta_{A,B}(X)=A XB^{-1}$. 
It is positive operator with respect to the Hilbert-Schmidt inner product, and hence is invertible. 
The inverse super-operator is denoted by $(\Jbb_{A,B}^f)^{-1}$. 
When $A=B$, we simply denote it as $\Jbb^f_{A}=\Jbb^f_{A,A}$. 

A monotone metric on $S_d$ is defined by three conditions. 
Among them, the second condition states that norm should not be increased under any CP-TP map. 
\begin{definition}[Monotone metric] \label{def:monotone_metric}
An inner product $K_\rho$ ($\rho\in M_d$) on $M_d\times M_d$ is said monotone metric, if the following conditions are satisfied. 
\begin{itemize}
\item[(i)] The map $\rho\mapsto K_\rho(X,X)$ is continuous for any $d\in \Nbb$ and $X\in M_d$. 
\item[(ii)] For any CP-TP map $\Ecal$ from $M_d$ to $M_{d'}$ such that $\Ecal(I_d)>0$, then 
the following inequality holds 
\[
K_{\Ecal(\rho)} (\Ecal(X),\Ecal(X))\leq K_\rho(X,X) \mbox{ for }\rho\in S_d,\,X\in M_d.
\]
\item[(iii)] If $[\rho,X]=0$, $K_\rho(X,X)=\tr[\rho^{-1} X^\dagger X]$. 
\end{itemize}
\end{definition}
Remark: In order to recover the results in classical statistics, we demand the condition $f(1)=1$ which follows from condition (iii). 

The next theorem due to Petz characterizes all possible monotone metrics. 
\begin{theorem}[Petz]\label{thm:petz_thm}
An inner product $K$ is monotone metric if and only if 
there exists a (unique) operator monotone function $f:(0,\infty)\to(0,\infty)$ such that 
\[
K_\rho(X,Y)=\inner{X}{(\Jbb^f_\rho)^{-1}Y}_{\mathrm{HS}}.
\]
for $\rho\in M_d^{++}$ and $X,Y\in M_d$. 
\end{theorem}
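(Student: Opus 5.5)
The statement is Petz's theorem, and I would prove the two directions separately. For \emph{sufficiency} I would take an operator monotone $f$ with $f(1)=1$ and show that $K^f_\rho(X,Y)=\inner{X}{(\Jbb^f_\rho)^{-1}Y}_{\mathrm{HS}}$ satisfies (i)--(iii) of Definition~\ref{def:monotone_metric}. Conditions (i) and (iii) are direct checks. Continuity holds because $\Jbb^f_\rho$ depends continuously on $\rho\in S_d$ and is positive definite. If $[\rho,X]=0$, then $\Delta_\rho X=X$, so $\Jbb^f_\rho X=f(1)X\rho=X\rho$, which gives $K^f_\rho(X,X)=\tr[X^\dagger X\rho^{-1}]$.

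For (ii) I would use Kubo--Ando mean theory on the Hilbert--Schmidt space. Write $L_\rho(X)=\rho X$ and $R_\rho(X)=X\rho$; these are commuting positive superoperators. Then $\Jbb^f_\rho=f(L_\rho R_\rho^{-1})R_\rho=R_\rho\,\sigma_f\,L_\rho$, the Kubo--Ando mean associated with $f$. Let $\Ecal^*$ be the Hilbert--Schmidt adjoint of $\Ecal$, which is unital and completely positive. By the Kadison--Schwarz inequality,
\[
\Ecal L_\rho \Ecal^*\le L_{\Ecal(\rho)},\qquad \Ecal R_\rho\Ecal^*\le R_{\Ecal(\rho)},
\]
since $\inner{Y}{\Ecal(\rho\,\Ecal^*(Y))}=\tr[\rho\,\Ecal^*(Y)\Ecal^*(Y)^\dagger]\le \tr[\rho\,\Ecal^*(YY^\dagger)]$, and symmetrically for $R$. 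Combining this with Ando's transformer inequality $T(A\sigma_f B)T^*\le (TAT^*)\sigma_f(TBT^*)$ and monotonicity of the mean gives
\[
\Ecal\,\Jbb^f_\rho\,\Ecal^*\le \Jbb^f_{\Ecal(\rho)} .
\]
Finally I would invert this using the standard lemma that $BAB^*\le C$ with $A,C>0$ implies $B^*C^{-1}B\le A^{-1}$. This follows from positivity of the block operator $\left[\begin{smallmatrix}A^{-1}&B^*\\ B&C\end{smallmatrix}\right]$ by a Schur complement argument. The result is $\Ecal^*(\Jbb^f_{\Ecal(\rho)})^{-1}\Ecal\le(\Jbb^f_\rho)^{-1}$, which is exactly (ii).

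For \emph{necessity}, let $K$ be monotone. Invertible channels are unitary conjugations, and applying (ii) in both directions gives unitary covariance, $K_{U\rho U^\dagger}(UXU^\dagger,UXU^\dagger)=K_\rho(X,X)$. So it suffices to study diagonal $\rho=\mathrm{diag}(\lambda_1,\ldots,\lambda_d)$. Next I would use covariance under diagonal phase unitaries together with channels that embed or compress blocks. These should show that the matrix units $E_{ij}$ are $K_\rho$-orthogonal and that $K_\rho(E_{ij},E_{ij})=c(\lambda_i,\lambda_j)$ for a single function $c$ independent of $d$. The independence of $d$ comes from embedding $M_2$ into $M_d$ via channels with $\Ecal(I)>0$ and their left inverses.

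Next I would show that $c$ is symmetric and homogeneous of degree $-1$, using the scaling channels $\rho\mapsto\rho\otimes\omega$ and partial traces. I would then set $f(t):=1/c(t,1)$, so that $K_\rho=\inner{\cdot}{(\Jbb^f_\rho)^{-1}\cdot}$. Condition (iii) forces $c(t,t)=1/t$, hence $f(1)=1$.

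The main obstacle is proving that this $f$ is \emph{operator} monotone rather than merely monotone. My first attempt would reverse the sufficiency argument. Monotonicity of $K$ under all CP-TP maps should yield the transformer inequality for the binary operation $(A,B)\mapsto f(L_AR_B^{-1})R_B$ on commuting positive superoperators. I would check this on block-diagonal states $\rho=\mathrm{diag}(pA,(1-p)B)$ with the pinching and partial-trace channels, which relate $\Jbb^f$ at $\rho$ to $\Jbb^f_{A,B}$. Once the transformer inequality and monotonicity in each argument are established, the Kubo--Ando theorem identifies the operation as an operator mean, and its representing function $f$ is operator monotone. If this becomes too delicate, the fallback is Petz's original route: derive joint convexity of $(A,B)\mapsto\inner{X}{(\Jbb^f_{A,B})^{-1}X}$ from monotonicity under the partial trace over $\Cbb^2$. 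That convexity makes $1/f$ operator convex in the Lieb sense, and operator monotonicity of $f$ would then follow via L\"owner--Hansen--Pedersen arguments.
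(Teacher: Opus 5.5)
The paper does not prove this theorem; it is quoted from Petz. Your sufficiency half is correct and is the standard argument: Kadison--Schwarz for $L_\rho,R_\rho$, then the transformer inequality and monotonicity of the Kubo--Ando mean $R_\rho\,\sigma_f\,L_\rho$, then inversion. Note that $\Ecal(I)>0$ is what guarantees $\Ecal(\rho)>0$.

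The necessity half contains a step that fails: $c$ is \emph{not} symmetric in the paper's setting. Here the metric lives on all of $M_d$, with no condition such as $K_\rho(X,Y)=K_\rho(Y^\dagger,X^\dagger)$. The choice $f(t)=t$ gives $K_\rho(X,X)=\tr[X^\dagger\rho^{-1}X]$, which your own sufficiency argument shows is monotone. Yet $K_\rho(E_{ij},E_{ij})=1/\lambda_i$, i.e.\ $c(x,y)=1/x$. Symmetry of $c$ is equivalent to $f(t)=t f(t^{-1})$, which is the extra condition in Petz's theorem for symmetric metrics on self-adjoint tangent vectors. Imposing it would exclude the RLD and every $f_\lambda$ with $\lambda\neq0$ that the paper relies on. Simply drop it: homogeneity of degree $-1$ alone gives $c(x,y)=1/(y\,f(x/y))$ with $f(t):=1/c(t,1)$, which is exactly $K^f$.

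Relatedly, embedding $M_2$ into $M_d$ does not give an exact left inverse on the relevant data. The map $X\mapsto sX\oplus(1-s)\tr[X]\tau$ with $s=\lambda_1+\lambda_2<1$ sends $E_{12}\mapsto sE_{12}$, and compressing back only yields $s^2c_d\le c_2\le c_d$. The exact comparisons come from two sources, together with continuity (i):
\begin{itemize}
\item the pinch-and-replace channels $Y\mapsto PYP+\tr[(I-P)Y]\tau$ in fixed $d\ge3$, which remove the dependence on the other eigenvalues;
\item $\rho\mapsto\rho\otimes\omega$ with the partial trace as a genuine left inverse, which gives homogeneity and $d$-independence.
\end{itemize}

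Second, the Kubo--Ando theorem cannot be invoked as you plan. It characterizes operations defined on \emph{all} pairs of positive operators, and the transformer inequality needs the operation at $(TR_BT^*,TL_AT^*)$ for arbitrary $T$. That pair is not of the form $(R_{B'},L_{A'})$. You do not need Kubo--Ando here. Since $\mathbb{J}^f_{A,I}X=f(A)X$, operator monotonicity of $f$ is precisely monotonicity of $A\mapsto\mathbb{J}^f_{A,I}$.

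Apply your inversion lemma to the monotonicity inequality $\Ecal^*(\mathbb{J}^f_{\Ecal(D)})^{-1}\Ecal\le(\mathbb{J}^f_D)^{-1}$ to get $\Ecal\,\mathbb{J}^f_D\,\Ecal^*\le\mathbb{J}^f_{\Ecal(D)}$. Take $\Ecal$ to be the partial trace over a classical qubit, $D=\mathrm{diag}(A_1,B_1,A_2,B_2)$ suitably normalized, and test operator $X\otimes E_{12}$. This gives superadditivity $\mathbb{J}^f_{A_1,B_1}+\mathbb{J}^f_{A_2,B_2}\le\mathbb{J}^f_{A_1+A_2,B_1+B_2}$, and homogeneity removes the normalization. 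For $0<A\le A'$ it yields $\mathbb{J}^f_{A'+\delta I,(1+\epsilon)I}\ge\mathbb{J}^f_{A,I}+\mathbb{J}^f_{A'-A+\delta I,\epsilon I}\ge\mathbb{J}^f_{A,I}$. Letting $\delta,\epsilon\to0$, using continuity of $f$ (which follows from (i)), gives $f(A)\le f(A')$. With these two repairs the necessity direction closes without the L\"owner--Hansen--Pedersen fallback.
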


Since operator monotone functions play an important role in matrix analysis, there exist many equivalent characterization. 
Among them the next theorem is important in this paper. 
\begin{theorem}[Hiai-Petz]\label{thm:HP_thm}
Given $f:(0,\infty)\to(0,\infty)$, the following conditions are equivalent. 
\begin{itemize}
\item[(i)] $f$ is operator monotone function. 
\item[(ii)] For all $X\in M_d$, $(A,B)\mapsto \inner{X}{\Jbb^f_{A,B}X}_{\mathrm{HS}}$ is jointly concave on $M_d^{++}\times M_d^{++}$. 
\item[(iii)] For all $X\in M_d$, $(A,B)\mapsto \inner{X}{(\Jbb^f_{A,B})^{-1}X}_{\mathrm{HS}}$ is jointly convex on $M_d^{++}\times M_d^{++}$.   
\end{itemize}
\end{theorem}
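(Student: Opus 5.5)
We prove the cycle (i)$\Rightarrow$(ii)$\Rightarrow$(iii)$\Rightarrow$(i). The common device is to write $\Jbb^f_{A,B}$ through two commuting positive super-operators, the left multiplication $L_A(X)=AX$ and the right multiplication $R_B(X)=XB$. Both are linear in their argument and positive on $M_d$ with the Hilbert--Schmidt inner product. Since $\Delta_{A,B}=L_AR_B^{-1}$, we get
\[
\Jbb^f_{A,B}=f(L_AR_B^{-1})R_B=:P_f(L_A,R_B),
\]
which is the perspective of $f$ evaluated on commuting positive operators.

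\textbf{(i)$\Rightarrow$(ii).} Use L\"owner's integral representation of a positive operator monotone function on $(0,\infty)$:
\[
f(t)=a+bt+\int_{(0,\infty)}\frac{t(1+s)}{t+s}\,d\mu(s),\qquad a,b\ge0,\ \mu\ge0 .
\]
Substituting into the perspective gives
\[
\Jbb^f_{A,B}=aR_B+bL_A+\int(1+s)\,L_A(L_A+sR_B)^{-1}R_B\,d\mu(s).
\]
The first two terms are linear in $(A,B)$. Each integrand is a parallel sum $L_A:(sR_B)$, up to a constant. For parallel sums we have the variational formula
\[
\inner{X}{(P:Q)X}_{\mathrm{HS}}=\inf_{Y+Z=X}\big(\inner{Y}{PY}_{\mathrm{HS}}+\inner{Z}{QZ}_{\mathrm{HS}}\big).
\]
The right-hand side is an infimum of functions linear in $(P,Q)$, hence concave in $(P,Q)$. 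Since $P=L_A$ and $Q=sR_B$ depend linearly on $(A,B)$, each integrand's quadratic form is jointly concave in $(A,B)$. Integrating against $\mu\ge0$ preserves concavity, which gives (ii).

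\textbf{(ii)$\Rightarrow$(iii).} For a positive operator $K$ we use the dual formula
\[
\inner{X}{K^{-1}X}_{\mathrm{HS}}=\sup_Y\big(2\Real\inner{X}{Y}_{\mathrm{HS}}-\inner{Y}{KY}_{\mathrm{HS}}\big).
\]
Put $K=\Jbb^f_{A,B}$. By (ii), each function inside the supremum is jointly convex in $(A,B)$. A supremum of convex functions is convex, so (iii) follows.

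\textbf{(iii)$\Rightarrow$(i).} This is the step I expect to be the main obstacle, because convexity alone does not yield monotonicity. The extra input is homogeneity. Set $g(A,B)=\inner{X}{(\Jbb^f_{A,B})^{-1}X}_{\mathrm{HS}}$. Because $\Jbb^f_{sA,sB}=s\Jbb^f_{A,B}$, the function $g$ is positive and homogeneous of degree $-1$. I will use the following cone lemma.

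\emph{Cone lemma.} A positive convex function $g$ of degree $-1$ on the open cone is nonincreasing along cone directions.

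To prove it, take $x$ in the cone, a direction $c$ in the open cone, and $\epsilon\in(0,1)$. Write
\[
x+c=(1-\epsilon)\,\frac{x}{1-\epsilon}+\epsilon\,\frac{c}{\epsilon}.
\]
Convexity and homogeneity then give
\[
g(x+c)\le(1-\epsilon)^2g(x)+\epsilon^2g(c).
\]
Letting $\epsilon\to0$ yields $g(x+c)\le g(x)$.

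We apply the lemma with $x=(A,I)$ and $c=(C,\delta I)$, where $C>0$. Sending $\delta\to0$ uses continuity of $g$; a boundary $C\ge0$ is handled by a further limiting argument. The conclusion is that $g(A,I)$ is nonincreasing in $A$.

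Next, $\Jbb^f_{A,I}=f(L_A)$, so for $X=\ketbra{v}{w}$ with $w$ a unit vector,
\[
g(A,I)=\bra{v}f(A)^{-1}\ket{v}.
\]
Hence $A\le A'$ implies $f(A')^{-1}\le f(A)^{-1}$. Since operator inversion is order-reversing on positive operators, this gives $f(A)\le f(A')$. Therefore $f$ is operator monotone, which closes the cycle.

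The same cone argument, with the inequality reversed for a concave function of degree $+1$, also yields (ii)$\Rightarrow$(i) directly. This serves as a consistency check on the cycle.
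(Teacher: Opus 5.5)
Your proof is correct, and it goes further than the paper does. The paper quotes this theorem from Petz and Hiai--Petz without proof.

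The only related argument in the paper is Appendix~\ref{app:concavity}. It proves the direction (i)$\Rightarrow$(ii), stated for $A=B=\rho$, though the argument works verbatim for the commuting pair $(L_A,R_B)$. The paper's route is: operator monotone implies operator concave; Jensen's operator inequality then gives joint concavity of the operator perspective $P_f$; linearity of $\rho\mapsto(L_\rho,R_\rho)$ finishes. You instead use L\"owner's integral representation. This writes $\Jbb^f_{A,B}$ as $aR_B+bL_A$ plus a positive mixture of parallel sums $L_A:(sR_B)$, and the variational (Anderson--Duffin) formula then makes each quadratic form an infimum of functions linear in $(A,B)$. The paper's route needs only operator concavity and the Hansen--Pedersen inequality. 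Yours needs the heavier L\"owner theorem, but is elementary afterwards and exposes the Kubo--Ando structure of $\Jbb^f_{A,B}$.

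Your (ii)$\Rightarrow$(iii), via $\inner{X}{K^{-1}X}_{\mathrm{HS}}=\sup_Y\bigl(2\Real\inner{X}{Y}_{\mathrm{HS}}-\inner{Y}{KY}_{\mathrm{HS}}\bigr)$, has no counterpart in the paper. Neither does your (iii)$\Rightarrow$(i), via degree $-1$ homogeneity and the cone lemma. The paper only needs implications starting from (i), so these make the equivalence self-contained.

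Two points should be made explicit.

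First, operator monotonicity means $A\le A'\Rightarrow f(A)\le f(A')$ for matrices of every size. Your cone argument on $M_d^{++}\times M_d^{++}$ only gives this for $d\times d$ matrices. So you must read (ii) and (iii) as holding for every $d$, as in the Petz framework of Appendix~\ref{sec:AppOpMono}, and run the argument size by size. This is essential, not cosmetic. For $d=1$, (iii) says that $(a,b)\mapsto |x|^2/(b\,f(a/b))$ is jointly convex. That holds for every positive concave $f$, since it is the perspective of a concave function composed with $u\mapsto 1/u$. For example $f(t)=\min\{1+t,2\}$ satisfies it but is not operator monotone.

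Second, in (iii) $f$ is an arbitrary positive function, so the continuity of $g$ you invoke is not visible from the formula. It does follow from convexity, since a finite convex function on an open convex subset of a finite-dimensional space is continuous. You can also avoid limits in $g$ altogether. For $A\le A'$ and $\delta>0$, take $c=\bigl((1+\delta)A'-A,\ \delta I\bigr)$, which lies in the open cone because $(1+\delta)A'-A\ge\delta A>0$. Then $x+c=\bigl((1+\delta)A',(1+\delta)I\bigr)$, and the cone lemma with homogeneity gives $g(A',I)\le(1+\delta)\,g(A,I)$. Letting $\delta\to0$ handles both the $\delta$-limit and the boundary case $A'-A\ge0$ at once.
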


An important consequence of the Hiai-Petz theorem is that monotone metric satisfies concavity and convexity. 
This is done by setting $\Jbb^f_{\rho,\rho}=\Jbb^f_{\rho}$ in Theorem \ref{thm:HP_thm}.
\begin{corollary}\label{cor:conv-concav}
For any monotone metric, the following properties hold. 
\begin{itemize}
\item[(i)] $\inner{X}{\Jbb^f_{\rho}X}_{\mathrm{HS}}$ is concave with respect to $\rho\in S_d$ and convex with respect to $X\in M_d$. 
\item[(ii)] $\inner{X}{(\Jbb^f_{\rho})^{-1}X}_{\mathrm{HS}}$ is jointly convex with respect to $\rho\in S_d$ and $X\in M_d$. 
\end{itemize}
\end{corollary}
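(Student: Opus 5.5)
This statement is Corollary~\ref{cor:conv-concav}. I would prove it by specializing the Hiai--Petz theorem (Theorem~\ref{thm:HP_thm}) to the diagonal $A=B=\rho$ and supplementing it with an elementary observation about dependence on $X$. Since $f$ is operator monotone, conditions (ii) and (iii) of Theorem~\ref{thm:HP_thm} hold for every fixed $X\in M_d$.

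For part (i), concavity in $\rho$, I would take $\rho_1,\rho_2\in S_d$ and $t\in[0,1]$. I then apply joint concavity to the pairs $(A_k,B_k)=(\rho_k,\rho_k)$. Because $t(\rho_1,\rho_1)+(1-t)(\rho_2,\rho_2)=(\rho_t,\rho_t)$ with $\rho_t=t\rho_1+(1-t)\rho_2$, joint concavity restricted to the diagonal gives
\[
\inner{X}{\Jbb^f_{\rho_t}X}_{\mathrm{HS}}\ \ge\ t\inner{X}{\Jbb^f_{\rho_1}X}_{\mathrm{HS}}+(1-t)\inner{X}{\Jbb^f_{\rho_2}X}_{\mathrm{HS}} .
\]
The set $S_d$ is convex and sits in $M_d^{++}$, so this step is immediate.

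Convexity in $X$ for fixed $\rho$ is a separate argument. The map $X\mapsto\inner{X}{\Jbb^f_\rho X}_{\mathrm{HS}}$ is a quadratic form in $X$ given by a positive super-operator. To see positivity, I use that the left multiplication $L_\rho$ and right multiplication $R_\rho$ commute and are positive with respect to the Hilbert--Schmidt inner product. Hence $\Jbb^f_\rho=f(L_\rho R_\rho^{-1})R_\rho$ is positive, with eigenvalues $f(p_i/p_j)p_j>0$ in the basis $\ket{i}\bra{j}$ of eigenvectors of $\rho$. A positive semidefinite quadratic form $X\mapsto\|(\Jbb^f_\rho)^{1/2}X\|_{\mathrm{HS}}^2$ is convex, since it is a squared norm composed with a linear map.

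For part (ii), I need joint convexity in $(\rho,X)$, which is stronger than what the diagonal restriction of Theorem~\ref{thm:HP_thm}(iii) directly provides. That restriction only yields convexity in $\rho$ for fixed $X$. This is the main obstacle. My approach is to use the variational representation
\[
\inner{X}{(\Jbb^f_\rho)^{-1}X}_{\mathrm{HS}}=\sup_{Y\in M_d}\Big\{2\Real\inner{Y}{X}_{\mathrm{HS}}-\inner{Y}{\Jbb^f_\rho Y}_{\mathrm{HS}}\Big\},
\]
which holds because $\Jbb^f_\rho>0$; the supremum is attained at $Y=(\Jbb^f_\rho)^{-1}X$. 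For each fixed $Y$, the bracket is affine in $X$. By part (i), applied to $Y$, its second term is convex in $\rho$, since it is minus a concave function. The bracket is therefore jointly convex in $(\rho,X)$ for each fixed $Y$. A pointwise supremum of jointly convex functions is jointly convex, which gives (ii). The only steps left to check are the attainment of the supremum by completing the square and that $S_d\times M_d$ is a convex domain.
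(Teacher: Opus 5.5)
Your proof is correct, but for part (ii) it takes a different route from the paper. The paper's whole argument is one line: set $\Jbb^f_{\rho,\rho}=\Jbb^f_\rho$ in Theorem~\ref{thm:HP_thm}, i.e.\ restrict (ii) and (iii) of that theorem to the diagonal $A=B=\rho$.

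As you noticed, Theorem~\ref{thm:HP_thm}(iii) is stated in the paper for each fixed $X$. So the diagonal restriction only gives convexity in $\rho$, not joint convexity in $(\rho,X)$. The paper's route therefore needs a form of Hiai--Petz in which the convexity is joint in $(A,B,X)$. It also leaves the convexity in $X$ in part (i) implicit.

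You use only the concavity half on the diagonal. You then get (ii) from it via the Legendre-type representation
\[
\inner{X}{(\Jbb^f_\rho)^{-1}X}_{\mathrm{HS}}=\sup_Y\bigl\{2\Real\inner{Y}{X}_{\mathrm{HS}}-\inner{Y}{\Jbb^f_\rho Y}_{\mathrm{HS}}\bigr\}.
\]
This is a supremum of functions of the form (affine in $X$) plus (convex in $\rho$), hence jointly convex. Your eigenvalue computation $f(p_i/p_j)p_j>0$ separately settles convexity in $X$.

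The paper's route is shorter but leans on the full strength of the Hiai--Petz equivalence. Yours is self-contained once part (i) is in hand. It shows that (ii) follows directly from the concavity of $\rho\mapsto\Jbb^f_\rho$, and it actually proves the joint convexity the corollary claims. The same argument off the diagonal also derives the jointly-in-$(A,B,X)$ form of (iii) from (ii).
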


\section{Concavity of the $f$-quadratic form}\label{app:concavity}
In this appendix we justify the following fact used in the proof of Theorem~\ref{thm:ineq2}:
for any fixed operator $X$, the map
\[
\rho \ \longmapsto\ \Phi_f(\rho;X):=\tr\!\left[X^\dagger \mathbb J_\rho^f(X)\right]
\]
is concave, where $\mathbb J_\rho^f$ is the Kubo--Ando mean of the left and right multiplication super-operators
associated with the operator monotone function $f$ with $f(1)=1$.

\subsection{Preliminaries}
For $\rho>0$, define the left/right multiplication super-operators
\[
L_\rho(Y):=\rho Y,\qquad R_\rho(Y):=Y\rho,
\]
and the modular operator $\Delta_\rho:=L_\rho R_\rho^{-1}$.
Given an operator monotone function $f:(0,\infty)\to\mathbb{R}$ with $f(1)=1$,
define
\begin{equation}\label{eq:Jf_def}
\mathbb J_\rho^f \ :=\ R_\rho^{1/2}\, f(\Delta_\rho)\, R_\rho^{1/2}.
\end{equation}
Equivalently, for positive definite operators
\(A>0\) and \(B>0\), using the \emph{mean transformation} (the operator perspective)
\begin{equation}\label{eq:Pf_def}
P_f(A,B)\ :=\ B^{1/2}\, f\!\left(B^{-1/2}AB^{-1/2}\right)\, B^{1/2},
\end{equation}
one has the identity
\begin{equation}\label{eq:Jf_as_Pf}
\mathbb J_\rho^f \;=\; P_f(L_\rho,R_\rho).
\end{equation}

We will use two standard ingredients:
(i) operator monotone $f$ on $(0,\infty)$ is operator concave;
(ii) Jensen's operator inequality for operator concave functions.

\subsection{Joint concavity of the mean transformation}
\begin{lemma}[Jensen inequality]\label{lem:Jensen}
Let $f$ be operator concave on $(0,\infty)$.
If $\{K_\alpha\}$ satisfies $\sum_\alpha K_\alpha^\dagger K_\alpha = I$ and each $X_\alpha>0$,
then
\[
f\!\left(\sum_\alpha K_\alpha^\dagger X_\alpha K_\alpha\right)
\ \geq\ \sum_\alpha K_\alpha^\dagger f(X_\alpha) K_\alpha.
\]
\end{lemma}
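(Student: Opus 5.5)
The plan is to prove the operator Jensen inequality by the standard dilation argument of Hansen--Pedersen, which reduces the general Kraus-type statement to the case of a single unitary conjugation combined with a block-diagonal compression. Throughout, $f$ is operator concave on $(0,\infty)$, which is all we use.

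First, we reduce to finitely many $\alpha$, say $\alpha=1,\ldots,m$. We then embed everything in $\Hil\otimes\Cbb^m$. Define the block-diagonal operator $\Xds:=\bigoplus_\alpha X_\alpha>0$ and the column isometry $V:\Hil\to\Hil\otimes\Cbb^m$, $V\psi:=\sum_\alpha K_\alpha\psi\otimes e_\alpha$. The hypothesis $\sum_\alpha K_\alpha^\dagger K_\alpha=I$ says exactly that $V^\dagger V=I$, and
\[
V^\dagger \Xds V=\sum_\alpha K_\alpha^\dagger X_\alpha K_\alpha,\qquad V^\dagger f(\Xds)V=\sum_\alpha K_\alpha^\dagger f(X_\alpha)K_\alpha,
\]
since $f(\Xds)=\bigoplus_\alpha f(X_\alpha)$ by functional calculus. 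The claim is therefore the isometry form
\[
f(V^\dagger \Xds V)\ \geq\ V^\dagger f(\Xds)V .
\]

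Second, we dilate $V$ to a unitary. Let $\Hcal':=\Hil\otimes\Cbb^m$ and work on $\Hcal'\oplus\Hcal'$, or more simply on $\Hcal'$ after padding $\Hil$ by zero. Complete $V$ to a unitary $U$ on $\Hcal'$ whose first block column is $V$, which is possible because $V$ is an isometry. Set $P$ to be the projection onto the first copy $\Hil\otimes e_1$. Then $P U^\dagger \Xds U P$ is the compression giving $V^\dagger \Xds V$ in the corner. To handle the positivity domain, we instead use the symmetry trick. Let $T:=U^\dagger\Xds U>0$ and $S:=2P-I$, a self-adjoint unitary. The pinching satisfies
\[
PTP+(I-P)T(I-P)=\tfrac12\bigl(T+STS\bigr).
\]
Operator concavity then gives
\[
f\bigl(\tfrac12(T+STS)\bigr)\ \geq\ \tfrac12 f(T)+\tfrac12 S f(T) S ,
\]
using $f(STS)=Sf(T)S$ and $f(U^\dagger\Xds U)=U^\dagger f(\Xds)U$. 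The left-hand side is block diagonal, with $f(PTP|_{P})$ in the corner. Compressing both sides by $P$ yields
\[
f(V^\dagger \Xds V)\geq P U^\dagger f(\Xds) U P = V^\dagger f(\Xds)V .
\]

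The main obstacle is bookkeeping the domain. The corner block $PTP$ restricted to the range of $P$ is positive, but the off-corner block $(I-P)T(I-P)$ must also lie in $(0,\infty)$ for $f$ to be evaluated on it. This holds because $T>0$ implies every compression of $T$ to a subspace is strictly positive on that subspace. We must also verify that the block-diagonal functional calculus $f(A\oplus B)=f(A)\oplus f(B)$ applies, which it does on each summand separately. Finally, the continuous-index (integral) version needed later, for $\rho_\Brm=\int\pi\rho_\ta$, follows by approximating the integral with finite sums and passing to the limit, using the continuity of $f$ on compact subsets of $(0,\infty)$.
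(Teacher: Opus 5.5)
Your proof is correct. It is the standard Hansen--Pedersen argument: you package the Kraus family as an isometry $V$ into $\Hil\otimes\Cbb^m$, complete $V$ to a unitary $U$, and use the pinching identity $PTP+(I-P)T(I-P)=\tfrac12(T+STS)$ with $S=2P-I$. Midpoint concavity, together with $PS=SP=P$, then gives the claim; you use the latter implicitly when compressing $\tfrac12 Sf(T)S$. Your treatment of the domain issue is exactly what is needed: both diagonal blocks of a compression of $T>0$ are positive definite.

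The paper gives no proof of this lemma. In Appendix~\ref{app:concavity} it is simply invoked as a standard ingredient, so your argument supplies the missing justification rather than competing with a different route.

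One correction to your final paragraph: no continuous-index version of the operator inequality is used anywhere in the paper. The lemma is applied only in the proof of Lemma~\ref{lem:Pf_joint_concave}, to the two operators $K_1=\sqrt{t}\,B_1^{1/2}B_t^{-1/2}$ and $K_2=\sqrt{1-t}\,B_2^{1/2}B_t^{-1/2}$. The average over $\theta$ in Eq.~\eqref{eq:concavity_core} comes from concavity of the scalar map $\rho\mapsto\Phi_f(\rho;X)$, not from an integral form of operator Jensen. If you did want such a form, your limiting argument would also need the spectra of the approximating sums to stay in a fixed compact subset of $(0,\infty)$, which you have not ensured.

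Similarly, the opening ``reduce to finitely many $\alpha$'' is vacuous for the finite families at hand. For countable families it would require an actual argument.
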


\begin{lemma}[Joint concavity of $P_f$]\label{lem:Pf_joint_concave}
Let $f$ be operator concave on $(0,\infty)$.
Then $P_f(A,B)$ defined in \eqref{eq:Pf_def} is jointly concave on pairs $(A,B)$ with $B>0$:
for any $t\in[0,1]$ and $A_1,A_2 > 0$, $B_1,B_2>0$,
\[
P_f(A_t, B_t) ~\geq~ t P_f(A_1,B_1) + (1-t)P_f(A_2,B_2),
\]
\end{lemma}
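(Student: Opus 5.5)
The plan is the standard Effros-type argument: obtain joint concavity of the perspective from a single application of the Jensen operator inequality (Lemma~\ref{lem:Jensen}), using a suitably chosen pair of contractions. Throughout, I read $A_t:=tA_1+(1-t)A_2$ and $B_t:=tB_1+(1-t)B_2$. The endpoint cases $t\in\{0,1\}$ are trivial, so I fix $t\in(0,1)$. Then $B_t>0$, so $B_t^{\pm1/2}$ exist and $P_f(A_t,B_t)$ is well defined.

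\textbf{Step 1 (the contractions).} Define
\[
K_1:=t^{1/2}B_1^{1/2}B_t^{-1/2},\qquad K_2:=(1-t)^{1/2}B_2^{1/2}B_t^{-1/2},
\]
and set $X_\alpha:=B_\alpha^{-1/2}A_\alpha B_\alpha^{-1/2}>0$ for $\alpha=1,2$. A direct check gives
\[
K_1^\dagger K_1+K_2^\dagger K_2=B_t^{-1/2}\bigl(tB_1+(1-t)B_2\bigr)B_t^{-1/2}=I,
\]
so the hypothesis of Lemma~\ref{lem:Jensen} holds with equality. It is important that the normalization is exact and not merely $\le I$: this avoids any condition on $f$ at $0$, where $f$ is not defined.

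\textbf{Step 2 (identify both sides of Jensen).} For the argument of $f$, compute
\[
\sum_\alpha K_\alpha^\dagger X_\alpha K_\alpha=B_t^{-1/2}\bigl(tA_1+(1-t)A_2\bigr)B_t^{-1/2}=B_t^{-1/2}A_tB_t^{-1/2},
\]
because the factors $B_\alpha^{1/2}$ in $K_\alpha$ cancel the factors $B_\alpha^{-1/2}$ in $X_\alpha$. On the other side,
\[
K_\alpha^\dagger f(X_\alpha)K_\alpha=c_\alpha\,B_t^{-1/2}\,P_f(A_\alpha,B_\alpha)\,B_t^{-1/2},\qquad c_1=t,\ c_2=1-t,
\]
directly from the definition~\eqref{eq:Pf_def}. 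Lemma~\ref{lem:Jensen} therefore yields
\[
f\bigl(B_t^{-1/2}A_tB_t^{-1/2}\bigr)\ \ge\ B_t^{-1/2}\bigl[tP_f(A_1,B_1)+(1-t)P_f(A_2,B_2)\bigr]B_t^{-1/2}.
\]

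\textbf{Step 3 (undo the congruence).} Conjugate both sides by $B_t^{1/2}$. Congruence preserves the operator order, and the left-hand side becomes exactly $P_f(A_t,B_t)$, which is the claimed inequality.

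\textbf{Main obstacle.} The only nontrivial point is finding the contractions $K_\alpha$ so that two things happen simultaneously: their normalization reproduces $B_t$, and the Jensen argument reproduces $A_t$. Once these are chosen, everything else is bookkeeping. I would also remark that the same argument applies verbatim on any finite-dimensional Hilbert space. In particular it applies on the Hilbert--Schmidt space with $A=L_\rho$ and $B=R_\rho$, which is how the lemma feeds into~\eqref{eq:Jf_as_Pf}. Since $\rho\mapsto L_\rho$ and $\rho\mapsto R_\rho$ are linear, concavity of $\rho\mapsto\mathbb J^f_\rho$, and hence of $\Phi_f(\rho;X)$, then follows.
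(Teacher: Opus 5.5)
Your proposal is correct and follows the paper's proof essentially verbatim. It uses the same contractions $K_1=\sqrt{t}\,B_1^{1/2}B_t^{-1/2}$ and $K_2=\sqrt{1-t}\,B_2^{1/2}B_t^{-1/2}$, the same $X_\alpha=B_\alpha^{-1/2}A_\alpha B_\alpha^{-1/2}$, a single application of the Jensen operator inequality (Lemma~\ref{lem:Jensen}), and the same final congruence by $B_t^{1/2}$. Your remark that the exact normalization $\sum_\alpha K_\alpha^\dagger K_\alpha=I$ avoids any condition on $f$ at $0$ is a useful clarification that the paper leaves implicit.
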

where $A_t:=tA_1+(1-t)A_2$ and $B_t:=tB_1+(1-t)B_2$ (so $B_t>0$).

\begin{proof}
Let
\begin{align*}
X_i:=B_i^{-1/2}A_iB_i^{-1/2}\quad(i=1,2),
\\
K_1:=\sqrt{t}\,B_1^{1/2}B_t^{-1/2},\ \ 
K_2:=\sqrt{1-t}\,B_2^{1/2}B_t^{-1/2}.  
\end{align*}
Then
\[
K_1^\dagger K_1 + K_2^\dagger K_2
= B_t^{-1/2}(tB_1+(1-t)B_2)B_t^{-1/2} = I,
\]
and a direct computation shows
\begin{align*}
K_1^\dagger X_1 K_1 + K_2^\dagger X_2 K_2
&= B_t^{-1/2}(tA_1+(1-t)A_2)B_t^{-1/2}\\
&= B_t^{-1/2}A_tB_t^{-1/2}.  
\end{align*}
Applying Lemma~\ref{lem:Jensen} gives
\[
f(B_t^{-1/2}A_tB_t^{-1/2})
\geq
K_1^\dagger f(X_1)K_1 + K_2^\dagger f(X_2)K_2.
\]
Multiplying $B_t^{1/2}$ from left and right yields
\begin{align*}
&B_t^{1/2} f(B_t^{-1/2}A_tB_t^{-1/2}) B_t^{1/2}
\\  \geq 
& t\,B_1^{1/2} f(X_1) B_1^{1/2} + (1-t)\,B_2^{1/2} f(X_2) B_2^{1/2},  
\end{align*}
which is exactly the desired joint concavity of $P_f$.
\end{proof}

\subsection{Concavity of $\rho \mapsto \Phi_f(\rho;X)$}
\begin{proposition}[Concavity of the $f$-quadratic form]\label{prop:Phi_concave}
Let $f$ be operator monotone on $(0,\infty)$ with $f(1)=1$.
Fix an operator $X$. Then the map
\[
\rho \ \longmapsto\ \Phi_f(\rho;X)=\tr\!\left[X^\dagger \mathbb J_\rho^f(X)\right]
\]
is concave on the set of density operators $\rho>0$.
\end{proposition}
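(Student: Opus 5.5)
The plan is to deduce the concavity of $\rho\mapsto\Phi_f(\rho;X)$ directly from Lemma~\ref{lem:Pf_joint_concave}, applied at the level of super-operators. We regard $M_d$ as a Hilbert space with the Hilbert--Schmidt inner product. On this space $L_\rho$ and $R_\rho$ are positive definite operators whenever $\rho>0$, and by the identity \eqref{eq:Jf_as_Pf} we have $\mathbb J_\rho^f=P_f(L_\rho,R_\rho)$. Here $P_f$ is the operator perspective \eqref{eq:Pf_def}, now taken on $\mathcal{B}(M_d)$ instead of $M_d$.

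\textbf{Step 1 (operator concavity of $f$).} Since $f$ is operator monotone on $(0,\infty)$, it is operator concave. This is the standard L\"owner--Kraus fact quoted earlier as ingredient (i). It holds on Hilbert spaces of every finite dimension, and in particular on $M_d\cong\Cbb^{d^2}$. Consequently Lemma~\ref{lem:Jensen}, and with it Lemma~\ref{lem:Pf_joint_concave}, apply verbatim with $M_d$ in place of $\Hil$.

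\textbf{Step 2 (linearity of $L$ and $R$).} Let $\rho_1,\rho_2>0$, $t\in[0,1]$, and $\rho_t:=t\rho_1+(1-t)\rho_2$. The maps $\rho\mapsto L_\rho$ and $\rho\mapsto R_\rho$ are linear, so $L_{\rho_t}=tL_{\rho_1}+(1-t)L_{\rho_2}$, and likewise for $R$. Applying Lemma~\ref{lem:Pf_joint_concave} with $A_i=L_{\rho_i}$ and $B_i=R_{\rho_i}$ gives
\begin{equation*}
\mathbb J^f_{\rho_t}=P_f(L_{\rho_t},R_{\rho_t})\ \geq\ t\,\mathbb J^f_{\rho_1}+(1-t)\,\mathbb J^f_{\rho_2}
\end{equation*}
as super-operators on $M_d$.

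\textbf{Step 3 (scalarization).} Take the Hilbert--Schmidt expectation of this inequality against the fixed $X$, that is, $\tr[X^\dagger(\cdot)(X)]$. Positivity of super-operators is preserved under this operation, so
\[\Phi_f(\rho_t;X)\ge t\,\Phi_f(\rho_1;X)+(1-t)\,\Phi_f(\rho_2;X),\]
which is exactly the claimed concavity.

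\textbf{Main obstacle.} The hardest part is justifying \eqref{eq:Jf_as_Pf} rigorously, together with the consistency of \eqref{eq:Jf_def} with the earlier definition $f(\Delta_\rho)X\rho$. The key observation is that $L_\rho$ and $R_\rho$ commute. Therefore $R_\rho^{-1/2}L_\rho R_\rho^{-1/2}=L_\rho R_\rho^{-1}=\Delta_\rho$, and $R_\rho^{1/2}f(\Delta_\rho)R_\rho^{1/2}=f(\Delta_\rho)R_\rho$, which acts as $X\mapsto f(\Delta_\rho)(X)\rho$. Commutativity also lets us diagonalize everything simultaneously in the basis $\ketbra{i}{j}$ built from eigenvectors of $\rho$. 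In that basis one can check directly that the super-operator functional calculus agrees with the scalar formula $\lambda_j f(\lambda_i/\lambda_j)$.

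One subtlety remains: for $\rho_1\neq\rho_2$ the super-operators $L_{\rho_1}$ and $L_{\rho_2}$ need not commute, so it is essential that Lemma~\ref{lem:Pf_joint_concave} was proved for general, non-commuting pairs $(A_i,B_i)$. It was, through the Jensen argument, so no further work is needed. The result is exactly statement (i) of Corollary~\ref{cor:conv-concav}, and the proof above gives a self-contained derivation of it.
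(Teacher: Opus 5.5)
Your proposal is correct and follows the paper's proof step by step: operator concavity of $f$, joint concavity of the perspective $P_f$ via the Jensen argument, linearity of $\rho\mapsto L_\rho,R_\rho$ to obtain $\mathbb J^f_{\rho_t}\ge t\,\mathbb J^f_{\rho_1}+(1-t)\,\mathbb J^f_{\rho_2}$, and then evaluating this super-operator inequality on $X$ in Hilbert--Schmidt space. Your extra check adds something the paper only asserts: since $L_\rho$ and $R_\rho$ commute, $P_f(L_\rho,R_\rho)=f(\Delta_\rho)R_\rho$, which shows the perspective form agrees with the earlier definition $f(\Delta_\rho)X\rho$ of $\mathbb J^f_\rho$.
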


\begin{proof}
Since $f$ is operator monotone on $(0,\infty)$, it is operator concave.
By Lemma~\ref{lem:Pf_joint_concave}, $P_f(A,B)$ is jointly concave in $(A,B)$.

Now note that $L_\rho$ and $R_\rho$ depend \emph{linearly} on $\rho$, i.e.,
for $\rho_t:=t\rho_1+(1-t)\rho_2$,
\[
L_{\rho_t}=tL_{\rho_1}+(1-t)L_{\rho_2},\qquad
R_{\rho_t}=tR_{\rho_1}+(1-t)R_{\rho_2}.
\]
Therefore, using \eqref{eq:Jf_as_Pf} and the joint concavity of $P_f$,
\begin{align*}
\mathbb J_{\rho_t}^f
= P_f(L_{\rho_t},R_{\rho_t})
&\geq tP_f(L_{\rho_1},R_{\rho_1}) + (1-t)P_f(L_{\rho_2},R_{\rho_2})
\\
&= t\mathbb J_{\rho_1}^f + (1-t)\mathbb J_{\rho_2}^f.  
\end{align*}
Finally, applying the superoperator inequality
to the vector $X$ in the Hilbert--Schmidt space,
we obtain
\begin{align*}
\Phi_f(\rho_t;X)
&=
\tr\!\left[
X^\dagger
\mathbb J_{\rho_t}^f(X)
\right]
\\
&\ge
\tr\!\left[
X^\dagger
\bigl(
t\mathbb J_{\rho_1}^f
+
(1-t)\mathbb J_{\rho_2}^f
\bigr)(X)
\right]
\\
&=
t\,\Phi_f(\rho_1;X)
+
(1-t)\,\Phi_f(\rho_2;X),
\end{align*}
which proves concavity. 
\end{proof}
This result also could be noticed from Theorem~\ref{thm:petz_thm}, Theorem~\ref{thm:HP_thm} and Corollary~\ref{cor:conv-concav}.

\section{Proofs}\label{sec:App_proof}

\subsection{Proof for Corollary \ref{cor:fBLD}} \label{sec:App_proof1}
Here is the well-known lemma (see, for example, Lemma 6.6.1 in Ref.~\cite{holevobook}) to prove the corollary. 
Given positive semidefinite matrices $\Zsf,\Wsf\in M_d^{+}$ such that $\Wsf$ is real symmetric, consider the following optimization:
\[
F(\Zsf|\Wsf)=\min_{\Vsf}\{\Tr[\Wsf\Vsf]\,|\,\Vsf\mbox{: real symmetric},\Vsf\geq \Zsf\}. 
\]
This has an analytical form as follows. 
\begin{lemma} \label{lem:equiv_holevo}
For any $\Zsf,\Wsf\geq0$, we have 
\[
F(\Zsf|\Wsf)=\Tr[\Wsf\Zsf]+\Tr|\Wsf^{1/2}\Imag \Zsf\Wsf^{1/2}|, 
\]
and the optimizer $\Vsf_*$ is 
\[
\Vsf_*=\Wsf^{1/2}\Real \Zsf\Wsf^{1/2}+\Wsf^{-1/2}|\Wsf^{1/2}\Imag Z\Wsf^{1/2}|\Wsf^{-1/2},
\]
where $|X|:=\sqrt{X^\dagger X}$ is the absolute operator. 
\end{lemma}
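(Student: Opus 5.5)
The plan is to prove the two inequalities $F(\Zsf|\Wsf)\ge \Tr[\Wsf\Zsf]+\Tr|\Wsf^{1/2}\Imag\Zsf\Wsf^{1/2}|$ and $\le$ separately. First I record that, since $\Wsf$ is real symmetric and $\Imag\Zsf$ is real antisymmetric (because $\Zsf$ is Hermitian), we have $\Tr[\Wsf\Imag\Zsf]=0$ and hence $\Tr[\Wsf\Zsf]=\Tr[\Wsf\Real\Zsf]$. Write $\Zsf=\Real\Zsf+i\Imag\Zsf$. For a real symmetric $\Vsf\ge\Zsf$, taking the complex conjugate of this inequality gives $\Vsf\ge\overline{\Zsf}=\Real\Zsf-i\Imag\Zsf$. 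Hence both $\Vsf-\Real\Zsf\ge i\Imag\Zsf$ and $\Vsf-\Real\Zsf\ge -i\Imag\Zsf$ hold.

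For the lower bound, set $A:=\Wsf^{1/2}(\Vsf-\Real\Zsf)\Wsf^{1/2}$ and $B:=\Wsf^{1/2}\Imag\Zsf\Wsf^{1/2}$. Congruence preserves the two inequalities, so $A\ge iB$ and $A\ge -iB$. The matrix $iB$ is Hermitian. Let $P_+$ be its spectral projection onto the positive eigenvalues and $P_-=I-P_+$. Then
\[
\Tr A=\Tr[AP_+]+\Tr[AP_-]\ge \Tr[iBP_+]+\Tr[-iBP_-]=\Tr|iB|=\Tr|B|.
\]
Here $\Tr[(A-iB)P_+]\ge0$ and $\Tr[(A+iB)P_-]\ge0$ because each is the trace of a positive semidefinite matrix against a projection. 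Since $\Tr[\Wsf\Vsf]-\Tr[\Wsf\Real\Zsf]=\Tr A$, this gives the lower bound.

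For attainability, I would first treat invertible $\Wsf$ and take
\[
\Vsf_*=\Real\Zsf+\Wsf^{-1/2}|B|\Wsf^{-1/2}.
\]
I read the displayed optimizer in the statement as having $\Real\Zsf$ unconjugated; this is the choice that makes the traces match. Because $B$ is real antisymmetric, $|B|=\sqrt{B^\top B}=\sqrt{-B^2}$ is real symmetric, so $\Vsf_*$ is admissible. Moreover $\Vsf_*-\Zsf=\Wsf^{-1/2}(|B|-iB)\Wsf^{-1/2}\ge0$, since $|iB|\ge iB$ and $|iB|=|B|$. Finally $\Tr[\Wsf\Vsf_*]=\Tr[\Wsf\Real\Zsf]+\Tr|B|$, which matches the lower bound.

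The main obstacle is singular $\Wsf$, where $\Wsf^{-1/2}$ does not exist. I would handle it by continuity. Replace $\Wsf$ by $\Wsf_\epsilon=\Wsf+\epsilon I$, use the feasible $\Vsf_\epsilon$ built above, and bound
\[
\Tr[\Wsf\Vsf_\epsilon]\le \Tr[\Wsf_\epsilon\Vsf_\epsilon]-\epsilon\Tr\Vsf_\epsilon .
\]
The delicate point is that $\Tr\Vsf_\epsilon$ could blow up like $\epsilon^{-1/2}$. The intended way around this is to observe that $\epsilon\Tr\Vsf_\epsilon\to0$, or to give a direct construction: use the pseudo-inverse on the support of $\Wsf$, and add $c(I-P_{\Wsf})$ with $c$ large on its kernel, which leaves $\Tr[\Wsf\Vsf]$ unchanged while restoring $\Vsf\ge\Zsf$ via a Schur-complement argument. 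I expect this second route to be cleaner, and it is the one I would try first.
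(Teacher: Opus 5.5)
Your lower bound is correct: you conjugate $\Vsf\geq\Zsf$ to get $A\geq\pm iB$, then split $\Tr A$ along the spectral projections of $iB$. Your optimizer for invertible $\Wsf$ is also correct. The paper gives no proof and only cites Holevo's lemma, and your argument is the standard one. You are right that the displayed $\Vsf_*$ is a typo and should read $\Real\Zsf+\Wsf^{-1/2}|B|\Wsf^{-1/2}$. Your argument also uses only that $\Zsf$ is Hermitian, which is what the application $\Zsf=\Msf-\Ksf^f_\mathrm{B}$ actually needs.

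The gap is in the singular case, specifically the direct construction you say you would try first. It cannot work, because for singular $\Wsf$ the minimum is in general not attained. Take
\[
\Wsf=\begin{pmatrix}1&0\\0&0\end{pmatrix},\qquad \Zsf=\begin{pmatrix}1&i\\-i&1\end{pmatrix}\geq 0 .
\]
Then $B=\Wsf^{1/2}\Imag\Zsf\,\Wsf^{1/2}=0$, so the claimed value is $\Tr[\Wsf\Zsf]=1$. A real symmetric $\Vsf=\begin{pmatrix}a&b\\b&c\end{pmatrix}$ satisfies $\Vsf\geq\Zsf$ iff $a,c\geq1$ and $(a-1)(c-1)\geq b^2+1$. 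Hence $\Tr[\Wsf\Vsf]=a>1$ for every feasible $\Vsf$, and $1$ is approached only as $c\to\infty$.

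The obstruction is structural. Equality in your own trace inequality forces $(A-iB)P_+=0$ and $(A+iB)P_-=0$, hence $A=|B|$. So on the support of $\Wsf$, the block of $\Vsf-\Zsf$ equals $\Wsf^{-1/2}(|B|-iB)\Wsf^{-1/2}$, with the inverse taken on the support. Here $|B|-iB$ is twice the negative part of $iB$, of rank $\tfrac12\mathrm{rank}\,B$, so this block is always singular. Positivity of $\Vsf-\Zsf$ then requires the off-diagonal block (relative to support and kernel of $\Wsf$) to lie in the range of that block. That off-diagonal block is $(\Vsf-\Real\Zsf)_{12}-i(\Imag\Zsf)_{12}$, and only its real part is at your disposal. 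No choice of $c$ on the kernel repairs this, so your Schur-complement step fails.

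What is true, and what your $\epsilon$-route proves, is the formula for the infimum. Your worry about $\epsilon\Tr\Vsf_\epsilon$ is unnecessary, because only a one-sided bound is needed. Feasibility $\Vsf_\epsilon\geq\Zsf$ gives $\Tr\Vsf_\epsilon\geq\Tr\Zsf$. With $B_\epsilon:=\Wsf_\epsilon^{1/2}\Imag\Zsf\,\Wsf_\epsilon^{1/2}$,
\[
\Tr[\Wsf\Vsf_\epsilon]\leq\Tr[\Wsf_\epsilon\Real\Zsf]+\Tr|B_\epsilon|-\epsilon\Tr\Zsf\longrightarrow\Tr[\Wsf\Real\Zsf]+\Tr|B| .
\]
The limit follows from continuity of $\Wsf\mapsto\Wsf^{1/2}$ and of the trace norm.

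Alternatively, your direct construction works once you add $\delta P_{\Wsf}$ to the support block. That block then becomes positive definite, so a large $c$ on the kernel makes the Schur complement positive. This yields feasible points within $\delta\Tr\Wsf$ of the bound.

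So for singular $\Wsf$ the lemma holds with $\min$ replaced by $\inf$, and no optimizer $\Vsf_*$ need exist. The infimum version is all that Corollary~\ref{cor:fBLD} uses.
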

Using this lemma, we prove the corollary as follows. 
(Here the constraint r.s. stands for real symmetric.)
\begin{align*}
\Rsf[\hPi|\Wsf]&=\Tr[\Wsf \Vsf[\hPi]]\\
&\geq \min_{\Vsf:\mbox{r.s.}}\{\Tr[\Wsf \Vsf]| \Vsf\geq\Msf-\Ksf^f_\Brm\}\\
&=F(\Msf-\Ksf^f_\Brm|\Wsf)\\
&=\Tr[\Wsf(\Msf-\Ksf^f_\Brm)]+\Tr|\Imag (\Msf-\Ksf^f_\Brm)|\\
&=\Tr\left[\Wsf \Msf\right]-\Tr\left[\Wsf \Real\Ksf^f_\Brm\right]+\Tr\left|\Wsf^{\frac12} \Imag\Ksf^f_\Brm\Wsf^{\frac12} \right| . 
\end{align*}
In the last line, $\Imag (\Msf-\Ksf^f_\Brm)=-\Imag \Ksf^f_\Brm$ is used. $\square$

\subsection{Proof for Eq.~\eqref{eq:cqBayes}} \label{sec:App_proof_cq}
To prove Eq.~\eqref{eq:cqBayes}, we note that the monotonicity of the Bayesian metric \eqref{eq:monotone}. 
Let $\Hcal_\Xcal$ be the Hilbert space whose dimension $\dim \Hcal_\Xcal =|\Xcal|=:d_\Xcal$ is same as the measurement outcome set $\Xcal$. 
As a special case, we choose the following CP-TP map from $S_d$ to $S_{d_\Xcal}$. 
\begin{equation}
\Ecal:\ \rho\mapsto \ket{x}e_x(\rho)\bra{x} \mbox{ with } e_x(\rho)= \tr[\rho \Pi_x]. 
\end{equation}
Its domain can be extended to all \(d \times d\) complex matrices $M_d$. 
Note any two matrices $A,B$ will commute after applying the map as $[\Ecal(A), \Ecal(B)]=0$, diagonal matrices commute with each other. 
By taking arbitrary complex vector $c=(c_i)\in\Cbb^n$, monotonicity Eq.~\eqref{eq:monotone} for any $X\in M_{d}$ yields
\begin{align*}
c^\dagger \Ksf^f_{\Brm} c &= \sum_{i,j}c_i^* K^f_{\rhoB}(D^{i}_\Brm,D^{j}_{\Brm}) c_j\\
&=K^f_{\rhoB}(D_c,D_c) \quad \mbox{ where }D_c:=\sum_i c_iD^{i}_{\Brm} \\
&\geq K^f_{\Ecal(\rhoB)}(\Ecal(D_c),\Ecal(D_c))\quad[\mbox{monotonicity}]\\
&= \tr\left[\Ecal(\rhoB)^{-1} \Ecal(D_c)^\dagger \Ecal(D_c) \right]\quad[\mbox{(iii) of Def.~\ref{def:monotone_metric}}]\\
&=\sum_{x\in\Xcal} \frac{e_x(D_c)^* e_x(D_c)}{e_x(\rhoB)}\\
&=\sum_{i,j}c_i^* \sum_{x\in\Xcal} \frac{e_x(D^{i}_\Brm) e_x(D^{j}_\Brm)}{e_x(\rhoB)} c_j\\
&=c^\dagger \Ksf^{f}_{\Brm}[\Pi] c. \hspace{3cm} \square
\end{align*}

\subsection{Proof for Theorem \ref{thm:ineq1}} \label{sec:App_proof2}
The original proof can be found in Ref.~\cite{yamagata2021maximum} and we give a short version of it here. 
It is known that $f:(0,\infty)\to(0,\infty)$ is operator monotone, if and only if $f$ is operator concave. 
By the correspondence to the classical Fisher information matrix (condition (iii) of Definition \ref{def:monotone_metric}) when $f(1)=1$ needs to be satisfied. Since $f$ is differentiable, we can always find $\la_c\in[-1,1]$ such that 
\[
f(t)\leq f_{\la_c}(t)\mbox{ with }\la_c=2 f'(1)-1. 
\]
for all $t\in(0,\infty)$. 
This implies for any $X\in M_d$ and $\rho\in M_d^{++}$, we have
\[
\inner{X}{(\Jbb_\rho^{f})^{-1}X}_{\mathrm{HS}}\geq\inner{X}{(\Jbb_\rho^{f_c})^{-1}X}_{\mathrm{HS}}. 
\]
This is equivalent to the statement: 
for any operator monotone function $f$ there always exists $\lambda$ such that 
\[
K_\rho^f(X,X)\geq K_\rho^{(\la)}(X,X). 
\]

\subsection{Proof for Theorem \ref{thm:ineq2}}

\subsubsection{Eq.~\eqref{eq:step1}} \label{sec:App_step1}
We prove
\[
\Msf-\Ksf^f_\Brm\geq 
\left[\inner{E^{f,i}_{\Brm}-\theta^i I}{E^{f,j}_{\Brm}-\theta^j I}^{f}_{\pi(\ta),\rho_\ta} \right].
\]
The right hand side consists of four terms. We evaluate one by one. 
Let $c\in\Cbb^d$ be an arbitrary vector, and define
\begin{align*}
E^{f}_c&:=\sum_i c_iE^{f,i}_{\Brm},\\
\theta_c&:=\sum_i c_i\theta^{i}.
\end{align*} 
\begin{align} \label{eq:step1-1}
\sum_{i,j}c_i^*\inner{E^{f,i}_{\Brm}}{E^{f,j}_{\Brm}}^{f}_{\pi(\ta),\rho_\ta}c_j 
&=\int_\Theta d\ta \pi(\ta) \inner{E^{f}_{c}}{E^{f}_{c}}^{f}_{\rho_\ta} \nonumber\\
&\leq c^\dagger \Ksf^f_\Brm c.\ [\mbox{\, Equation~(\ref{eq:concavity_core})}]
\end{align}
\begin{align}\label{eq:step1-2}
\sum_{i,j}c_i^*\inner{E^{f,i}_{\Brm}}{\ta^{j}I}^{f}_{\pi(\ta),\rho_\ta}c_j 
&=\inner{E^{f}_{c}}{\ta_cI}^{f}_{\pi(\ta),\rho_\ta}\nonumber\\
&=\Ebb_\pi\left[ \inner{E^{f}_{c}}{\Jbb^f_{\rho_\ta}\ta_{c}}_{\mathrm{HS}}\right]\nonumber\\
&=\Ebb_\pi\left[ \inner{E^{f}_{c}}{\rho_\ta\ta_{c}}_{\mathrm{HS}}\right]\nonumber\\
&= \inner{E^{f}_{c}}{\Ebb_\pi[\rho_\ta\ta_{c}]}_{\mathrm{HS}}\nonumber\\
&=\inner{E^{f}_{c}}{D_{c}}_{\mathrm{HS}}\nonumber\\
&=\sum_{ij} c_i^* \inner{E^{f,i}_{\Brm}}{D^j_{\Brm}}_{\mathrm{HS}}c_j\nonumber\\
&=\sum_{ij} c_i^* \Ksf^{f,ij}_{\Brm}c_j
\end{align}
Similarly, we have
\begin{equation} \label{eq:step1-3}
\sum_{i,j}c_i^*\inner{\ta^{i}I}{E^{f,j}_{\Brm}}^{f}_{\pi(\ta),\rho_\ta}c_j 
=\sum_{ij} c_i^* \Ksf^{f,ij}_{\Brm}c_j.
\end{equation}
Lastly, 
\begin{align} \label{eq:step1-4}
\sum_{i,j}c_i^*\inner{\ta^{i}I}{\ta^{i}I}^{f}_{\pi(\ta),\rho_\ta}c_j 
&=\Ebb_\pi\left[ \inner{\ta_{c}}{\rho_\ta\ta_{c}}_{\mathrm{HS}}\right]\nonumber\\
&=\sum_{ij} c_i^* \Ebb_\pi[\ta^i\ta^j]c_j\nonumber\\
&=c^\dagger \Msf c.
\end{align}
Combining Eqs.~(\ref{eq:step1-1},\ref{eq:step1-2},\ref{eq:step1-3},\ref{eq:step1-4}) gives 
\begin{align*}
\sum_{i,j}c_i^*\inner{E^{f,i}_{\Brm}-\theta^i I&}{E^{f,j}_{\Brm}-\theta^j I}^{f}_{\pi(\ta),\rho_\ta}c_j \\
&=\inner{E^{f}_{c}-\ta_c I}{E^{f}_{c}-\ta_c I}^{f}_{\pi(\ta),\rho_\ta}\\
&\leq c^\dagger \Ksf^f_\Brm c-2 c^\dagger \Ksf^f_\Brm c+c^\dagger \Msf_ c\\
&=c^\dagger (\Msf-\Ksf^f_\Brm) c
\end{align*}

\subsubsection{Lemma} \label{sec:App_step2}
\begin{lemma} \label{lem:QBscore}
For any matrix $X_\ta$ which depends on $\theta$ satisfying the weak boundary condition with respect to the prior, we have
\[
\left\langle X_\ta\,,\,L^{f}_{\ta,j}+\frac{\del\log \pi(\ta)}{\del \ta^j} I\right\rangle^{f}_{\pi(\ta),\rho_\ta}=-\Ebb_\pi\left[\tr\left[\rho_\ta\frac{\del X_\theta}{\del \ta^j}\right]\right]. 
\]
\end{lemma}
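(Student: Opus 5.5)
We expand the averaged inner product and use the defining equation of the $f$LD operators to remove the superoperator $\Jbb^f_{\rho_\ta}$. Then we integrate by parts in $\theta$, exactly as in Lemma~\ref{lem:Bscore}. First, since $\Jbb^f_{\rho_\ta}$ is self-adjoint with respect to the Hilbert--Schmidt inner product and $\Jbb^f_{\rho_\ta}(L^f_{\ta,j})=\del_j\rho_\ta$ by Eq.~\eqref{eq:fLD_op}, we have
\begin{align*}
\inner{X_\ta}{L^f_{\ta,j}}^f_{\rho_\ta}=\tr\!\left[X_\ta^\dagger\,\del_j\rho_\ta\right].
\end{align*}
Second, because $f(1)=1$ and $I$ commutes with everything, $\Jbb^f_{\rho_\ta}(I)=f(\Delta_{\rho_\ta})(I)\rho_\ta=\rho_\ta$. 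This is the same identity used in Eq.~\eqref{eq:step1-2}, and it gives
\begin{align*}
\inner{X_\ta}{\tfrac{\del\log\pi(\ta)}{\del\ta^j}I}^f_{\rho_\ta}=\tfrac{\del\log\pi(\ta)}{\del\ta^j}\tr[X_\ta^\dagger\rho_\ta].
\end{align*}

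Multiplying by $\pi(\ta)$ and integrating, and using $\pi\,\del_j\log\pi=\del_j\pi$, the left-hand side becomes
\begin{align*}
\int_\Theta d\ta\left(\pi(\ta)\tr[X_\ta^\dagger\del_j\rho_\ta]+\del_j\pi(\ta)\tr[X_\ta^\dagger\rho_\ta]\right)=\int_\Theta d\ta\,\del_j\!\left(\pi(\ta)\tr[X_\ta^\dagger\rho_\ta]\right)-\int_\Theta d\ta\,\pi(\ta)\tr\!\left[\rho_\ta\,\del_j X_\ta^\dagger\right].
\end{align*}
This is the product rule applied to $\pi\,\tr[X_\ta^\dagger\rho_\ta]$; it is the quantum analogue of the manipulation in Lemma~\ref{lem:Bscore}. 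The first integral is a total derivative. Under the weak boundary condition, $\pi(\ta)\tr[X_\ta^\dagger\rho_\ta]$ vanishes on $\del\Theta$ along the $\ta^j$ direction, so this integral is zero. What remains is $-\Ebb_\pi[\tr[\rho_\ta\,\del_j X_\ta^\dagger]]$.

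The main point needing care is matching the convention of the statement, which has $\del_j X_\ta$ rather than $\del_j X_\ta^\dagger$. The inner product $\inner{X}{Y}^f_\rho=\tr[X^\dagger\Jbb^f_\rho Y]$ is conjugate-linear in its first slot. In the intended application, Step 3 of Theorem~\ref{thm:ineq2}, we take $X_\ta=E^{f,i}_\Brm-\ta^iI$, and $E^{f,i}_\Brm$ is Hermitian. Indeed, $D^i_\Brm$ is Hermitian and $(\Jbb^f_{\rho_\Brm})^{-1}$ preserves Hermiticity, since $f$ is real and $\Jbb^f_\rho(X)^\dagger=\Jbb^f_\rho(X^\dagger)$ for symmetric $f$ (and for general $f$ one works with the stated convention). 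So I would either state the lemma for Hermitian $X_\ta$, or note that for general $X_\ta$ the identity holds with $X_\ta^\dagger$ in place of $X_\ta$ on the right-hand side.

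I would also make the regularity assumptions explicit: $\rho_\ta$ is differentiable in $\ta$, $L^f_{\ta,j}$ exists (take $\rho_\ta>0$, or restrict to the support), and differentiation may be exchanged with the trace. As a check, with $X_\ta=E^{f,i}_\Brm-\ta^iI$ the right-hand side is $-\Ebb_\pi[\tr[\rho_\ta(-\delta^i_jI)]]=\delta^i_j$, because $E^{f,i}_\Brm$ does not depend on $\ta$. This is the orthogonality relation used in Theorem~\ref{thm:ineq2}.
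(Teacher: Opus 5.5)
Your proof is correct and follows the paper's own argument: use $\Jbb^f_{\rho_\ta}(L^f_{\ta,j})=\del_j\rho_\ta$ and $\Jbb^f_{\rho_\ta}(I)=\rho_\ta$ to remove the superoperator, then discard the total-derivative term $\del_j\bigl(\pi(\ta)\tr[X_\ta^\dagger\rho_\ta]\bigr)$ via the weak boundary condition. Your care with the dagger is warranted, since the paper silently drops it and its last displayed line should read $-\Ebb_\pi[\tr[\rho_\ta\del_j X_\ta]]$. One correction to your side remark: $E^{f,i}_\Brm$ is Hermitian only for symmetric $f$, i.e.\ $f(t)=tf(1/t)$, which in the $\la$-family means only $\la=0$, so keep your $X_\ta^\dagger$ form of the identity. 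That form is all Theorem~\ref{thm:ineq2} needs, because there $\del_j X_\ta=-\delta^i_j I$ is Hermitian.
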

\begin{proof}\
Let $X_\ta$ be any matrix which may depend smoothly on the parameter $\ta\in\Theta$. 
By the assumption of the weak boundary condition, it satisfies 
\[
\pi(\ta) \tr[\rho_\ta X_\ta]\to 0 \mbox{ at }\theta\in\del\Theta, 
\]
for any direction $\ta^i$ ($i=1,2,\ldots,n$). 
The left hand side of the lemma is written by using the definition of $ L^{f}_{\ta,j}$ as follows.
\begin{align}
(\mathrm{LHS})&=\Ebb_\pi[\inner{X_\ta}{L^{f}_{\ta,j}}^f_{\rho_\ta}]+\Ebb_\pi[\del_j\log\pi(\ta)\inner{X_\ta}{I}^f_{\rho_ta}]\nonumber\\
&=\Ebb_\pi[\inner{X_\ta}{\del_j\rho_\ta}_{\mathrm{HS}}]+\Ebb_\pi[\del_j\log\pi(\ta)\tr[X_\ta\rho_\ta]]\nonumber\\
&=\Ebb_\pi[\tr[X_\ta\del_j\rho_\ta]]+\Ebb_\pi[\del_j\log\pi(\ta)\tr[X_\ta\rho_\ta]]. \label{eq:lem_QBscore}
\end{align}
Thus two terms are independent of the choice $f$. 
The second term is evaluated using integration by part and the assumption of the weak boundary condition as
\begin{align*}
\Ebb_\pi[\del_j\log\pi(\ta)\tr[X_\ta\rho_\ta]]
&=\int_\Theta d\ta \pi(\ta) \del_j\log\pi(\ta)\tr[X_\ta\rho_\ta]\\
&=\int_\Theta d\ta  \del_j\pi(\ta)\tr[X_\ta\rho_\ta]\\
&=-\int_\Theta d\ta  \pi(\ta)\del_j\tr[X_\ta\rho_\ta]\\
&=- \Ebb_\pi[\del_j\tr[X_\ta\rho_\ta]]. 
\end{align*}
Combining the last expression with the first term of Eq.~\eqref{eq:lem_QBscore}, we get 
\begin{align*}
(\mathrm{LHS})&=\Ebb_\pi[\tr[X_\ta\del_j\rho_\ta]]- \Ebb_\pi[\del_j\tr[X_\ta\rho_\ta]]\\
&=- \Ebb_\pi[\tr[X_\ta\del_j\rho_\ta]]. 
\end{align*}
This proves the lemma. 
\end{proof}

\subsubsection{Eq.~\eqref{eq:step3}} \label{sec:App_step3}
Consider the two sets of operators. 
\begin{align}
\mbox{set 1:}&\quad\left\{E^{f,i}_{\Brm}-\theta^i I \right\}_{i=1}^n,\\ 
\mbox{set 2:}&\quad\left\{ L^{f}_{\ta,j}+\frac{\del\log \pi(\ta)}{\del \ta^j} I\right\}_{j=1}^n.  
\end{align}
We define $2n\times 2n$ matrix as in the proof of Theorem \ref{thm:Cmain}. 
The 11 block matrix is given by
\begin{align*}
\left\langle E^{f,i}_{\Brm}-\theta^i I\,,\,E^{f,j}_{\Brm}-\theta^j I \right\rangle^{f}_{\pi(\ta),\rho_\ta}
&=\Vsf^f_{\Brm,ij}. 
\end{align*}
The 12 block and 21 block matrices are the identity matrix due to Lemma \ref{lem:QBscore}. 
\begin{align*}
&\left\langle E^{f,i}_{\Brm}-\theta^i I\,,\, L^{f}_{\ta,j}+\frac{\del\log \pi(\ta)}{\del \ta^j} I\right\rangle^{f}_{\pi(\ta),\rho_\ta}\\
&\qquad=-\Ebb_\pi\left[\tr[\rho_\ta \del_j(E^{f,i}_{\Brm}-\theta^i I) ]\right]\\
&\qquad=\Ebb_\pi[\tr[\rho_\ta]\delta^i_{j}]\\
&\qquad=\delta^i_{j} 
\end{align*}
The 22 block matrix is identical to the Bayesian Fisher information matrix as 
\begin{align*}
&\left\langle L^{f}_{\ta,i}+\frac{\del\log \pi(\ta)}{\del \ta^j} I\,,\,L^{f}_{\ta,j}+\frac{\del\log \pi(\ta)}{\del \ta^j} I\right\rangle^{f}_{\pi(\ta),\rho_\ta}\\
&=\Ebb_\pi[\inner{L^{f}_{\ta,i}}{L^{f}_{\ta,j}}^f_{\rho_\ta}]+\Ebb[\del_i \log\pi(\ta)\del_j \log\pi(\ta)]\inner{I}{I}^f_{\rho_\ta}\\
&=\Jsf_{\Brm,ij}^f,
\end{align*}
where we used $\inner{I}{L^{f}_{\ta,j}}^f_{\rho_\ta}=\tr[\del_j\rho_\ta]=0$ and $\inner{I}{I}^f_{\rho_\ta}=1$. 

Therefore, the matrix under consideration is 
\[
\left[\begin{array}{cc} \Vsf^{f}_\Brm & \Isf \\ \Isf & \Jsf^{f}_{\mathrm{B}}\end{array}\right]. 
\]
By definition, this is positive semi-definite, and hence Schur's complement gives 
\[
\Vsf^f_\Brm-\Isf (\Jsf_{\mathrm{B}}^f)^{-1}\Isf \geq0\ \iff\ \Vsf^f_\Brm\geq (\Jsf_{\mathrm{B}}^f)^{-1}. 
\]
\subsection{Proof for $\cNH(\Wsf)\geq \Ccal^{\max}_\mathrm{B}(\Wsf)$} \label{sec:App_BNH-BLD} 
This inequality is proven by showing 
\begin{equation} \label{eq:NHtoBLD}
\cNH(\Wsf)\geq\clamBLD(\Wsf),
\end{equation}
for any $\la\in[-1,1]$. Taking the maximum over $\lambda$, we get the desired statement. 

To prove inequality \eqref{eq:NHtoBLD}, we need to introduce another lower bound, the Bayesian Holevo-type bound \cite{suzuki2024bayesian}.
\begin{definition}[Bayesian Holevo-type bound] 
\begin{multline} \label{def:BHbound}
\cH(\Wsf):=\min_{\Vsf,X=(X^1,\ldots,X^n)}\{\Tr[\Wsf \Vsf]\,|\,\Vsf:\mathrm{r.s.},X^i:\mathrm{Hermitian},\\
\Vsf\geq \Zsf[X]-\Hsf[X]-\Hsf[X]^\top+\Msf \},
\end{multline}
where $\Zsf^{ij}[X]:=\tr[\rhoB X^iX^j]$ and $\Hsf^{ij}[X]:=\tr[D^i_\Brm X^j]$. 
\end{definition}
Using Lemma \ref{lem:equiv_holevo}, it has another form:
\begin{multline*}
\cH(\Wsf)=\min_{X=(X^1,\ldots,X^n)}\Tr[\Wsf (\Zsf[X]-\Hsf[X]-\Hsf[X]^\top)]\\+\Tr|\Wsf^{1/2}\Imag \Zsf[X]\Wsf^{1/2}|+\Tr[\Wsf \Msf],
\end{multline*}
where $X^i$ are Hermitian. In the following discussion, we use the two expressions concurrently. 
In contrast to the BNH bound, the Bayesian Holevo-type bound is expressed only in terms of optimization of $X=(X^1,\ldots, X^n)$. 
The inequality \eqref{eq:NHtoBLD} is proven by the following two inequalities. 
\begin{proposition}
For any weight matrix and any $\la\in[-1,1]$, three bounds $\cNH,\cH,\clamBLD$ satisfy
\begin{align}
\cNH(\Wsf)&\geq\cH(\Wsf), \label{eq:NHtoH}\\
\cH(\Wsf)&\geq\clamBLD(\Wsf) \label{eq:HtoBLD}. 
\end{align}
\end{proposition}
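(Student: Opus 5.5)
We prove the two inequalities separately. For \eqref{eq:NHtoH} we turn any feasible point of the BNH program into a feasible point of the Bayesian Holevo-type program with the same objective value. For \eqref{eq:HtoBLD} we compare the Holevo-type objective with the quantity $F(\cdot|\Wsf)$ of Lemma~\ref{lem:equiv_holevo}, through a Gram-matrix argument in the $\lambda$-inner product $\inner{\cdot}{\cdot}^{(\la)}_{\rhoB}$. Throughout we restrict to the support of $\rhoB$, so that $\rhoB>0$ and $E^{(\la),j}_\Brm$ is uniquely defined.

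\emph{Step 1: $\cNH\geq\cH$.} Take $(\Lbb,X)$ feasible for Theorem~\ref{thm:BNHbound}. Define the $n\times n$ matrix $\Vsf^{jk}:=\tr[\rhoB\Lbb^{jk}]-\Hsf^{jk}[X]-\Hsf^{kj}[X]+\Msf^{jk}$.
\begin{itemize}
\item Each $\Lbb^{jk}$ is Hermitian and $\Lbb^{jk}=\Lbb^{kj}$. Since $X^j$ and $D^j_\Brm$ are Hermitian, $\Hsf[X]$ is real. Hence $\Vsf$ is real symmetric.
\item The constraint $\Lbb\geq XX^{\top_1}$ gives $(\Isf\ot\rhoB^{1/2})(\Lbb-XX^{\top_1})(\Isf\ot\rhoB^{1/2})\geq0$.
\item Taking the partial trace over $\Hil$ preserves positivity. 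This yields $[\tr[\rhoB\Lbb^{jk}]]\geq\Zsf[X]$, i.e. $\Vsf\geq\Zsf[X]-\Hsf[X]-\Hsf[X]^\top+\Msf$.
\end{itemize}
So $(\Vsf,X)$ is feasible for \eqref{def:BHbound}. Its objective $\Tr[\Wsf\Vsf]$ coincides with the BNH objective by Lemma~\ref{lem:Briskrep}'s expression. Minimizing gives $\cNH\geq\cH$.

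\emph{Step 2: a Gram inequality.} Fix Hermitian $X=(X^1,\ldots,X^n)$ and set $\Zsf^{(\la)}_{ij}:=\inner{X^i}{X^j}^{(\la)}_{\rhoB}$, which equals $\plam\tr[X^i\rhoB X^j]+\mlam\tr[X^iX^j\rhoB]$.
\begin{itemize}
\item Comparing with $\Zsf^{ij}=\tr[\rhoB X^iX^j]$ and using Hermiticity, we get $\Real\Zsf^{(\la)}=\Real\Zsf$ and $\Imag\Zsf^{(\la)}=\pm\la\,\Imag\Zsf$; the sign depends only on conventions and is irrelevant below.
\item Since $\Jbb^{(\la)}_{\rhoB}$ is self-adjoint on the Hilbert--Schmidt space and $\Jbb^{(\la)}_{\rhoB}(E^{(\la),i}_\Brm)=D^i_\Brm$ is Hermitian, we have $\inner{E^{(\la),i}_\Brm}{X^j}^{(\la)}_{\rhoB}=\tr[D^i_\Brm X^j]=\Hsf^{ij}[X]$. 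This is real, so the transposed cross term is $\Hsf[X]^\top$.
\end{itemize}
The Gram matrix of the family $\{X^i-E^{(\la),i}_\Brm\}_i$ in the positive inner product $\inner{\cdot}{\cdot}^{(\la)}_{\rhoB}$ is positive semidefinite, and its entries are $\Zsf^{(\la)}-\Hsf[X]-\Hsf[X]^\top+\Ksf^{(\la)}_\Brm$. Therefore
\[
\Zsf^{(\la)}[X]-\Hsf[X]-\Hsf[X]^\top+\Msf\ \geq\ \Msf-\Ksf^{(\la)}_\Brm .
\]

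\emph{Step 3: $\cH\geq\clamBLD$.} Use the second expression of $\cH$. Because $|\la|\le1$,
\[
\Tr|\Wsf^{1/2}\Imag\Zsf\,\Wsf^{1/2}|\geq|\la|\,\Tr|\Wsf^{1/2}\Imag\Zsf\,\Wsf^{1/2}|=\Tr|\Wsf^{1/2}\Imag\Zsf^{(\la)}\Wsf^{1/2}|.
\]
Since the real parts of $\Zsf$ and $\Zsf^{(\la)}$ agree, the Holevo-type objective at $X$ is at least $F(\Zsf^{(\la)}-\Hsf-\Hsf^\top+\Msf\,|\,\Wsf)$, by Lemma~\ref{lem:equiv_holevo}. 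The function $F(\cdot|\Wsf)$ is monotone in its first argument, because enlarging the lower constraint on $\Vsf$ shrinks the feasible set. Combining with Step 2 gives the bound
\[
F(\Msf-\Ksf^{(\la)}_\Brm|\Wsf)=\clamBLD(\Wsf),
\]
where the last equality is the computation in the proof of Corollary~\ref{cor:fBLD}. Minimizing over $X$ yields \eqref{eq:HtoBLD}.

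The main obstacle is Step 3: the Holevo-type bound carries the imaginary part of the $\la=1$-type matrix $\Zsf[X]$, whereas the Gram inequality naturally involves $\Zsf^{(\la)}$. The point to check carefully is the identity $\Imag\Zsf^{(\la)}=\pm\la\,\Imag\Zsf$ for Hermitian $X^i$, which lets the trace-norm term be reduced by the factor $|\la|\le1$.
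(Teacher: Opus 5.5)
Your proposal is correct and follows essentially the same route as the paper. Step~1 is the paper's partial-trace feasibility argument, made precise by your conjugation with $\Isf\ot\rhoB^{1/2}$, which yields the constraint $\Vsf\geq\Zsf[X]-\Hsf[X]-\Hsf[X]^\top+\Msf$ that the paper only sketches. Steps~2--3 use the paper's two ingredients: $\Real\Zsf^{(\la)}=\Real\Zsf$ with $\Imag\Zsf^{(\la)}$ scaled by $\la$, and completing the square in $\inner{\cdot}{\cdot}^{(\la)}_{\rhoB}$ via $\Hsf^{ij}[X]=\inner{E^{(\la),i}_\Brm}{X^j}^{(\la)}_{\rhoB}$. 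The only difference is organizational: you apply these pointwise in $X$ together with monotonicity of $F(\cdot|\Wsf)$, instead of introducing the intermediate bound $\cH^{(\la)}$ and relaxing to non-Hermitian $X$, which also avoids the $\Hsf^\top$ versus $\Hsf^\dagger$ subtlety in that relaxation.
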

\begin{proof}
The first inequality is proven in Ref.~\cite{suzuki2024bayesian}, and we only give a sketch of it. 
Consider a variable $\Lbb$ in the feasible set of the BNH bound. 
The partial trace with respect to the quantum system yields a $n\times n$ real positive semi-definite matrix. 
This is because $\Lbb^{ij}=\Lbb^{ji}:$ Hermitian implies $\Vsf:=\tr_{\Hcal}[\Lbb]$ is real symmetric. 
The other constraint $\Lbb\geq XX^{\top_1}\geq0$ shows $\Vsf$ is positive semi-definite. 
Therefore, all $\Lbb$ in the feasible set of the BNH bound are in the feasible set of optimization \eqref{def:BHbound}. 
Hence, $\cH(\Wsf)$ is always smaller or equal to $\cNH(\Wsf)$. 

Next, let us prove the second inequality \eqref{eq:HtoBLD}. 
We do this by two steps as follows. 
Define the one-parameter family of the Bayesian Holevo-type bound $\cH^{(\la)}(\Wsf)$ ($\la\in[-1,1]$) by
\begin{multline*}
\cH^{(\la)}(\Wsf)=\min_{X=(X^1,\ldots,X^n)}\Tr[\Wsf (\Zsf^{(\la)}[X]-\Hsf[X]-\Hsf[X]^\top)]\\+\Tr|\Wsf^{1/2}\Imag \Zsf^{(\la)}[X]\Wsf^{1/2}|+\Tr[\Wsf \Msf],
\end{multline*}
where $\Zsf^{(\la),ij}[X]:=\inner{X^i}{X^j}^{(\la)}_{\rhoB}$ and $X^i$ are Hermitian. 
Note the choice $\la=\pm1$ reduces to the Bayesian Holevo-type bound. 
Then, we will show below that  
\begin{align*}
\cH^{(\la=\pm1)}(\Wsf)&\geq\cH^{(\la)}(\Wsf),\\
\cH^{(\la)}(\Wsf)&\geq \clamBLD(\Wsf),
\end{align*}
for any $\la\in[-1,1]$. 
The first inequality is straightforward, since the imaginary part of $\Zsf^{(\la)}$ is proportional to $\lambda$ and $|\la|\leq1$. 
That is $\Tr|\Wsf^{1/2}\Imag \Zsf^{(\la)}[X]\Wsf^{1/2}|=|\la| \Tr|\Wsf^{1/2}\Imag \Zsf^{(\la=1)}[X]\Wsf^{1/2}|$. 
Thus, we only need to show the second inequality. 
The essential point here is to rewrite the matrix $\Hsf$ in terms of the quantum posterior-mean operator as
\[
\Hsf^{ij}[X]=\tr[D^i_\Brm X^j]=\inner{E^{(\la),i}_\Brm}{X^j}^{(\la)}_{\rhoB}. 
\] 
This allows us to complete the square for the matrix inequality of Eq.~\eqref{def:BHbound} as 
\begin{align*}
&\Zsf^{(\la),ij}[X]-\Hsf^{ij}[X]-\Hsf^{ji}[X]+\Msf^{ij}\\
&=\inner{X^i}{X^j}^{(\la)}_{\rhoB}-\inner{E^{(\la),i}_\Brm}{X^j}^{(\la)}_{\rhoB}-\inner{X^i}{E^{(\la),j}_\Brm}^{(\la)}_{\rhoB}\\
&=\inner{X^i-E^{(\la),i}_\Brm}{X^j-E^{(\la),j}_\Brm}^{(\la)}_{\rhoB}-\inner{E^{f,i}_{\Brm}}{E^{f,j}_{\Brm}}^f_{\rho_\Brm}+\Msf^{ij}. \\
&=\inner{X^i-E^{(\la),i}_\Brm}{X^j-E^{(\la),j}_\Brm}^{(\la)}_{\rhoB}+\Msf^{ij}-\Ksf^{(\la),ij}_\Brm. 
\end{align*}
In general, $E^{(\la),j}_\Brm$ are not Hermitian and there does not exist optimal $X$ to make the first matrix zero. 
Nevertheless, we can extend the feasible set of the bound $\cH^{(\la)}(\Wsf)$ to any matrix and this gives a lower bound. 
\begin{align*}
&\cH^{(\la)}(\Wsf)
\geq \min_{X=(X^1,\ldots,X^n)}\Tr[\Wsf (\Zsf^{(\la)}[X]-\Hsf[X]-\Hsf[X]^\top)]\\
&\hspace{2cm}+\Tr|\Wsf^{1/2}\Imag \Zsf^{(\la)}[X]\Wsf^{1/2}|\\
&=\min_{\Vsf,X=(X^1,\ldots,X^n)} \{\Tr[\Wsf \Vsf]\,|\, \Vsf:\mathrm{r.s.},\Vsf \geq \Zsf^{(\la)}[X]\\
&\hspace{4.5cm}-\Hsf[X]-\Hsf[X]^\top+\Msf\}\\
&=\min_{\Vsf:\mathrm{r.s.}}\{\Tr[\Wsf \Vsf]\,|\, \Vsf \geq \Msf-\Ksf^{(\la)}_\Brm\}\\
&=\clamBLD(\Wsf). 
\end{align*}
\end{proof}

\section{$\lambda$-posterior variance bound in General qubit form}\label{sec:appC}
In this appendix, we give the calculation of classification for $\lambda$-posterior variance bound optimal for the general qubit model.  
Note that the expression $\Ksf^{(\la)}$, Eq.~\eqref{eq:qubitK}, is directly solvable, since it is a $2\times2$ matrix space. Theorem \ref{thm:Clam} then follows immediately. 
Let us analyze the quantum posterior variance bound $\Ccal_\mathrm{B}^{(\la)}(\Wsf)=:f(\la)$ as a function of $\lambda$. 
Due to symmetry, we restrict $\la\geq0$ case only. 

First of all, $\Ccal_\mathrm{B}^{(\la)}(\Wsf)$ is $\lambda$ independent if and only if $\Ebb_\pi[\theta^i]=0$ for all $i$, which is equivalent to the condition $s_\Brm=0$. 
This is shown as follows. 
$\lambda$ independence is possible if and only if the matrix $\Ksf^{(\la)}$ does not depend on $\lambda$. 
From the expression of one line above of Eq.~\eqref{eq:qubitK}, we have the condition,
\begin{align}
& \frac{\partial}{\partial \lambda} \Csf_\mathrm{B}\left[ I - \ketbra{s_\mathrm{B}}{s_\mathrm{B}} +i \la \Fsf_\mathrm{B} \right]^\inv \Csf_\mathrm{B}^\top=0\\
\Leftrightarrow & 
\frac{\partial}{\partial \lambda} \left( I - \ketbra{s_\mathrm{B}}{s_\mathrm{B}} +i \la \Fsf_\mathrm{B} \right)=0. 
\end{align}
This is easily solved as $s_\mathrm{B}=0$. This proves the first statement of Theorem \ref{thm:qubit_formula}. 

Next, we shall consider the case $s_\mathrm{B}\neq0$. \\
\textit{SLD-type optimal}: By taking the derivative with respect to $\lambda \geq 0$, we have 
\begin{equation*}
\frac{\del f(\la)}{\del \la}=\frac{|s_\Brm|^2}{(1-\la^2|s_\Brm|^2)^2}\left[a_F \la^2-2 \Tr{[P_\mathrm{B}^\bot\Csf_\mathrm{B}^\top\Wsf \Csf_\mathrm{B}]}\la+\frac{a_F}{|s_\Brm|^2} \right], 
\end{equation*}
where $a_F:=\Tr \big| \Wsf^{\frac12}\Csf_\mathrm{B} \Fsf_\Brm \Csf_\mathrm{B}^\top\Wsf^{\frac12}  \big| $. 
We observe that if $a_F$ is zero, $f(\la)$ is monotonically decreasing and hence the maximum occurs at $\la_*=0$. 
This is the first case of Theorem \ref{thm:qubit_formula}. 

To proceed, we consider the case $a_F\neq0$. 
When $a_F\neq0$, the behavior of the function $F(\la)$ is determined by 
the quadratic function $g(\la):=\la^2 -2 \Delta_\Brm (\Wsf)\la+\frac{1}{|s_\Brm|^2}$. 
The roots of $g(\la)=0$ are shown to be real and positive. 
\[
\la_\pm:=\Delta_\Brm (\Wsf)\pm\sqrt{\Delta_\Brm (\Wsf)^2-\frac{1}{|s_\Brm|^2}}. 
\]
It is sufficient to show $\Delta_\Brm (\Wsf)-\frac{1}{|s_\Brm|}\geq0$ (see the last part of this Appendix.).\\ 
\textit{RLD-type optimal}: The choice $\la=1$ is optimal if $f(\la)$ is monotonically increasing. 
This is possible if and only if $\la_-\geq1$, which is equivalent to 
\begin{equation*}
\la_- \geq 1 \iff \Delta_\Brm (\Wsf)\leq\frac12\left(1+\frac{1}{|s_\mathrm{B}|^2}\right). 
\end{equation*}

Finally the generic case of the optimal choice occurs when $\la_-< 1$. 
To show this, we note that the function $f(\la)$ is increasing from $\la\in[0,\la_-]$ and then decreasing up to $\la=1$. 
This is because $\la_+\geq \Delta_\Brm (\Wsf)\geq \frac{1}{|s_\Brm|}>1$. 
We then conclude that the maximum occurs at $\la=\la_-$. 

\textit{Proof for} $\Delta_\Brm (\Wsf)-\frac{1}{|s_\Brm|}\geq0$: \\
To prove this inequality, we consider an Hermitian matrix: 
\[
\Zsf:=I-\ket{\bar{s}_\Brm}\bra{\bar{s}_\Brm}+i \Fsf_\Brm/|s_\Brm|=P^\bot_\Brm+i\Fsf_\Brm/|s_\Brm|, 
\]
where $\bar{s}_\Brm={s}_\Brm/|s_\Brm|$ is the normalized vector. 
Note that the matrix $\Zsf$ is positive semidefinite, since its eigenvalues are 
all non-negative; $0,0,2$ (rank 1). 
Then, we have the relation $\Tr[\Real \Zsf]\geq \Tr|\Imag \Zsf|$. 
Similarly, we extend it to 
\[
\Zsf(\Wsf):=  \Wsf^{\frac12}\Csf_\mathrm{B} \Zsf \Csf_\mathrm{B}^\top\Wsf^{\frac12}, 
\]
and since \(\Zsf (\Wsf) \geq 0 \), we obtain the inequality $\Tr[\Real \Zsf(\Wsf)]\geq \Tr|\Imag \Zsf(\Wsf)|$. 
The rest is to rewrite it as follows. 
\begin{align*}
&\Tr[\Real \Zsf(\Wsf)]\geq \Tr|\Imag \Zsf(\Wsf)| \\
&\Leftrightarrow \Tr[\Wsf^{\frac12}\Csf_\mathrm{B} P^\bot_\Brm \Csf_\mathrm{B}^\top\Wsf^{\frac12}]\geq 
\frac{1}{|s_\Brm|}\Tr \big| \Wsf^{\frac12}\Csf_\mathrm{B} \Fsf_\Brm \Csf_\mathrm{B}^\top\Wsf^{\frac12}  \big|\\
&\Leftrightarrow\Delta_\Brm (\Wsf)=\frac{\Tr[\Wsf^{\frac12}\Csf_\mathrm{B} P^\bot_\Brm \Csf_\mathrm{B}^\top\Wsf^{\frac12}]}{\Tr \big| \Wsf^{\frac12}\Csf_\mathrm{B} \Fsf_\Brm \Csf_\mathrm{B}^\top\Wsf^{\frac12}  \big| }\geq \frac{1}{|s_\Brm|}.
\end{align*}

\section{Calculations for examples}\label{sec:App_Exs}
\subsection{Model A}\label{App:modelA}

We consider the two-parameter qubit model
\[
  \rho_{\ta}=\frac12\Bigl(I+\ta^1\sigma_1+\ta^2\sigma_2+\ep\sigma_3\Bigr),
  \qquad
  \Theta=\{\ta\in\mathbb{R}^2\mid |\ta|^2\le 1-\ep^2\}.
\]
The prior is i.i.d. with mean and covariance
\[
  \ket{\mu}=\begin{pmatrix}\mu\\ \mu\end{pmatrix},
  \qquad
  \Csf_\pi=\begin{pmatrix}v&0\\0&v\end{pmatrix},
  \qquad (v>0).
\]

\paragraph{Bloch vectors and Bayesian averages.}
The Bloch vector is
\[
  s_{\ta}=\begin{pmatrix}\ta^1\\ \ta^2\\ \ep\end{pmatrix}.
\]
Hence the Bayesian mean Bloch vector is
\[
  \ket{s_\mathrm{B}}
  :=\int_\Theta d\ta\,\pi(\ta)\, s_{\ta}
  =\begin{pmatrix}\mu\\ \mu\\ \ep\end{pmatrix},
  \qquad
  \|s_\mathrm{B}\|^2=2\mu^2+\ep^2.
\]
Throughout this appendix we assume the physical condition $\|s_\mathrm{B}\|^2<1$.

\paragraph{The matrix $\Csf_\mathrm{B}$.}
By definition,
\[
  \Csf_\mathrm{B}^{ij}
  :=\int_\Theta d\ta\,\pi(\ta)\,\ta^i s_{\ta}^j-\mu^i s_\mathrm{B}^j,
  \qquad (i=1,2;\ j=1,2,3),
\]
and using $\mathbb{E}[\ta^i]=\mu$ and $\mathrm{Cov}(\ta^i,\ta^k)=v\,\delta_{ik}$, we obtain
\[
  \Csf_\mathrm{B}
  =\begin{pmatrix}
    v&0&0\\
    0&v&0
  \end{pmatrix}.
\]

\paragraph{The matrix $\Fsf_\mathrm{B}$.}
With $\ket{s_\mathrm{B}}=(s_\mathrm{B}^1,s_\mathrm{B}^2,s_\mathrm{B}^3)^\top=(\mu,\mu,\ep)^\top$,
\[
  \Fsf_\mathrm{B}
  =\begin{pmatrix}
    0 & -s_{\mathrm{B}}^{3} & s_{\mathrm{B}}^{2} \\
    s_{\mathrm{B}}^{3} & 0 & -s_{\mathrm{B}}^{1} \\
    -s_{\mathrm{B}}^{2} & s_{\mathrm{B}}^{1} & 0 
  \end{pmatrix}
  =
  \begin{pmatrix}
    0 & -\ep & \mu\\
    \ep & 0 & -\mu\\
    -\mu & \mu & 0
  \end{pmatrix}.
\]

\paragraph{The key antisymmetric term in Theorem~\ref{thm:qubit_formula}.}
For $\Wsf=I_2$,
\[
  \Csf_\mathrm{B}\Fsf_\mathrm{B}\Csf_\mathrm{B}^\top
  =
  v^2\begin{pmatrix}0&-\ep\\ \ep&0\end{pmatrix},
  \qquad
  \Tr\Bigl|\Csf_\mathrm{B}\Fsf_\mathrm{B}\Csf_\mathrm{B}^\top\Bigr|
  =2v^2|\ep|.
\]
Therefore the first case of Theorem~\ref{thm:qubit_formula} occurs iff $\ep=0$ (for $v>0$).

\paragraph{The quantity $\Delta_\mathrm{B}(\Wsf)$ and the phase boundary.}
Let
\[
  P_\perp:=I_3-\frac{\ket{s_\mathrm{B}}\bra{s_\mathrm{B}}}{\|s_\mathrm{B}\|^2}.
\]
For $\Wsf=I_2$ we have
\[
  \Csf_\mathrm{B}^\top\Wsf\,\Csf_\mathrm{B}
  =\Csf_\mathrm{B}^\top\Csf_\mathrm{B}
  =\mathrm{diag}(v^2,v^2,0),
\]
and
\[
  \Tr\!\left[P_\perp\,\Csf_\mathrm{B}^\top\Csf_\mathrm{B}\right]
  =2v^2\,\frac{\mu^2+\ep^2}{2\mu^2+\ep^2}.
\]
Combining with $\Tr|\Csf_\mathrm{B}\Fsf_\mathrm{B}\Csf_\mathrm{B}^\top|=2v^2|\ep|$, for $\ep>0$ we obtain the closed form
\[
  \Delta_\mathrm{B}(I_2)
  =\frac{\Tr\!\left[P_\perp\,\Csf_\mathrm{B}^\top\Csf_\mathrm{B}\right]}
         {\Tr\bigl|\Csf_\mathrm{B}\Fsf_\mathrm{B}\Csf_\mathrm{B}^\top\bigr|}
  =\frac{\mu^2+\ep^2}{\ep(2\mu^2+\ep^2)}.
\]
Moreover, the condition in Theorem~\ref{thm:qubit_formula}
\[
  \Delta_\mathrm{B}(I_2)\le \frac12\Bigl(1+\frac{1}{\|s_\mathrm{B}\|^2}\Bigr)
\]
is equivalent (for $\ep>0$) to the simple inequality
\[
  2\mu^2 \le \ep(1-\ep).
\]
This is the red/blue phase boundary used in Fig.~\ref{fig:max}.

\paragraph{The optimal interior value $\la_-$.}
In the remaining region $\ep(1-\ep)<2\mu^2<1-\ep^2$ (with $\ep>0$), Theorem~\ref{thm:qubit_formula} yields
\[
  \la_-=\Delta_\mathrm{B}(I_2)-\sqrt{\Delta_\mathrm{B}(I_2)^2-\frac{1}{\|s_\mathrm{B}\|^2}}
  =\frac{\ep}{2\mu^2+\ep^2}.
\]

\paragraph{Explicit $K_\mathrm{B}^{(\la)}$ for Model A.}
Restricting the qubit $\la$-family to the two parameters $(\ta^1,\ta^2)$ and evaluating at the Bayesian mean state
$\bar\rho=\frac12(I+s_\mathrm{B}\cdot\sigma)$, we obtain
\[
  K_\mathrm{B}^{(\la)}
  =
  \begin{pmatrix}
    1+\dfrac{\mu^2}{1-\|s_\mathrm{B}\|^2} &
    \dfrac{\mu^2-i\la\,\ep}{1-\|s_\mathrm{B}\|^2} \\
    \dfrac{\mu^2+i\la\,\ep}{1-\|s_\mathrm{B}\|^2} &
    1+\dfrac{\mu^2}{1-\|s_\mathrm{B}\|^2}
  \end{pmatrix},
  \qquad
  \|s_\mathrm{B}\|^2=2\mu^2+\ep^2.
\]
(Equivalently, $\Real K_\mathrm{B}^{(\la)}$ depends only on $\mu$, and the $\la$-dependence appears only through
$\Imag K_\mathrm{B}^{(\la)}\propto \la\,\ep\begin{pmatrix}0&-1\\1&0\end{pmatrix}$.)
Substituting these ingredients into the general expression of $\Ccal_\mathrm{B}^{(\la)}(\Wsf)$ gives the closed forms
reported in Model~A in the main text.

\subsection{Model B}\label{App:modelB}

We consider the two-parameter unitary model
\[
  \rho_\ta=\frac12\Bigl(I+\ta^1(\cos\ta^2\,\sigma_1+\sin\ta^2\,\sigma_2)\Bigr),
  \quad \ta^1\in(0,1),\ \ta^2\in[0,2\pi).
\]
The Bloch vector is
\[
  s_\ta=\begin{pmatrix}\ta^1\cos\ta^2\\ \ta^1\sin\ta^2\\ 0\end{pmatrix}.
\]
Assume an independent prior $\pi(\ta)=\pi_1(\ta^1)\pi_2(\ta^2)$, and denote
\[
  \mu_1=\mathbb{E}[\ta^1],\ \ v_1=\mathrm{Var}(\ta^1),\ \ 
  c=\mathbb{E}[\cos\ta^2],\ \ s=\mathbb{E}[\sin\ta^2],
\]
and (if needed)
\[
  \mu_2=\mathbb{E}[\ta^2],\ \ a_c=\mathbb{E}[\ta^2\cos\ta^2],\ \ a_s=\mathbb{E}[\ta^2\sin\ta^2].
\]
Then the Bayesian mean Bloch vector is
\[
  \ket{s_\mathrm{B}}=\mathbb{E}[s_\ta]=(\mu_1 c,\ \mu_1 s,\ 0)^\top.
\]
Moreover, the matrix $\Csf_\mathrm{B}\in\mathbb{R}^{2\times 3}$ defined by
$\Csf_\mathrm{B}^{ij}=\mathbb{E}[\ta^i s_\ta^j]-\mu^i s_\mathrm{B}^j$
is explicitly given by
\[
  \Csf_\mathrm{B}=
  \begin{pmatrix}
    v_1 c & v_1 s & 0\\
    \mu_1(a_c-\mu_2 c) & \mu_1(a_s-\mu_2 s) & 0
  \end{pmatrix},
\]
whose third column is identically zero. The antisymmetric matrix $\Fsf_\mathrm{B}$ defined from $\ket{s_\mathrm{B}}$ becomes
\[
  \Fsf_\mathrm{B}=
  \begin{pmatrix}
    0&0&\mu_1 s\\
    0&0&-\mu_1 c\\
    -\mu_1 s&\mu_1 c&0
  \end{pmatrix}.
\]
A direct multiplication shows
\[
  \Csf_\mathrm{B}\Fsf_\mathrm{B}\Csf_\mathrm{B}^\top=\mathbf{0},
  \qquad
  \Tr\big|\Wsf^{1/2}\Csf_\mathrm{B}\Fsf_\mathrm{B}\Csf_\mathrm{B}^\top\Wsf^{1/2}\big|=0.
\]
Therefore, by Theorem~\ref{thm:qubit_formula}, the maximum posterior variance bound always coincides with the Bayesian SLD posterior variance bound:
\[
  \Ccal_{\mathrm{B}}^{\max}(\Wsf)=\Ccal_{\mathrm{B}}^{(\la=0)}(\Wsf).
\]

\subsection{Appendix: Calculation details for Model C}\label{App:modelC}

We consider the three-parameter qubit model
\[
  \rho_\ta=\half\bigl(I+s_\ta\cdot\sigma\bigr),
  \qquad
  s_\ta=\begin{pmatrix}\ta^1\\ \ta^2\\ \ta^3\end{pmatrix},
  \qquad |\ta|<1,
\]
together with an independent i.i.d. prior
\[
  \pi(\ta)=\pi_1(\ta^1)\pi_2(\ta^2)\pi_3(\ta^3),
  \qquad \pi_1=\pi_2=\pi_3.
\]
Then the mean vector and covariance matrix of $\ta$ are
\[
  \ket{\mu}:=\int_\Theta d\ta\,\pi(\ta)\,\ta,\qquad
  \Csf_\pi:=\int_\Theta d\ta\,\pi(\ta)\,(\ta-\mu)(\ta-\mu)^\top=v\,\Isf_3,\quad v>0.
\]
Since $s_\ta=\ta$, the Bayesian mean Bloch vector is
\[
  \ket{s_\mathrm{B}}:=\int_\Theta d\ta\,\pi(\ta)\,s_\ta=\ket{\mu},
  \qquad \|s_\mathrm{B}\|<1,
\]
and the matrix $\Csf_\mathrm{B}$ defined by
\[
  \Csf_\mathrm{B}^{ij}:=\int_\Theta d\ta\,\pi(\ta)\,\ta^i s_\ta^j-\mu^i s_\mathrm{B}^j
\]
reduces to the covariance:
\[
  \Csf_\mathrm{B}=\Csf_\pi=v\,\Isf_3.
\]

\paragraph{The antisymmetric matrix $\Fsf_\mathrm{B}$.}
From $\ket{s_\mathrm{B}}=(s_\mathrm{B}^1,s_\mathrm{B}^2,s_\mathrm{B}^3)^\top$, we define
\[
  \Fsf_\mathrm{B}=
  \begin{pmatrix}
    0 & -s_{\mathrm{B}}^{3} & s_{\mathrm{B}}^{2} \\
    s_{\mathrm{B}}^{3} & 0 & -s_{\mathrm{B}}^{1} \\
    -s_{\mathrm{B}}^{2} & s_{\mathrm{B}}^{1} & 0 
  \end{pmatrix}.
\]
It satisfies $\Fsf_\mathrm{B}^\top=-\Fsf_\mathrm{B}$ and $\Fsf_\mathrm{B}\ket{s_\mathrm{B}}=\mathbf{0}$.
Moreover, $\Fsf_\mathrm{B}$ has singular values $\{\|s_\mathrm{B}\|,\|s_\mathrm{B}\|,0\}$, hence
\begin{equation}
  \Tr|\Fsf_\mathrm{B}|=2\|s_\mathrm{B}\|.
  \label{eq:AppC_trF}
\end{equation}

\paragraph{The key trace-norm term in Theorem~\ref{thm:qubit_formula}.}
We choose the isotropic weight matrix $\Wsf=\Isf_3$ (equivalently, any $\Wsf\propto\Isf_3$ yields the same phase classification). Then
\[
  \Wsf^{\frac12}\Csf_\mathrm{B}\Fsf_\mathrm{B}\Csf_\mathrm{B}^\top\Wsf^{\frac12}
  =\Csf_\mathrm{B}\Fsf_\mathrm{B}\Csf_\mathrm{B}^\top
  =v^2\Fsf_\mathrm{B},
\]
and by \eqref{eq:AppC_trF},
\begin{equation}
  \Tr\Bigl|\Wsf^{\frac12}\Csf_\mathrm{B}\Fsf_\mathrm{B}\Csf_\mathrm{B}^\top\Wsf^{\frac12}\Bigr|
  =v^2\Tr|\Fsf_\mathrm{B}|
  =2v^2\|s_\mathrm{B}\|.
  \label{eq:AppC_trabs}
\end{equation}
In particular, the first case of Theorem~\ref{thm:qubit_formula} occurs only at $s_\mathrm{B}=0$.

\paragraph{Evaluation of $\Delta_\mathrm{B}(\Wsf)$.}
Let
\[
  P_\perp:=\Isf_3-\frac{\ket{s_\mathrm{B}}\bra{s_\mathrm{B}}}{\|s_\mathrm{B}\|^2}
  \qquad (\|s_\mathrm{B}\|>0).
\]
Since $\Csf_\mathrm{B}^\top\Wsf\,\Csf_\mathrm{B}=v^2\Isf_3$, we have
\[
  \Tr\!\left[P_\perp\,\Csf_\mathrm{B}^\top\Wsf\,\Csf_\mathrm{B}\right]
  =v^2\Tr[P_\perp]
  =2v^2.
\]
Combining with \eqref{eq:AppC_trabs} gives
\begin{equation}
  \Delta_\mathrm{B}(\Isf_3)
  =\frac{\Tr\!\left[P_\perp\,\Csf_\mathrm{B}^\top\Wsf\,\Csf_\mathrm{B}\right]}
         {\Tr\Bigl|\Wsf^{\frac12}\Csf_\mathrm{B}\Fsf_\mathrm{B}\Csf_\mathrm{B}^\top\Wsf^{\frac12}\Bigr|}
  =\frac{1}{\|s_\mathrm{B}\|},
  \qquad (\|s_\mathrm{B}\|>0).
  \label{eq:AppC_Delta}
\end{equation}
Hence the condition in Theorem~\ref{thm:qubit_formula},
\[
  \Delta_\mathrm{B}(\Wsf)\le \frac12\left(1+\frac{1}{\|s_\mathrm{B}\|^2}\right),
\]
is always satisfied for $\Delta_\mathrm{B}(\Isf_3)=1/\|s_\mathrm{B}\|$, since it is equivalent to
\[
  \left(1-\frac{1}{\|s_\mathrm{B}\|}\right)^2\ge 0.
\]
Therefore, for $\|s_\mathrm{B}\|>0$ we are always in the second case of Theorem~\ref{thm:qubit_formula}, and the optimal choice is $\la=\pm1$.
For $s_\mathrm{B}=0$, the bound is $\la$-independent by Theorem~\ref{thm:qubit_formula}.

\paragraph{Explicit form of $\Ksf_\mathrm{B}^{(\la)}$ at $\la=1$.}
In this isotropic setting, the matrix $\Ksf_\mathrm{B}^{(\la)}$ appearing in the Bayesian $\lambda$-posterior variance bound admits the following expression at $\la=1$:
\begin{equation}
  \Ksf_\mathrm{B}^{(\la=1)}
  =\ketbra{\mu}{\mu}+\frac{v^2}{1-\|s_\mathrm{B}\|^2}\bigl(\Isf_3-i\Fsf_\mathrm{B}\bigr),
\end{equation}
where $\ket{\mu}=\ket{s_\mathrm{B}}$ in the present model.
Substituting this into the general formula for $\Ccal_\mathrm{B}^{(\la)}(\Wsf)$ yields the closed form
$\Ccal_\mathrm{B}^{\max}=\Ccal_\mathrm{B}^{(\la=\pm1)}$ in the main text.

\end{document}